\newcommand{\ep}{\frac{\xi}{6}}
\newcommand{\epp}[1]{\frac{\xi}{#1}}
\newcommand{\epq}[2]{\frac{#1\xi}{#2}}
\newcommand{\eps}[1]{\frac{\xi^2}{#1}}
\newtheorem{theorem}{Theorem}
\newtheorem{lemma}[theorem]{Lemma}
\newtheorem{proposition}[theorem]{Proposition}
\newtheorem{definition}[theorem]{Definition}
\title{On Analysis of the Bitcoin and Prism Backbone Protocols}
\author{Jing Li and Dongning Guo\\
Northwestern University\\
jingli2015@u.northwestern.edu, dGuo@Northwestern.edu\\
\today}
\begin{document}
\maketitle
\begin{abstract}
Bitcoin is a peer-to-peer payment system proposed by Nakamoto in 2008.  Properties of the bitcoin backbone protocol have been investigated in some depth: the blockchain growth property quantifies the number of blocks added to the blockchain during any time intervals; the blockchain quality property ensures the honest miners always contribute at least a certain fraction of the blockchain; the common prefix property ensures if a block is deep enough, it will eventually be adopted by all honest miners with high probability.
Following the spirit of decoupling various functionalities of the blockchain, Bagaria, Kannan, Tse, Fanti, and Viswanath (2018) proposed the Prism protocol to dramatically improve the transaction rate  while maintaining the same level of security.
Most prior analyses of the bitcoin and the Prism backbone protocols provide performance guarantees up until a finite number of rounds (equivalent to finite lifespan) and that all miners have identical information by the end of each round (referred to as the synchronous model).   This paper presents a streamlined and strengthened analysis without the finite lifespan assumption.  Also, both the synchronous model and a more general model with arbitrary but bounded block propagation delays are studied.
The results include a blockchain growth property, a blockchain quality property, and a common prefix property of the bitcoin backbone protocol, as well as the liveness and persistence of the Prism backbone protocol.
An explicit probabilistic guarantee is provided for every transaction found in an honest blockchain to become permanent in the final ledger.
The properties of the bitcoin and the Prism backbone protocols are given as explicit expressions rather than order optimal results, which lead to improved references for public transaction ledger protocol design.
\end{abstract}

\section{Introduction}
\subsection{The bitcoin backbone protocol}
Bitcoin is an electronic payment system introduced by Nakamoto\cite{nakamoto2008bitcoin} in 2008. The system is built on a distributed ledge technology commonly referred to as blockchain. Miners are distributed parties who generate blocks and maintain their own version of the blockchain.
A blockchain is a finite sequence of blocks adopted by some miner at some point in time. It begins with a genesis block, and every subsequent block contains a cryptographic hashing of the previous block.
In order to generate a valid new block, a miner need to find a nonce whose hash values {\color{black}satisfies a difficulty requirement}. The process of finding such a nonce is called mining.
An honest miner follows the honest chain rule, i.e., it always adopts the longest blockchain it heard about and mines on top of the longest blockchain.
Since all miners work simultaneously, it is possible that two {\color{black}or more} different blocks are mined and announced at around the same time. Then {\color{black}different honest miners may extend different blockchains depending on which longest one they hear first}. This phenomenon is called forking. Forking of a blockchain challenges network consensus and presents opportunities for double spending attack, namely, a transaction included in the longest fork is not included in a different fork that overtakes the first fork to become the longest one.

 Nakamoto\cite{nakamoto2008bitcoin} characterized the race between the honest {\color{black}miners and an adversary with less than half of the total mining power} as a random walk with a drift. {\color{black}Nakamoto} showed that the probability the adversary blockchain overtakes the honest miner's consensus blockchain vanishes exponentially {\color{black}over time}. Nakamoto argued that the bitcoin protocol is safe under double spending attack {\color{black}as long as one considers a transaction confirmed only after enough new blocks are mined to extend the honest blockchain}.
{\color{black}An in-depth} analysis of the bitcoin protocol was given in \cite{narayanan2016bitcoin}.
{\color{black}Several important properties of the bitcoin backbone protocol have been proposed} in \cite{nakamoto2008bitcoin,garay2015bitcoin, kiayias2015speed,sapirshtein2016optimal, garay2017bitcoin}.
Garay, Kiayias, and Leonardos\cite{garay2015bitcoin} gave a formal description and analysis of the bitcoin backbone protocol assuming a fully synchronous network, namely, mining takes place in rounds and at the end of each round, all miners see all published blocks. Under this model, \cite{garay2015bitcoin} introduced a common prefix property and a blockchain  quality property. The common prefix property states if a block is $k$ blocks deep in an honest miner's blockchain, then the probability that the block is not included by all other honest miners' blockchain decreases exponentially with $k$. The blockchain  quality property states the honest miners always contribute at least a certain percentage of the blockchain regardless of the strategy of adversarial parties.
Then, \cite{kiayias2015speed} introduced a blockchain growth property, which quantifies the number of blocks added to the blockchain during any time intervals.

Moreover, Nakamoto's analysis was improved in \cite{sapirshtein2016optimal} to address selfish mining. In this case, selfish miners can introduce disagreement between  honest miners and split their hashing power.
Selfish miners thus enhance their relative hashing power {\color{black}to win disproportionate rewards}.
This strategy, however, is not designed for double spending purposes.

The bitcoin backbone protocol gives birth to numerous ``robust public transaction ledger'' protocols\cite{eyal2016bitcoin, wood2014ethereum, bagaria2018deconstructing}. The preceding properties guarantee two fundamental properties of a robust public transaction ledger: liveness and persistence. Due to the  blockchain  growth property and the  blockchain  quality property, blocks originating from honest miners will eventually end up at a level of more than $k$ blocks of an honest miner's blockchain. Due to the common prefix property, an honest miner's $k$-deep block remains permanent.



The bitcoin backbone protocol can also be leveraged to solve other problems. For example, the bitcoin backbone protocol ensures some basic properties for some randomized Byzantine agreement protocols\cite{gramoli2017blockchain, miller2014anonymous, decker2016bitcoin, lamport1982byzantine, feldman1988optimal}.


\subsection{The Prism protocol}
The throughput of bitcoin is limited {\color{black}by design  to ensure security}\cite{decker2013information}. As mining rate increases, blocks are more likely to be mined and announced simultaneously, i.e., forking is more likely to occur. Due to the longest  blockchain  rule, only the blocks on the longest blockchain will eventually be adopted by honest miners, and other honest blocks are wasted. Then the adversarial miners {\color{black}compete with fewer honest miners. To avoid forking, the average time interval between new blocks is set to be much longer than the latency for propagating a block to most miners in the network}\cite{sompolinsky2015secure}.

{\color{black}Many ideas have been} proposed to improve the blockchain throughput {\color{black}while} maintaining its security. One way is to deal with high-forking blockchains by optimizing the forking rule. For example, GHOST chooses the main blockchain according to the heaviest tree rule instead of the longest  blockchain  rule\cite{sompolinsky2015secure}. Inclusive, Spectre, and Phantom construct a directed acyclic graph (DAG) structured blockchain by introducing reference links between blocks in addition to the parent links\cite{lewenberg2015inclusive, sompolinsky2016spectre, sompolinsky2018phantom}. However, these protocols are vulnerable to certain attacks\cite{natoli2016balance, li2018scaling,zheng2016blockchain}. Generally speaking it is very challenging to make high-forking protocols secure.

Another line of work is to decouple the various functionalities of the blockchain.
For example, BitcoinNG divides the bitcoin blockchain's operations into leader selection and transaction serialization\cite{eyal2016bitcoin}. In BitcoinNG, time is divided into epochs. During each epoch, a leader is chosen to order the transaction blocks of that epoch. However, this protocol is vulnerable to bribery or targeted attacks to leaders. In Fruitchain, transactions (fruits) are also decoupled from proposer blocks. However, fruitchain focuses on  enhancing  fairness instead of improving throughput\cite{pass2017fruitchains}.

Following the spirit of decoupling blocks' functionalities,
Bagaria, Kannan, Tse, Fanti, and Viswanath\cite{bagaria2018deconstructing} proposed the Prism protocol, which is a structured-DAG blockchain with one proposer blockchain and many voter blockchains. The voter blocks elect a leader block at each level of the proposer blockchain by voting. The sequence of leader blocks concludes the contents of all voter blocks, and finalizes the ledger. Each voter blockchain mines independently at a low mining rate. A voter blockchain follows the bitcoin protocol to provide security to leader election process.

With this design, the throughput (containing the content of {\em all} voter blocks) is decoupled from the mining rate of each voter blockchain. Slow mining rate guarantees the security of each voter blockchain as well as the proposer blockchain.
Prism achieves security against up to 50\% adversarial hashing power, optimal throughput up to the capacity of the network, and fast confirmation latency for honest transactions. A thorough description and analysis is shown in \cite{bagaria2018deconstructing}.

\subsection{Our results}
Previous analysis on the backbone of bitcoin and Prism assumes a blockchain's lifespan is finite, i.e., there exists a maximum round when the blockchain ends. For example, in \cite{garay2015bitcoin,garay2017bitcoin} and \cite{bagaria2018deconstructing},  the good properties of blockchain hold only under typical events, i.e., the number of honest and adversarial blocks mined  must not deviate too much from their expected value over all long enough time intervals. The probability of typical events was shown to depend on the blockchain's maximum round parameter. Indeed,
the probability of the  blockchain  growth property, the  blockchain  quality property, and the common prefix property are all expressed implicitly in terms of the blockchain's maximum round.

In this paper, we drop the finite horizon assumption and prove strong  properties of the bitcoin backbone protocol. We define the typical events with respect to each interval: instead of requiring the number of honest and adversarial blocks to be typical over all long enough time intervals, we only require them to be typical over all time intervals that contain a certain interval that includes the transaction of interest. Since the probability that the number of honest and adversarial blocks are ``atypical'' decreases exponentially with interval length,
the sum of the probabilities over all those intervals remains vanishingly small. Thus we provide performance guarantees that are truly {\em permanent} whether or not the blockchain have a finite lifespan.
Moreover, without the finite horizon assumption, we express the properties of the bitcoin backbone protocol in explicit  expressions in lieu of order optimality results in some previous analysis. The explicit expressions provide tighter bounds and more practical references to public transaction ledger protocol design.


In \cite{bagaria2018deconstructing}, liveness and consistency properties of the Prism protocol were proved assuming a finite life span of the blockchains\cite{bagaria2018deconstructing}.
In this paper, we also prove the liveness and consistency of the Prism protocol without the finite horizon assumption. 

Anothe crucial assumption in~\cite{
garay2015bitcoin,
garay2017bitcoin,
bagaria2018deconstructing
}
is that all blocks broadcast during a protocol round reach all miners by the end of that round, i.e., all miners have complete up-to-date information by the end of each round.  This is referred to as the synchronous model.  In this paper, we generalize the analysis to a much more challenging model in which a block may reach different miners after arbitrary different delays, so that even the honest miners are never guaranteed to have identical view of the system.  It is only assumed that the propagation time is bounded by $T$ rounds, which is realistic in practice.  A key idea in this paper is to exploit honest miners' common information about those rounds in which a single honest block is mined and that no other honest blocks are mined within $T-1$ rounds before and after.  Essentially all the properties developed for the synchronous model find their counterparts for this bounded-delay model.

\section{Model and definitions} \label{sec: model and definitions}
We assume the total number of miners is $n$, among which $t$ miners are adversarial and the remaining miners are honest. Assume all miners have equal hash powers (if not, we assume they can be split into equal-power pieces).
Let
\begin{align}
    \beta = \frac{t}{n}
\end{align}
denote the percentage of adversarial miners. We assume adversarial miners collectively have less than $\frac{1}{2}$ of the total mining power in the blockchain network, so $\beta \in [0,\frac{1}{2})$.

We adopt a discrete model where activities take place in rounds.
If a miner publishes one or more blocks in a round, all miners receive the block(s) at exactly the end of the round (a miner can only react to round $r$ blocks in round $r+1$).
Evidently, by the end of each round, all honest miners are fully synchronized.
If a block is mined by an honest miner, we call it an {\em honest block}; otherwise the block is called an {\em adversarial block}.
We assume that during round $0$, a single honest block, called  the genesis block, is mined and broadcast to all miners. For $r\in \{1,2,\ldots\}$,
let $H[r]$ denote the number of all honest blocks mined during round $r$.
The mining difficulty and miner' mining powers are adjusted to be constant in all rounds $r\ge 1$.

Without loss of generality, the mining power of all miners are and the mining difficulty are assumed to remain constant, such that
the probability that an honest miner mines a new block in every round $r\ge 1$ is equal to $p\in (0,1)$.\footnote{This probability is held constant by adjusting the mining difficulty in case the mining power fluctuate over rounds.}
 Note that $H[r]\sim Binomial(n-t, p)$.  Define
\begin{align}
    X[r] =
    \begin{cases}
    1, \; \; & \text{if}\; H[r] \ge 1\\
    0, & \text{otherwise}.
    \end{cases}
\end{align}
$X[r]$ indicates if one or more honest blocks are mined during round $r$ {\color{black}or not}. Let
\begin{align}
    q = 1 - (1-p)^{n-t}.
\end{align}
Then $X[r]\sim Bernoulli(q)$.
Define
\begin{align}
    Y[r] =
    \begin{cases}
    1, \; \; & \text{if}\; H[r] = 1\\
    0, & \text{otherwise}.
    \end{cases}
\end{align}
Basically $Y[r]$ indicates if a single honest block is mined in round $r$ {\color{black}or not}. Then  $Y[r]\sim Bernoulli((n-t)p(1-p)^{n-t-1})$.  A round $r$ is called a uniquely successful round if $Y[r] = 1$.
Let $Z[r]$ upper bound the number of adversarial blocks mined during round $r$ (the adversarial miners may or may not publish them). Then $Z[r]\sim Binomial(t,p)$.

It is important to note that $H[1], H[2], \ldots$ are independently and identically distributed (i.i.d.), which form a stationary process. The same can be said of the $X$, $Y$, and $Z$ sequences. Define
\begin{align}
\xi & = \frac{1-2\beta}{1-\beta}. \label{def: xi}
\end{align}
Then $\xi \in (0,1]$.

For all integers $s$ and $r$ satisfying $1\le s < r$, let
\begin{align}
    H[s,r] = \sum_{i=s}^{r-1}H[i],
\end{align}
which represents the total number of honest blocks mined during rounds $s,\ldots,r-1$. To be consistent with this notation, we mean all rounds up to and including $r-1$ when we say ``by round $r$''. Likewise, we define
\begin{align}
    X[s,r] &= \sum_{i=s}^{r-1}X[i]\\
    Y[s,r] &= \sum_{i=s}^{r-1}Y[i]\\
    Z[s,r] &= \sum_{i=s}^{r-1}Z[i].
\end{align}

\begin{definition}
By a {\em blockchain} we mean a finite sequence of blocks  adopted by some miner at some point in time which begins with a genesis block and that every subsequent block contains a cryptographic hashing of the previous block. It is assumed that no block can be mined in an earlier round than its immediate predecessor.
\end{definition}

A blockchain's prefix is also a blockchain. A blockchain must have the following properties: 1) Its blocks must be mined in order; 2) it is immutable in the sense that it is computationally impossible  for {\color{black}any} miner to mine a different blockchain that has the same genesis block and the same final block.

\begin{definition}
If a blockchain  is adopted by an honest miner by {\color{black}some} round, it is said to be honest.
\end{definition}

\section{The bitcoin backbone protocol} \label{sec: bitcoin backbone}
In Section \ref{sec: bitcoin backbone} and \ref{sec: prism} , it is assumed that, the mining difficulty is adjusted such that
\begin{align} \label{equ: honest majority}
    q \le \frac{\xi}{6}.
\end{align}
We will make heavy use of Bernoulli's  inequality:
\begin{proposition}
(Bernoulli's inequality) For every integer $k\ge 0$ and real number $x > -1$,\begin{align} \label{equ: bernoulli}
    (1+x)^k \ge 1+kx.
\end{align}
\end{proposition}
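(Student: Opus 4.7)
The plan is to prove Bernoulli's inequality by induction on the integer $k\ge 0$, which is the standard and cleanest route. The hypothesis $x>-1$ is precisely what is needed so that the factor $1+x$ is strictly positive, and this positivity is exactly the ingredient that makes the inductive step go through: multiplying both sides of an inequality by $1+x$ preserves its direction only when $1+x>0$.

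First I would establish the base case $k=0$, where $(1+x)^0=1$ and $1+kx=1$, so the inequality holds trivially with equality. (One could equally well start at $k=1$, where both sides equal $1+x$.) Next, assuming as the inductive hypothesis that $(1+x)^k\ge 1+kx$ for some $k\ge 0$, I would multiply both sides by $1+x>0$ to obtain
\begin{align*}
    (1+x)^{k+1} \ge (1+kx)(1+x) = 1+(k+1)x + kx^2.
\end{align*}
Since $kx^2\ge 0$ for every integer $k\ge 0$ and every real $x$, dropping this nonnegative term yields $(1+x)^{k+1}\ge 1+(k+1)x$, which closes the induction.

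The only subtle point — and the one tiny ``obstacle'' worth flagging — is the role of the assumption $x>-1$: it is used solely to ensure $1+x>0$ so that the inductive multiplication preserves the inequality. Without it, the argument breaks at the inductive step (e.g.\ for $x<-1$ and $k$ even, $(1+x)^k>0$ while $1+kx$ can be arbitrarily negative, which is still fine, but the induction itself no longer produces a valid chain). With that caveat noted, the proof is a two-line induction and requires no further machinery.
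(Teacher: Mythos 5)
Your proof is correct: the induction on $k$ with base case $k=0$, multiplication of the inductive hypothesis by $1+x>0$, and discarding of the nonnegative term $kx^2$ is the standard and complete argument, and you correctly identify that the hypothesis $x>-1$ is used exactly to keep the direction of the inequality when multiplying. The paper itself states this proposition without any proof, treating it as a known fact, so there is no authorial argument to compare against; your two-line induction is precisely what would be supplied if a proof were included.
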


\begin{proposition} \label{prop: X}
For $r = 1,2,\ldots$ ,
\begin{align}
 q \le p(n-t) < \frac{q}{1-q}. \label{0.01}
\end{align}
\end{proposition}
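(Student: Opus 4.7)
The plan is to derive each inequality as a direct application of Bernoulli's inequality (Proposition 2) to a suitably rewritten expression, noting that $p\in(0,1)$ and $n-t\ge 1$ throughout.

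For the left inequality $q\le p(n-t)$, I would substitute $x=-p$ and $k=n-t$ into Bernoulli's inequality, which is valid since $-p>-1$. This yields $(1-p)^{n-t}\ge 1-(n-t)p$, and rearranging gives $q = 1-(1-p)^{n-t}\le (n-t)p$, which is the desired bound.

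For the right inequality $p(n-t)<\frac{q}{1-q}$, I would first rewrite the right-hand side as
\begin{align}
\frac{q}{1-q} = \frac{1-(1-p)^{n-t}}{(1-p)^{n-t}} = (1-p)^{-(n-t)} - 1 = \left(1 + \frac{p}{1-p}\right)^{n-t} - 1.
\end{align}
Then I would apply Bernoulli's inequality with $x = \frac{p}{1-p}>0>-1$ and $k=n-t$ to obtain
\begin{align}
\left(1 + \frac{p}{1-p}\right)^{n-t} \ge 1 + (n-t)\frac{p}{1-p}.
\end{align}
Subtracting $1$ and using the strict inequality $\frac{p}{1-p}>p$, which holds because $p\in(0,1)$, gives $\frac{q}{1-q}\ge (n-t)\frac{p}{1-p}>(n-t)p$, as required.

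There is no real obstacle here; the only mild subtlety is recognizing that one should express $(1-p)^{-(n-t)}$ as $\left(1+\frac{p}{1-p}\right)^{n-t}$ in order to apply Bernoulli a second time with a positive increment $x$, and that the strictness of the right inequality comes from $\frac{p}{1-p}>p$ rather than from Bernoulli itself (which only guarantees $\ge$).
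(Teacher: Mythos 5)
Your proof is correct and follows essentially the same route as the paper: both halves rest on Bernoulli's inequality, and your handling of the right inequality (writing $(1-p)^{-(n-t)}=\bigl(1+\tfrac{p}{1-p}\bigr)^{n-t}$, applying Bernoulli, then using $\tfrac{p}{1-p}>p$) is just a slight reordering of the paper's steps, which instead bound $(1-p)^{-(n-t)}>(1+p)^{n-t}$ first and then apply Bernoulli. The two arguments use the same ingredients and are equally valid.
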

\begin{proof}
As $X[r]\sim Bernoulli(q)$, we have
\begin{align}
    \mathbb{E}[X[r]] & = q \label{equ: E[X]=q} \\
    & = 1-(1-p)^{n-t} \\
    & \le p(n-t), \label{equ: q<p(n-t)}
\end{align}
where \eqref{equ: q<p(n-t)} is due to Bernoulli's inequality. Moreover,
\begin{align} \label{equ: q/1-q}
\frac{q}{1-q} & = \frac{1-(1-p)^{n-t} }{(1-p)^{n-t} } \\
& = (1-p)^{-(n-t)}-1 \\
& > (1+p)^{n-t}-1 \label{0.1}\\
& \ge p(n-t),\label{0.2}
\end{align}
where \eqref{0.1} is due to $(1+p)(1-p)<1$ and \eqref{0.2} is due to Bernoulli's inequality. By \eqref{equ: q<p(n-t)} and \eqref{0.2},
\begin{align}
 q \le p(n-t) < \frac{q}{1-q}.
\end{align}
\end{proof}

\begin{proposition} \label{prop: Y}
For $r = 1,2,\ldots$ ,
\begin{align}
    \mathbb{E}[Y[r]] > q(1-q).
\end{align}
\end{proposition}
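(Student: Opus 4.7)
The plan is to substitute the explicit Bernoulli parameter $\mathbb{E}[Y[r]] = (n-t)p(1-p)^{n-t-1}$ and then reduce the target inequality to something that follows from Proposition~\ref{prop: X} (or directly from Bernoulli's inequality). Concretely, I would start by writing the difference
\begin{align}
\mathbb{E}[Y[r]] - q(1-q) = (n-t)p(1-p)^{n-t-1} - \bigl(1-(1-p)^{n-t}\bigr)(1-p)^{n-t},
\end{align}
and factor out the common positive term $(1-p)^{n-t-1}$. Since $q = 1-(1-p)^{n-t}$, this reduces the claim to showing
\begin{align}
(n-t)p > q(1-p). \label{eq:plan-target}
\end{align}

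The second step is to invoke \eqref{equ: q<p(n-t)} from Proposition~\ref{prop: X}, which gives $q \le p(n-t)$ by Bernoulli's inequality applied to $(1-p)^{n-t}\ge 1-(n-t)p$. From this, $q(1-p) \le p(n-t)(1-p)$. Because $p \in (0,1)$, the factor $1-p$ is strictly less than $1$, so $p(n-t)(1-p) < p(n-t)$, which yields \eqref{eq:plan-target}. Multiplying back by the strictly positive factor $(1-p)^{n-t-1}$ (valid since $n-t\ge 1$ under the honest-majority assumption and $p<1$) gives the desired strict inequality.

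I do not anticipate any real obstacle here; the only point requiring a bit of care is ensuring strictness. The strict inequality comes not from Bernoulli itself (which can be tight when $n-t=1$) but from the factor $1-p<1$ used in the chain $q(1-p) < q \le p(n-t)$, which is valid whenever $p>0$ and $q>0$, both of which hold under the standing assumptions.
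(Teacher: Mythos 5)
Your proof is correct, but it takes a genuinely different route from the paper's. The paper bounds $\mathbb{E}[Y[r]] = p(n-t)(1-p)^{n-t-1}$ from below by applying Bernoulli's inequality to get $p(n-t)(1-p(n-t-1)) > p(n-t)(1-p(n-t))$, and then invokes the monotonicity of $x\mapsto x(1-x)$ on $[0,\tfrac12]$ together with $q < p(n-t) < \tfrac15$ (which itself relies on the standing assumption \eqref{equ: honest majority} via Proposition~\ref{prop: X}) to conclude $p(n-t)(1-p(n-t)) > q(1-q)$. You instead use the exact identity $1-q=(1-p)^{n-t}$ to factor the difference as $(1-p)^{n-t-1}\bigl[(n-t)p - q(1-p)\bigr]$, reducing the claim to $(n-t)p > q(1-p)$, which follows from \eqref{equ: q<p(n-t)} and $1-p<1$ alone. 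Your argument is cleaner in that it avoids the $x(1-x)$ monotonicity step entirely and, notably, does not use the mining-rate assumption $q\le\ep$ at all --- it holds for every $p\in(0,1)$ and $n-t\ge 1$ --- so it is strictly more general than the paper's proof. Your handling of strictness is also sound: $q>0$ and $p>0$ give $q(1-p)<q\le p(n-t)$, and the prefactor $(1-p)^{n-t-1}$ is strictly positive.
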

\begin{proof}
According to Proposition \ref{prop: X}, $q\le \frac{1}{6}$ implies $q < p(n-t)<\frac{1}{5}$. {\color{black}Hence,}
\begin{align} \label{equ: E[Y] > q(1-q)}
    \mathbb{E}[Y[r]] & = p(n-t)(1-p)^{n-t-1} \\
    & \ge p(n-t)(1-p(n-t-1)) \label{he: 1.0} \\
    & > p(n-t)(1-p(n-t)) \\
    & > q(1-q),  \label{he: 1.1}
\end{align}
where \eqref{he: 1.0} is due to Bernoulli's inequality, and  \eqref{he: 1.1} holds because the function $x(1-x)$ is increasing on $[0,\frac{1}{2}]$.
\end{proof}

\begin{proposition} \label{prop: Z<X}
For $r = 1,2,\ldots$ ,
\begin{align}
    \mathbb{E}[Z[r]] < \mathbb{E}[X[r]].
\end{align}
\end{proposition}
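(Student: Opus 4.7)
The plan is to reduce the desired bound $\mathbb{E}[Z[r]] < \mathbb{E}[X[r]]$ to the elementary inequality $pt < q$ and then to chain together Proposition \ref{prop: X} with the honest-majority assumption $q \le \xi/6$. Since $Z[r]\sim \text{Binomial}(t,p)$ and $X[r]\sim\text{Bernoulli}(q)$, the inequality to establish is exactly $pt < q$.

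The first step is an algebraic rewrite: from the definition $\xi = (1-2\beta)/(1-\beta)$ in \eqref{def: xi} with $\beta = t/n$, a direct calculation gives
\begin{align}
    \frac{t}{n-t} \;=\; \frac{\beta}{1-\beta} \;=\; 1-\xi.
\end{align}
Hence $pt = p(n-t)\cdot(1-\xi)$. Applying the upper bound from Proposition \ref{prop: X}, namely $p(n-t) < q/(1-q)$, yields
\begin{align}
    pt \;<\; \frac{q\,(1-\xi)}{1-q}.
\end{align}

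The second step uses the standing hypothesis \eqref{equ: honest majority}, which says $q \le \xi/6$. In particular $q < \xi$, so $1-q > 1-\xi > 0$ (recall $\xi\in(0,1]$ since $\beta\in[0,1/2)$), and therefore $(1-\xi)/(1-q) < 1$. Plugging this into the previous display gives $pt < q$, which is precisely $\mathbb{E}[Z[r]] < \mathbb{E}[X[r]]$.

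I do not expect any real obstacle here; the only mildly subtle move is spotting the identity $t/(n-t) = 1-\xi$, which is what turns the bound $p(n-t) < q/(1-q)$ from Proposition \ref{prop: X} into the needed strict inequality once the honest-majority condition $q < \xi$ is invoked. Everything else is routine manipulation of Bernoulli trial parameters.
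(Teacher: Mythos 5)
Your proof is correct and follows essentially the same route as the paper's: both reduce the claim to $pt<q$ via the identity $t/(n-t)=1-\xi$, the bound $p(n-t)<q/(1-q)$ from Proposition \ref{prop: X}, and the assumption $q\le\xi/6$ (the paper just bounds $\frac{1}{1-q}\le\frac{1}{1-\xi/6}$ first, while you argue $\frac{1-\xi}{1-q}<1$ directly). The only nitpick is your parenthetical claim that $1-\xi>0$, which fails at the edge case $\xi=1$ (i.e.\ $t=0$); the conclusion still holds there since $1-\xi=0<1-q$, so nothing breaks.
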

\begin{proof}
Since $Z[r]\sim Binomial(t,p)$,
\begin{align}
    \mathbb{E}[Z[r]] & = pt \\
    & = \frac{t}{n-t}p(n-t) \\
    & < \frac{t}{n-t}\frac{q}{1-q} \label{he: 3.-1}\\
    & = (1-\xi)\frac{1}{1-q}q \label{he: 3.15} \\
    & {\color{black}\le \frac{1-\xi}{1-\ep}q}\\
    & < \mathbb{E}[X[r]] \label{he: 3.2},
\end{align}
where \eqref{he: 3.-1} is due to Proposition \ref{prop: X} and \eqref{he: 3.2} is due to $q\le \frac{\xi}{6}$.
\end{proof}

\begin{definition} \label{def: E}
For all integers $1\le s < r$, define event
\begin{align}
    E[s,r] := E_1[s,r] \cap E_2[s,r] \cap E_3[s,r]
\end{align}
where
\begin{align}
    E_1[s,r] & :=  \left\{ (1-\ep)\mathbb{E}[X[s,r]] < X[s,r] < (1+\ep)\mathbb{E}[X[s,r]] \right\}\label{equ: E1} \\
    E_2[s,r] & := \left\{(1-\ep)\mathbb{E}[Y[s,r]] < Y[s,r] \right\}\label{equ: E2} \\
    E_3[s,r] & := \left\{Z[s,r] <\mathbb{E}[Z[s,r]] + \ep \mathbb{E}[X[s,r]]\right\}. \label{equ: E3}
\end{align}
\end{definition}
Under event $E_1[s,r]$, {\color{black} the number of rounds with honest block mined, $X[s, r]$,} does not deviate from its expected value by more than a fraction of $\ep$.
Under event $E_2[s, r]$,
the number of uniquely successful rounds $Y[s, r]$ is no less than $1-\ep$ of its expected value.
Under event $E_3[s, r]$,
the upper bound for
the number of adversarial blocks is no more than its expected value plus $\ep$ of the expectation of $X[s, r]$. Intuitively, under $E[s, r]$, we have 1) a ``typical'' number of rounds during which  at least one honest block is mined, 2)``enough'' uniquely successful rounds, and 3) the total number of adversarial blocks is limited.

\begin{proposition} \label{prop: Chernoff bound}
(Chernoff bound, \cite[page 69]{mitzenmacher2017probability}) Let $X\sim binomial(n,p)$. Then for every $\eta\in (0,1]$,
\begin{align} \label{equ: chernoff X<}
    P(X\le (1-\eta)pn) \le e^{-\frac{\eta^2pn}{2}},
\end{align} and
\begin{align} \label{equ: chernoff X>}
    P(X\ge (1+\eta)pn) \le e^{-\frac{\eta^2pn}{3}}.
\end{align}
\end{proposition}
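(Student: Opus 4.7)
The plan is to invoke the standard Chernoff method: apply Markov's inequality to an exponential moment of $X$, use the factorization of the moment generating function (MGF) of a sum of i.i.d.\ Bernoullis, and then optimize over the exponential parameter.

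For the upper tail \eqref{equ: chernoff X>}, I would fix $s>0$ and bound
\begin{align}
P(X \ge (1+\eta)pn) \;\le\; \frac{\mathbb{E}[e^{sX}]}{e^{s(1+\eta)pn}} \;=\; \frac{(1-p+pe^s)^n}{e^{s(1+\eta)pn}} \;\le\; \frac{e^{np(e^s-1)}}{e^{s(1+\eta)pn}},
\end{align}
where the last step uses $1+x\le e^x$. Choosing the optimal $s=\ln(1+\eta)$ yields the classical form
\begin{align}
P(X \ge (1+\eta)pn) \;\le\; \left(\frac{e^\eta}{(1+\eta)^{1+\eta}}\right)^{np}.
\end{align}
To conclude \eqref{equ: chernoff X>} it then suffices to establish the one-dimensional inequality $(1+\eta)\ln(1+\eta)-\eta \ge \eta^2/3$ for $\eta\in(0,1]$, which I would verify by showing that the derivative of $f(\eta):=(1+\eta)\ln(1+\eta)-\eta-\eta^2/3$ is nonnegative on $(0,1]$ and that $f(0)=0$.

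For the lower tail \eqref{equ: chernoff X<} I would mirror the argument with $s=-t$ for $t>0$. The same MGF manipulation gives $\mathbb{E}[e^{-tX}]\le e^{np(e^{-t}-1)}$, and the optimal choice $t=-\ln(1-\eta)$ produces
\begin{align}
P(X \le (1-\eta)pn) \;\le\; \left(\frac{e^{-\eta}}{(1-\eta)^{1-\eta}}\right)^{np}.
\end{align}
The desired bound $e^{-\eta^2 pn/2}$ then follows from $(1-\eta)\ln(1-\eta)+\eta\ge \eta^2/2$ on $(0,1]$; this inequality is immediate from the Taylor expansion of $(1-\eta)\ln(1-\eta)$ about $\eta=0$, since every term of the expansion beyond $\eta^2/2$ has the same sign.

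The only real obstacle is analytic rather than probabilistic: the two deterministic inequalities about $(1\pm\eta)\ln(1\pm\eta)$ are exactly what pins down the sharp constants $3$ and $2$, and a bit of care is needed to verify them over the entire interval $(0,1]$ and not merely in a neighborhood of $0$. Everything else (Markov's inequality, independence of the Bernoulli summands, and the elementary inequality $1+x\le e^x$) is routine, so I would not dwell on it. Since this is precisely the statement proved in \cite[page~69]{mitzenmacher2017probability}, in the paper itself I would simply cite that reference rather than reproduce the calculation.
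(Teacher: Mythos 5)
Your proposal is correct: the exponential-moment (Chernoff) method with the optimal choices $s=\ln(1+\eta)$ and $t=-\ln(1-\eta)$, followed by the elementary inequalities $(1+\eta)\ln(1+\eta)-\eta\ge\eta^2/3$ and $(1-\eta)\ln(1-\eta)+\eta\ge\eta^2/2$ on $(0,1]$, is precisely the argument behind the cited result. The paper itself gives no proof and simply cites \cite[page 69]{mitzenmacher2017probability}, so your derivation is exactly the standard one being invoked; nothing further is needed.
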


Define
\begin{align}
    \eta = \eps{180}q. \label{def: gamma}
\end{align}

\begin{lemma}\label{lemma: prob of E}
For all integers $1\le s < r$,
\begin{align}
    P(E[s,r])>1-4e^{-\eta(r-s)},
\end{align}
where $\eta$ is given in \eqref{def: gamma}.
\end{lemma}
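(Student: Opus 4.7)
The plan is to apply the union bound, writing $P(\bar E[s,r]) \le P(\bar E_1[s,r]) + P(\bar E_2[s,r]) + P(\bar E_3[s,r])$, and to control each complementary probability via Proposition \ref{prop: Chernoff bound} applied to the Binomial sums $X[s,r]$, $Y[s,r]$, and $Z[s,r]$. The factor $4$ in the conclusion corresponds to the two tails of the two-sided bound $E_1[s,r]$ together with the one-sided bounds for $E_2[s,r]$ and $E_3[s,r]$, so it suffices to show that each of these four Chernoff tails is at most $e^{-\eta(r-s)}$ with $\eta = \xi^2 q/180$.

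For $E_1[s,r]$, I would use that $X[s,r] \sim \mathrm{Binomial}(r-s, q)$ with mean $q(r-s)$, and apply the two-sided Chernoff bound with deviation parameter $\xi/6$. This produces tail exponents $\xi^2 q(r-s)/72$ and $\xi^2 q(r-s)/108$ that each dominate $\eta(r-s) = \xi^2 q(r-s)/180$. For $E_2[s,r]$, Proposition \ref{prop: Y} yields $\mathbb{E}[Y[s,r]] > q(1-q)(r-s) \ge (5q/6)(r-s)$ under $q \le \xi/6 \le 1/6$; the lower-tail Chernoff bound with deviation $\xi/6$ then gives an exponent at least $5\xi^2 q(r-s)/432$, again exceeding $\eta(r-s)$.

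The most delicate term is $E_3[s,r]$, whose threshold is $\mathbb{E}[Z[s,r]] + (\xi/6)\mathbb{E}[X[s,r]]$ rather than a constant multiple of $\mathbb{E}[Z[s,r]]$. I would rewrite it as $(1+\eta_0)\,pt(r-s)$ with $\eta_0 = \xi q/(6pt)$, where $Z[s,r] \sim \mathrm{Binomial}(t(r-s), p)$, and apply the upper-tail Chernoff bound to obtain exponent $\eta_0^2 pt(r-s)/3 = \xi^2 q^2(r-s)/(108\, pt)$. Because $t/(n-t) = 1 - \xi$, Proposition \ref{prop: X} gives $pt < (1-\xi)q/(1-q)$, so this exponent is at least $\xi^2 q(1-q)(r-s)/(108(1-\xi))$, which dominates $\xi^2 q(r-s)/180$ once one verifies the elementary inequality $(1-q)/(1-\xi) \ge 3/5$ under $q \le \xi/6$ and $\xi \in (0,1]$.

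The main obstacle, and the reason the constant $180$ appears, is the simultaneous calibration of $\eta$ against all three Chernoff estimates: the ratios $1/72$, $5/432$, and essentially $(1-q)/(108(1-\xi))$ must each dominate $1/180$, and $\xi/6$ must be a small enough deviation for Chernoff to apply throughout the relevant regime. An auxiliary subtlety is that when $pt$ is very small the parameter $\eta_0 = \xi q/(6pt)$ may exceed $1$, so the stated form of Chernoff does not apply verbatim; this boundary case can be absorbed either by appealing to a large-deviation Chernoff variant (which yields the stronger exponent $\xi q(r-s)/18 \ge \eta(r-s)$) or by noting that when $t = 0$ the event $E_3[s,r]$ holds vacuously.
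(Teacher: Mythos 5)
Your proposal is correct, and for $E_1[s,r]$ and $E_2[s,r]$ it coincides with the paper's argument: a union bound followed by the two-sided and lower-tail Chernoff bounds of Proposition \ref{prop: Chernoff bound}, with exponents $\frac{\xi^2}{108}q(r-s)$, $\frac{\xi^2}{72}q(r-s)$, and (via Proposition \ref{prop: Y}) roughly $\frac{5\xi^2}{432}q(r-s)$, each dominating $\eta(r-s)=\frac{\xi^2}{180}q(r-s)$. Where you genuinely diverge is on $E_3[s,r]$. The paper first uses Proposition \ref{prop: Z<X} to split the slack as $\ep\mathbb{E}[X]\ge\epp{12}\mathbb{E}[Z]+\epp{12}\mathbb{E}[X]$ and then runs an explicit moment-generating-function computation with $u=\log(1+\epp{12})$, which is in effect a hand-rolled Chernoff bound tailored to the mixed threshold $\mathbb{E}[Z[s,r]]+\ep\mathbb{E}[X[s,r]]$; this route never needs any restriction on the size of the deviation. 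You instead absorb the entire slack into a single multiplicative deviation $\eta_0=\frac{\xi q}{6pt}$ and invoke the standard upper-tail bound on $Z[s,r]\sim\mathrm{Binomial}(t(r-s),p)$, then convert the resulting exponent $\frac{\xi^2 q^2}{108\,pt}(r-s)$ using $pt<(1-\xi)\frac{q}{1-q}$ from Proposition \ref{prop: X}; your verification that $\frac{1-q}{1-\xi}\ge\frac{3}{5}$ checks out since $1-q\ge 1-\ep\ge\frac{5}{6}$. The price of your shortcut is exactly the caveat you flag: Proposition \ref{prop: Chernoff bound} as stated requires the deviation parameter to lie in $(0,1]$, and $\eta_0$ can exceed $1$ when $pt$ is small (and is undefined when $t=0$). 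Your two escape hatches are both sound --- the large-deviation form $P(X\ge(1+\delta)\mu)\le e^{-\delta\mu/3}$ for $\delta\ge 1$ gives exponent $\frac{\xi q}{18}(r-s)\ge\eta(r-s)$, and $E_3$ is vacuous when $t=0$ --- but note that the large-deviation variant is not the statement actually provided in the paper, so a complete write-up would need to import it explicitly; the paper's MGF route avoids this case split altogether at the cost of a longer calculation.
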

\begin{proof}
{\color{black}We first analyze events $E_1$, $E_2$, and $E_3$ separately. We have}
\begin{align}
    P(E_1[s,r]^c) & = P\left(|X[s,r]-\mathbb{E}[X[s,r]]|\ge \ep \mathbb{E}[X[s,r]]\right) \\
    & = P\left( X[s,r]\ge \mathbb{E}[X[s,r]] +\ep \mathbb{E}[X[s,r]]\right) + P\left(X[s,r]\le \mathbb{E}[X[s,r]] - \ep \mathbb{E}[X[s,r]]\right)\\
    & \le 2e^{-\eps{108}q(r-s)} \label{1.09},
\end{align}
where \eqref{1.09} is due to Proposition \ref{prop: Chernoff bound}.

Also,
\begin{align}
    P(E_2^c[s,r]) & = P\left(Y[s,r] \le (1-\ep)\mathbb{E}[Y[s,r]]\right) \\
    & \le e^{-\eps{72}\mathbb{E}[Y[s,r]]} \label{1.091}\\
    & \le  e^{-\eps{72}(1-q) q(r-s)} \label{1.10} \\
    & < e^{-\eps{72}(1-\ep) q(r-s)}, \label{1.11}
\end{align}
where \eqref{1.091} is due to Proposition \ref{prop: Chernoff bound}, \eqref{1.10} is due to Proposition \ref{prop: Y}, and \eqref{1.11} is due to $q\le \ep$.

Note that the moment generating function for binomial random variable $Z[r]\sim Binomial(t,p)$ is $(1-p+pe^u)^{t}$ (page $39$ in \cite{das1989statistical}). We have
\begin{align}
P(E_3^c[s,r])& = P\left(Z[s,r] \ge \mathbb{E}[Z[s,r]] + \ep \mathbb{E}[X[s,r]]\right)  \\
& \le P\left(Z[s,r] \ge \mathbb{E}[Z[s,r]] + \epp{12} \mathbb{E}[Z[s,r]] + \epp{12} \mathbb{E}[X[s,r]]\right)\label{2.-15} \\
& <  \frac{\mathbb{E}\left[e^{Z[s,r]u}\right]}{e^{(1+\epp{12})\mathbb{E}[Z[s,r]]u + \epp{12}\mathbb{E}[X[s,r]]u}} \label{2.-1} \\
& = \frac{(1-p+pe^u)^{t(r-s)}}{e^{(1+\epp{12})(r-s)tpu + \epp{12}(r-s)qu}} \label{he: 2.0}\\
& \le e^{\left(e^u-1-u(1+\epp{12})\right)tp(r-s)-\epp{12}qu(r-s)}, \label{he: 2.1}
\end{align}
where \eqref{2.-15} is due to Proposition \ref{prop: Z<X}, {\color{black} \eqref{2.-1} holds for all $u\ge 0$ due to  Chernoff's inequality,} and \eqref{he: 2.1} is due to $1+x \le e^x$ for every $x\ge 0$ (here $x = p(e^u-1)$). {\color{black}Pick} $u=\log(1+\epp{12})$. Then
\begin{align}
P(E_3^c[s,r])& \le e^{\left(\epp{12} - (1+\epp{12})\log(1+\epp{12})\right)tp(r-s)-\epp{12}\log(1+\epp{12})q(r-s)} \\
& <  e^{-\epp{12}\log(1+\epp{12})q(r-s)} \label{2.11} \\
& < e^{-\eps{180}q(r-s)} \label{2.12}
\end{align}
where \eqref{2.11} is due to  $(1+x)\log(1+x)> x$ for all $x> 0$, and \eqref{2.12} is due to $\log(1+\epp{12}) > \epp{15}$ for all $0 < \xi \le 1$.

Thus,
\begin{align}
P(E[s,r]) & = 1-P(E^c[s,r]) \\
& \ge 1- P(E_1^c[s,r]) - P(E_2^c[s,r]) - P(E_3^c[s,r]) \\
& > 1-4e^{-\eta(r-s)}  \label{2.13}
\end{align}
where $\eta$ is defined in \eqref{def: gamma}{\color{black},} \eqref{2.13} is due to $\eps{72}(1-\ep) >  \eps{180}$ and $\eps{108} > \eps{180}$.
\end{proof}


\begin{lemma} \label{lemma: good properties}
(Typical properties lemma) For all integers $1\le s < r$, under event $E[s,r]$, the following holds.
\begin{align}\label{equ: X < (1+1/6del)qr}
(1-\ep)q(r-s) < X[s,r]  < (1+\ep)q(r-s)
\end{align}
\begin{align} \label{equ: Y > (1-1/del)qr}
Y[s,r] > (1-\epp{3})q(r-s)
\end{align}
\begin{align}\label{equ: Z < (1-2/3del)qr}
Z[s,r] <  (1-\epq{2}{3})q(r-s)
\end{align}
\begin{align} \label{equ: Z < X2}
 Z[s,r]   < (1-\epp{2})X[s,r]
\end{align}
\begin{align} \label{equ: Z < Y}
Z[s,r] < Y[s,r].
\end{align}
\end{lemma}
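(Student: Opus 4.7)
The plan is to derive each of the five inequalities in turn, using the definition of $E[s,r] = E_1 \cap E_2 \cap E_3$ together with the mean formulas and the bounds already established in Propositions \ref{prop: X}, \ref{prop: Y}, and \ref{prop: Z<X}, plus the standing assumption $q \le \ep = \xi/6$.

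First, \eqref{equ: X < (1+1/6del)qr} is immediate from $E_1[s,r]$ once we substitute $\mathbb{E}[X[s,r]] = q(r-s)$. For \eqref{equ: Y > (1-1/del)qr}, I would use $E_2[s,r]$ to get $Y[s,r] > (1-\ep)\mathbb{E}[Y[s,r]]$, then invoke Proposition \ref{prop: Y} to bound $\mathbb{E}[Y[s,r]] > q(1-q)(r-s)$, and finally use $q \le \ep$ together with the observation that $(1-\ep)(1-\ep) \ge 1 - 2\ep = 1 - \epp{3}$.

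The main obstacle is \eqref{equ: Z < (1-2/3del)qr}. Starting from $E_3[s,r]$, I have $Z[s,r] < \mathbb{E}[Z[s,r]] + \ep\, q(r-s)$. Reusing the estimate inside the proof of Proposition \ref{prop: Z<X}, specifically $\mathbb{E}[Z[r]] < (1-\xi)q/(1-q) \le (1-\xi)q/(1-\ep)$, gives
\begin{align}
Z[s,r] < \left(\frac{1-\xi}{1-\ep} + \ep\right) q(r-s).
\end{align}
The target inequality then reduces to $\frac{1-\xi}{1-\ep} + \ep < 1 - \epq{2}{3}$, equivalently $\frac{1-\xi}{1-\ep} < 1 - 5\ep$, which after clearing denominators is
\begin{align}
1 - \xi < (1-\ep)(1-5\ep) = 1 - 6\ep + 5\ep^2 = 1 - \xi + \tfrac{5\xi^2}{36}.
\end{align}
This holds strictly for every $\xi \in (0,1]$.

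With \eqref{equ: Z < (1-2/3del)qr} in hand, \eqref{equ: Z < X2} follows by combining it with the lower bound $X[s,r] > (1-\ep)q(r-s)$ from \eqref{equ: X < (1+1/6del)qr}: it suffices to verify $1 - \epq{2}{3} \le (1-\epp{2})(1-\ep)$, which expands to $1 - \tfrac{2\xi}{3} \le 1 - \tfrac{2\xi}{3} + \tfrac{\xi^2}{12}$ and hence holds. Finally, \eqref{equ: Z < Y} is immediate by chaining \eqref{equ: Z < (1-2/3del)qr} and \eqref{equ: Y > (1-1/del)qr}, since $1 - \epq{2}{3} < 1 - \epp{3}$. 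The whole argument is thus a sequence of elementary comparisons, with the real work concentrated in the polynomial inequality controlling the adversarial block count in step three.
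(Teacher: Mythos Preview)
Your proof is correct and follows essentially the same approach as the paper: both derive each inequality in the same order, use Proposition~\ref{prop: Y} for the $Y$ bound, the intermediate estimate $\mathbb{E}[Z[r]] \le (1-\xi)\frac{q}{1-q}$ for the $Z$ bound, and then chain the resulting inequalities for \eqref{equ: Z < X2} and \eqref{equ: Z < Y}. The only difference is cosmetic---you spell out the polynomial verifications for \eqref{equ: Z < (1-2/3del)qr} and \eqref{equ: Z < X2} explicitly, whereas the paper leaves them as one-line consequences of $q\le\ep$.
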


\begin{proof}
Under $E[s,r]$, \eqref{equ: X < (1+1/6del)qr} follows directly from \eqref{equ: E1}.

To prove \eqref{equ: Y > (1-1/del)qr},
\begin{align}
    Y[s,r] & > (1-\ep)q(1-q)(r-s) \label{3.91}\\
    & > (1-\ep)^2q(r-s) \label{3.92}\\
    & > (1-\epp{3})q(r-s),
\end{align}
where \eqref{3.91} is due to Proposition \ref{prop: Y} and \eqref{3.92} is due to $q\le \ep$.

To prove \eqref{equ: Z < (1-2/3del)qr}, we have
\begin{align}
    Z[s,r] & <  E[Z[s,r]] + \ep E[X[s,r]] \label{4.0}\\
    & \le (1-\xi)\frac{q}{1-q}(r-s) + \ep q(r-s) \label{4.1}\\
    & < (1-\epq{2}{3})q(r-s) \label{4.2}
\end{align}
where \eqref{4.0} is due to \eqref{equ: E3}, \eqref{4.1} is due to \eqref{he: 3.15}, and \eqref{4.2} is due to $q\le \ep$.

To prove \eqref{equ: Z < X2}, we have
\begin{align}
     Z[s,r] & < (1-\epq{2}{3})q(r-s) \label{6.0} \\
     & < \frac{1-\epq{2}{3}}{1-\ep} X[s,r]  \label{6.1}\\
     & < (1-\epp{2})X[s, r], \label{6.2}
\end{align}
where \eqref{6.0} is due to \eqref{4.2} and \eqref{6.1} is due  to \eqref{equ: X < (1+1/6del)qr}.

{\color{black}The inequality \eqref{equ: Z < Y} is straightforward by \eqref{equ: Z < (1-2/3del)qr} and \eqref{equ: Y > (1-1/del)qr}.}
\end{proof}


\begin{definition}
(Typical event) For all integers $1\le s < r$, define {\color{black}the} typical event {\color{black}with respect to $[s,r]$ as}
\begin{align}
    G[s, r] := \cap_{0\le a < s, b\ge 0}E[s-a,r+b].
\end{align}
\end{definition}
{\color{black}The event} $G[s, r]$ occurs {\color{black}when the events} $E[s-a,r+b]$ simultaneously {\color{black}occurs} for all $a, b$, i.e.,  the ``$E$'' {\color{black}events occur} over all intervals that contain $[s,r]$. {\color{black}The event $G$ represents a collection of outcomes that constrain the number of blocks mined in all intervals that contain $[s,r]$, including arbitrarily large intervals that terminate in the arbitrarily far future.}
Intuitively, we {\color{black}have defined} $G[s, r]$ {\color{black}to allow} the ``good'' properties mentioned in Lemma \ref{lemma: good properties} to extend to all intervals containing $[s, r]$ {\color{black}under the event}.
It is important to note that the typical events defined in~\cite{
garay2015bitcoin,
bagaria2018deconstructing}
requires the interval to be bounded by $b<r_\text{max}$ where $r_\text{max}$ denotes a finite {\em execution horizon}.
In contrast, the typical event is defined in this paper to allow for results for infinite horizon.

\begin{lemma} \label{lemma: prob. of event G}
For all integers $1\le s < r$,
\begin{align}
    P(G[s,r]) > 1-{\color{black}5\eta^{-2}}e^{-\eta(r-s)}.
\end{align}
\end{lemma}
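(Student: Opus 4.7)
The plan is to use the definition $G[s,r]^c = \bigcup_{0\le a<s,\,b\ge 0} E[s-a,r+b]^c$, apply the union bound, substitute the tail bound from Lemma \ref{lemma: prob of E}, and collapse the resulting doubly-indexed geometric series. The exponent in Lemma \ref{lemma: prob of E} is $\eta$ times the interval length, and that length is $(r+b)-(s-a)=(r-s)+a+b$, so each summand factors as $e^{-\eta(r-s)}\cdot e^{-\eta a}\cdot e^{-\eta b}$; this separation is what makes the calculation tractable even though $b$ ranges over all nonnegative integers.

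Concretely, I would write
\begin{align*}
    P(G[s,r]^c) &\le \sum_{a=0}^{s-1}\sum_{b=0}^{\infty} P(E[s-a,r+b]^c) \\
    &< 4 e^{-\eta(r-s)} \sum_{a=0}^{s-1}e^{-\eta a}\sum_{b=0}^{\infty}e^{-\eta b} \\
    &\le \frac{4\, e^{-\eta(r-s)}}{(1-e^{-\eta})^2},
\end{align*}
where the inequality on the second line uses Lemma \ref{lemma: prob of E}, and the third line bounds both geometric sums by $1/(1-e^{-\eta})$.

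It remains to show $4/(1-e^{-\eta})^2\le 5\eta^{-2}$, equivalently $1-e^{-\eta}\ge 2\eta/\sqrt{5}$. From the assumption $q\le \xi/6\le 1/6$ and the definition $\eta=\xi^2 q/180$, one has $\eta\le 1/1080$, which is well below the crossover point. I would derive the desired inequality from the elementary bound $1-e^{-\eta}\ge \eta-\eta^2/2=\eta(1-\eta/2)$; squaring gives $(1-e^{-\eta})^2\ge \eta^2(1-\eta/2)^2$, and since $\eta$ is tiny, $(1-\eta/2)^2\ge 4/5$ (which only requires $\eta\le 2-4/\sqrt{5}\approx 0.21$). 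Chaining these bounds yields $P(G[s,r]^c)<5\eta^{-2}e^{-\eta(r-s)}$, completing the proof.

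The only real obstacle is the infinite sum over $b$: unlike the finite-horizon treatment in \cite{garay2015bitcoin,bagaria2018deconstructing}, we have no upper bound on the interval length, so we must rely on the fact that Lemma \ref{lemma: prob of E} gives true exponential decay (not just decay up to some $r_{\max}$) to guarantee the tail is summable. The constant $5\eta^{-2}$ in the statement is engineered precisely to absorb the $1/(1-e^{-\eta})^2$ factor coming from the two geometric series for all admissible values of $\eta$.
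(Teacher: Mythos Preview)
Your proposal is correct and follows essentially the same approach as the paper: union bound over all $E[s-a,r+b]^c$, substitute the exponential tail from Lemma~\ref{lemma: prob of E}, collapse to $\frac{4}{(1-e^{-\eta})^2}e^{-\eta(r-s)}$, and use $\eta\le 1/1080$ to convert the prefactor to $5\eta^{-2}$. The only cosmetic difference is that the paper groups the double sum along diagonals $k=a+b$ and invokes $\sum_{k\ge 0}(k+1)e^{-\eta k}=(1-e^{-\eta})^{-2}$, whereas you factor it as a product of two geometric series; both yield the identical bound.
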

\begin{proof}
{\color{black}Due to the stationarity of $X$, $Y$ and $Z$ processes}, $P(E[s,r]) = P(E[1,r-s+1])$ for all $s, r$. {\color{black}Evidently the probability only} depends on the {\color{black}length of the interval} $r-s$.
\begin{align}
    P(G^c[s, r]) & = P(\cup_{0\le a < s, b\ge 0}E^c[s-a,r+b]) \\
    & = P(\cup_{0\le a < s, b\ge 0}E^c[1,r-s+a+b+1])\\
    & \le \sum_{0\le a < s, b\ge 0}P(E^c[1,r-s+a+b+1]) \\
    & = \sum_{k=0}^{\infty} \sum_{0\le a < s, b\ge 0:a+b=k}P(E^c[1,r-s+k+1])\\
    & < \sum_{k=0}^{\infty} (k+1)P(E^c[1,r-s+k+1])\\
    & < \sum_{k=0}^{\infty}(k+1)4e^{-\eta (r-s+k)}\\
    & = 4e^{-\eta (r-s)}\sum_{k=0}^{\infty} (k+1) e^{-\eta k}\\
    & = \frac{4}{(1-e^{-\eta})^2}e^{-\eta(r-s)}.
\end{align}
{\color{black}A}ccording to \eqref{equ: honest majority} and \eqref{def: gamma}, $\eta \le \frac{1}{6}\cdot \frac{1}{180} = \frac{1}{1080}$. {\color{black}The lemma is thus established using the fact that $1-e^{-x} \ge \sqrt{\frac{4}{5}}x$ for all $0\le x \le \frac{1}{1080}$.}
\end{proof}

\begin{lemma} \label{lemma: equal length}
{\color{black}A}ll honest  blockchains {\color{black}must have} identical length {\color{black}by} every round.
\end{lemma}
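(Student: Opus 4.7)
The plan is to argue by induction on the round index $r$, leveraging the synchronous-model assumption that whatever blocks are published in a round reach every miner by the end of that round. The statement is essentially a deterministic consequence of synchrony together with the longest-chain rule, so no probabilistic machinery is required.

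For the base case ($r=0$), only the genesis block has been mined, so every honest miner's blockchain is the length-one sequence consisting of the genesis block alone. For the inductive step, suppose that at the end of round $r-1$ every honest miner's currently adopted blockchain has some common length $L$. By the broadcast rule, the set of blocks known to an honest miner at the end of round $r$ is the union of (i) the set of blocks known to it at the end of round $r-1$ and (ii) the blocks published during round $r$; both pieces are identical across all honest miners — (i) by the inductive hypothesis combined with the end-of-round synchronization at round $r-1$, and (ii) by the synchrony assumption that all published blocks are delivered to everyone at the same instant. Hence the maximum length of a valid blockchain that can be formed from the set of published blocks is a common value $L'$ for every honest miner, and the longest-chain rule dictates that each honest miner adopts a length-$L'$ chain.

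The only subtlety I would flag is that distinct honest miners may, in the presence of ties, adopt \emph{different} length-$L'$ chains — the lemma therefore speaks only of identical length and not of identical content. Adversarial behaviour such as withholding previously mined blocks and releasing them later cannot break the argument either: whatever is released in round $r$ is received by \emph{every} honest miner at the end of round $r$, so no asymmetry in the knowledge of published blocks can arise. Thus the induction closes and the common-length property holds at the end of every round.
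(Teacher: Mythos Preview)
Your proof is correct and rests on exactly the same idea as the paper's: under synchrony every honest miner has seen the same set of published blocks, and the longest-chain rule then forces equal lengths. The paper dispatches this in a single sentence without induction; your inductive wrapper is harmless but unnecessary, and note that in step (i) the equality of known blocks already follows directly from synchrony rather than from the inductive hypothesis about lengths.
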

\begin{proof}
This is a simple consequence of the fact that all honest miners have seen the same blocks and every honest miner adopts {\color{black}the} longest blockchain at the end of every round.
\end{proof}


\begin{lemma} \label{lemma: unique block}
(Lemma 6 in \cite{garay2015bitcoin}) Suppose some blockchain's $k$th block $B$ is mined by an honest miner in a uniquely successful round. Then the $k$th block of every blockchain is either $B$ or an adversarial block.
\end{lemma}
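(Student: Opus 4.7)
The plan is to argue by contradiction. Suppose some blockchain has an honest block $B' \ne B$ at position $k$. I would let $r$ denote the uniquely successful round in which $B$ was mined, so $Y[r]=1$, and let $r'$ denote the round in which $B'$ was mined. The goal is to reach $B'=B$ by a case analysis on the ordering of $r$ and $r'$.

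First I would split into three cases. In the case $r' < r$: by the synchronous round-based communication assumption from Section~\ref{sec: model and definitions}, the block $B'$ (together with the blockchain of length $k$ in which it sits) is received by every honest miner by the end of round $r'$, hence is available at the honest miner of $B$ at the start of round $r$. By the honest longest-chain rule, that miner only mines on top of a blockchain whose length is at least $k$, so $B$ would be at a position strictly greater than $k$, contradicting $B$ being the $k$th block. The case $r' > r$ is symmetric: the honest miner of $B'$ has received the length-$k$ blockchain ending in $B$ before round $r'$ begins and therefore would mine $B'$ at position strictly greater than $k$. Finally, in the case $r' = r$, the assumption $Y[r]=1$ forces only one honest block to have been mined in round $r$, so $B' = B$.

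The main obstacle will be setting up the first two cases cleanly: one needs to invoke both the bounded-latency convention (``a miner can only react to round $r$ blocks in round $r+1$'') and the longest-chain rule together with Lemma~\ref{lemma: equal length}, and to observe that once any blockchain of length $k$ is publicly known, every subsequently mined honest block sits at position strictly greater than $k$. Once this observation is in hand, the contradictions in the two asymmetric cases are immediate, and the $r=r'$ case is trivial from $Y[r]=1$, completing the proof.
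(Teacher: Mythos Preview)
Your proposal is correct and follows essentially the same approach as the paper: both argue by contradiction with an honest block $B'\ne B$ at position $k$, use the uniquely successful round assumption to exclude $r=r'$, and observe that once a length-$k$ chain has been broadcast, every subsequently mined honest block must sit at position strictly greater than $k$. The only cosmetic difference is that the paper merges your two asymmetric cases via $r^*=\min\{r+1,r'+1\}$ rather than treating $r'<r$ and $r'>r$ separately.
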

\begin{proof}
Suppose the $k$th block of another blockchain is an honest block $B'\ne B$.
Let $r$ and $r'$ denote the rounds in which $B$ and $B'$ are mined, respectively.
Then we must have $r\ne r'$ by assumption that $B$ is mined in a uniquely successful round.
Since both $B$ and $B'$ are mined and adopted as the $k$th block by some honest miners, all other honest miners must have adopted a blockchain of length at least $k$ by round $r^*= \min\{r{\color{black}+1},r'{\color{black}+1}\}$.
Hence, all honest blocks mined after round $r^*$ will extend a blockchain longer than $k$. This contradicts the assumption that $B$ and $B'$ are both at position $k$ of some miner's blockchain.
Hence the proof of Lemma \ref{lemma: unique block}.
\end{proof}


\begin{lemma} \label{lemma: blockchain growth}
(Lemma 7 in \cite{garay2015bitcoin}) Let $1 \le s < r$ be integers. Suppose an honest blockchain is of length $l$ by round $s$. Then by round $r$, the length of every honest blockchain is at least $l + X[s,r]$.

\end{lemma}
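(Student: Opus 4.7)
The plan is to prove the statement by induction on the round index, showing that the common length of honest blockchains grows by at least $X[i]$ during round $i$ for each $i\in\{s,\ldots,r-1\}$. By Lemma \ref{lemma: equal length}, all honest blockchains have a common length at the end of every round, so I let $L[j]$ denote this common length by round $j$, with $L[s]=l$ by hypothesis.

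The heart of the argument is a one-round claim: for every $j\ge s$,
\begin{align}
L[j+1] \ge L[j] + X[j].
\end{align}
If $X[j]=0$, the inequality is immediate, because each honest miner adopts the longest blockchain it has seen and the blockchain of length $L[j]$ that existed at the start of round $j$ remains valid and visible, so $L[j+1]\ge L[j]$. If $X[j]=1$, at least one honest block is mined in round $j$ by some honest miner $M$. At the beginning of round $j$, $M$ was mining on top of the longest blockchain it had seen, which by definition of $L[j]$ has length at least $L[j]$. Hence $M$'s newly mined block sits at position $L[j]+1$ and is broadcast immediately, so by the end of round $j$ (i.e., by round $j+1$) every honest miner sees a valid blockchain of length at least $L[j]+1$ and must adopt one no shorter, giving $L[j+1]\ge L[j]+1=L[j]+X[j]$.

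Telescoping the one-round inequality from $j=s$ through $j=r-1$ yields
\begin{align}
L[r] \ge L[s] + \sum_{j=s}^{r-1} X[j] = l + X[s,r],
\end{align}
which is exactly the desired conclusion. The main subtlety, and the only nontrivial step, is justifying that the honest miner who produces a block in a round with $X[j]=1$ really builds on a chain of length at least $L[j]$; this is where the synchronous assumption and the honest longest-chain rule combine with Lemma \ref{lemma: equal length} to rule out the possibility of an honest miner extending a shorter chain. No probabilistic or concentration estimates are needed, so the argument is deterministic conditional on the structural facts already established.
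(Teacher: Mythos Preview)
Your proof is correct and follows essentially the same approach as the paper: both argue by induction on the round index, split on whether $X[j]=0$ or $X[j]=1$, and invoke Lemma~\ref{lemma: equal length} to ensure the honest miner who mines in a successful round is extending a chain of length at least $L[j]$. Your telescoping presentation is slightly cleaner, but the underlying argument is the same.
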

\begin{figure}\centering
\includegraphics[clip, trim=0cm 3cm 20cm 0cm,width=0.4\columnwidth]{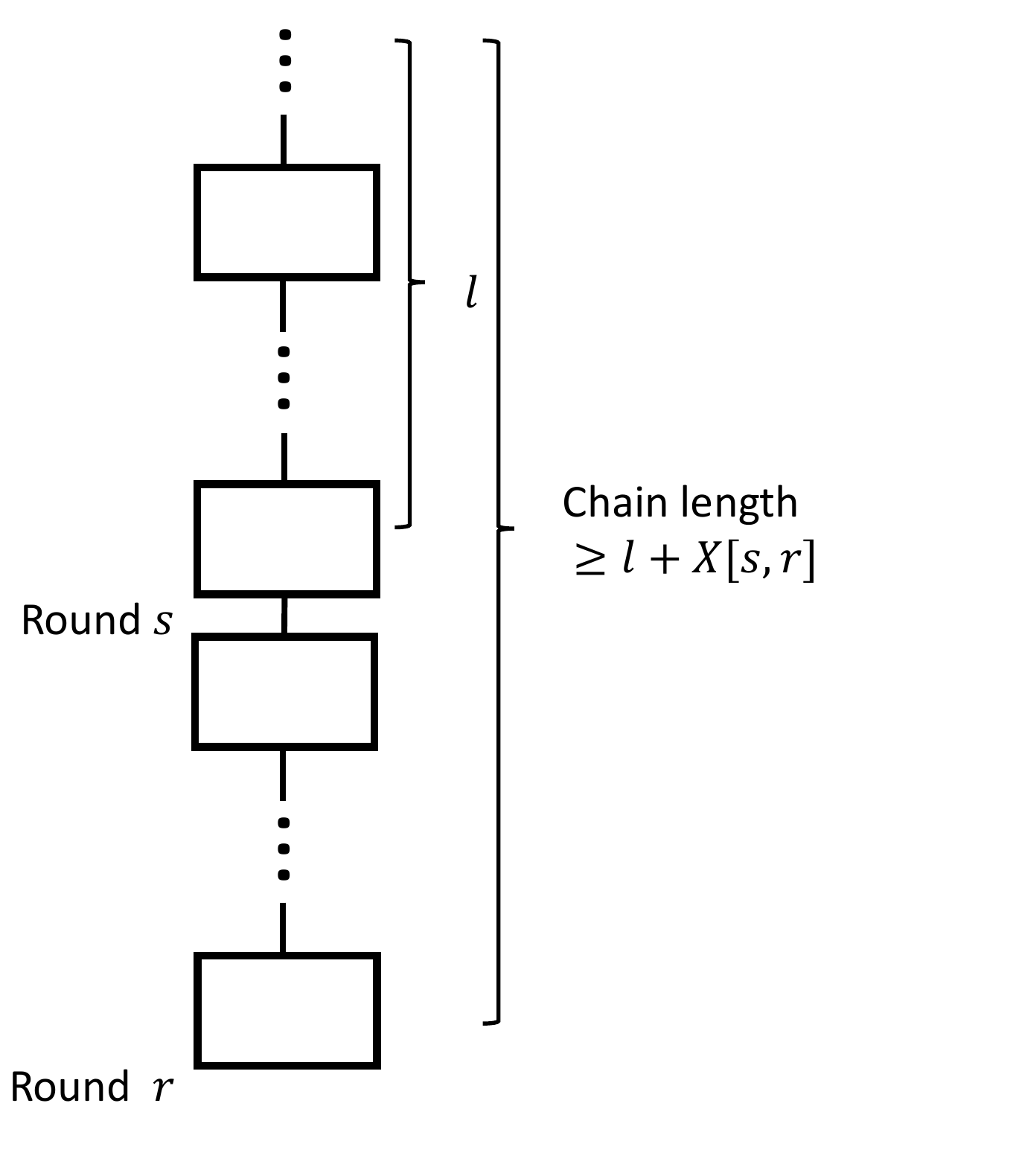}
\caption{Illustration for Lemma \ref{lemma: blockchain growth}.}
\end{figure}
\begin{proof}
By induction: Consider $r=s+1$. All honest miners' blockchains are of identical length $l$ by round $s$ according to Lemma \ref{lemma: equal length}.
If $X[s]=0$, then $X[s,s+1]=0$.
If $X[s]=1$, at least one honest block is broadcast to all miners during round $s$. Then by round $s+1$, each honest miner will adopt a blockchain of at least $l+1$ blocks. Thus Lemma \ref{lemma: blockchain growth} is established for the cases of $r = s+1$.

Assume by round $r_1$, each honest miner's blockchain length is at least $l+ X[s,r_1]$. If $X[r_1] = 0$, the claim holds trivially for round $r_1+1$. If $X[r_1]=1$, at least one honest miner will have a blockchain of length no shorter than $l + X[s, r_1] +1$ by round $r_1$. Then according to Lemma \ref{lemma: equal length}, each honest miner will adopt a blockchain of length at least $l+X[s,r_1+1]$ by round $r_1+1$. By induction on $r_1$, Lemma \ref{lemma: blockchain growth} holds.
\end{proof}


\begin{lemma} \label{lemma: blockchain growth2 at least rounds}
(Blockchain growth lemma) For all integers $1 \le s < r$ and $k \ge 2q(r-s)$, under typical event $G[s, r]$,
every honest miner's $k$-deep block by round $r$ must be mined before round $s$.
\end{lemma}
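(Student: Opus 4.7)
The plan is to argue by contradiction. Suppose some honest blockchain $C$, of length $\ell$ by round $r$, contains a block $B$ at position $p \le \ell - k$ that was mined at some round $t \ge s$. Because no block can be mined earlier than its immediate predecessor, every block of $C$ at a position strictly larger than $p$ was mined at a round in $[t, r-1] \subseteq [s, r-1]$. Thus positions $p+1, p+2, \ldots, \ell$ of $C$ host $\ell - p \ge k$ distinct blocks, each one mined within $[s, r-1]$.

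The next step is to split these $\ell - p$ blocks into honest and adversarial ones and bound each type using the typical-event bounds guaranteed by $G[s,r]$. By Lemma \ref{lemma: equal length}, all honest miners share the same chain length $\ell(t')$ at the start of every round $t' \in [s, r-1]$, so any honest block mined at round $t'$ lies at position $\ell(t') + 1$ of whichever chain contains it. Since $C$ has exactly one block at each position, it can absorb at most one honest block from each round in $[s, r-1]$, giving at most $\sum_{i=s}^{r-1} X[i] = X[s, r]$ honest blocks of $C$ in this range. The number of adversarial blocks of $C$ mined in $[s, r-1]$ is trivially bounded by $Z[s, r]$. Combining, $k \le \ell - p \le X[s, r] + Z[s, r]$.

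To close the argument, I will note that $G[s, r] \subseteq E[s, r]$, so Lemma \ref{lemma: good properties} applies to the interval $[s, r]$, giving $X[s, r] < (1 + \xi/6) q (r - s)$ together with $Z[s, r] < (1 - 2\xi/3) q (r - s)$. Adding these yields $X[s, r] + Z[s, r] < (2 - \xi/2) q (r - s) < 2 q (r - s) \le k$, which contradicts $k \le X[s, r] + Z[s, r]$. The contradiction shows no such $B$ exists, so every $k$-deep block of every honest blockchain by round $r$ must have been mined before round $s$.

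The main obstacle is the subtle but essential observation that $C$ can absorb at most one honest block from any given round; this relies on Lemma \ref{lemma: equal length} combined with the longest-chain rule, and is precisely what pins the honest count to $X[s, r]$ rather than to the much larger $H[s, r]$. Without this reduction, the typical-event bounds would be far too loose to close the gap against $k \ge 2q(r-s)$.
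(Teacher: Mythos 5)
Your proof is correct and follows essentially the same route as the paper's: bound the chain's growth over rounds $\{s,\ldots,r-1\}$ by $X[s,r]+Z[s,r]$ and then invoke the typical-event bounds \eqref{equ: X < (1+1/6del)qr} and \eqref{equ: Z < (1-2/3del)qr} to get $X[s,r]+Z[s,r]<2q(r-s)\le k$. The only difference is that you explicitly justify, via Lemma \ref{lemma: equal length}, why a single chain can absorb at most one honest block per round (hence $X[s,r]$ rather than $H[s,r]$), a step the paper asserts without elaboration.
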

\begin{proof}
The blockchain growth of an honest miner during rounds $\{s,\ldots,r-1 \}$ is upper bounded by $ X[s,r] + Z[s,r]$. Note that
\begin{align}
X[s,r] + Z[s,r] < & (1+\ep)q(r-s) + (1-\epq{2}{3})q(r-s) \label{6.01}\\
< & 2q(r-s) \\
\le & k,
\end{align}
where \eqref{6.01} is due to  \eqref{equ: X < (1+1/6del)qr} and \eqref{equ: Z < (1-2/3del)qr}. Thus, the $k$-deep block must be mined before round $s$.
\end{proof}
\begin{figure}\centering
\includegraphics[clip, trim=0cm 10cm 20cm 0cm,width=0.4\columnwidth]{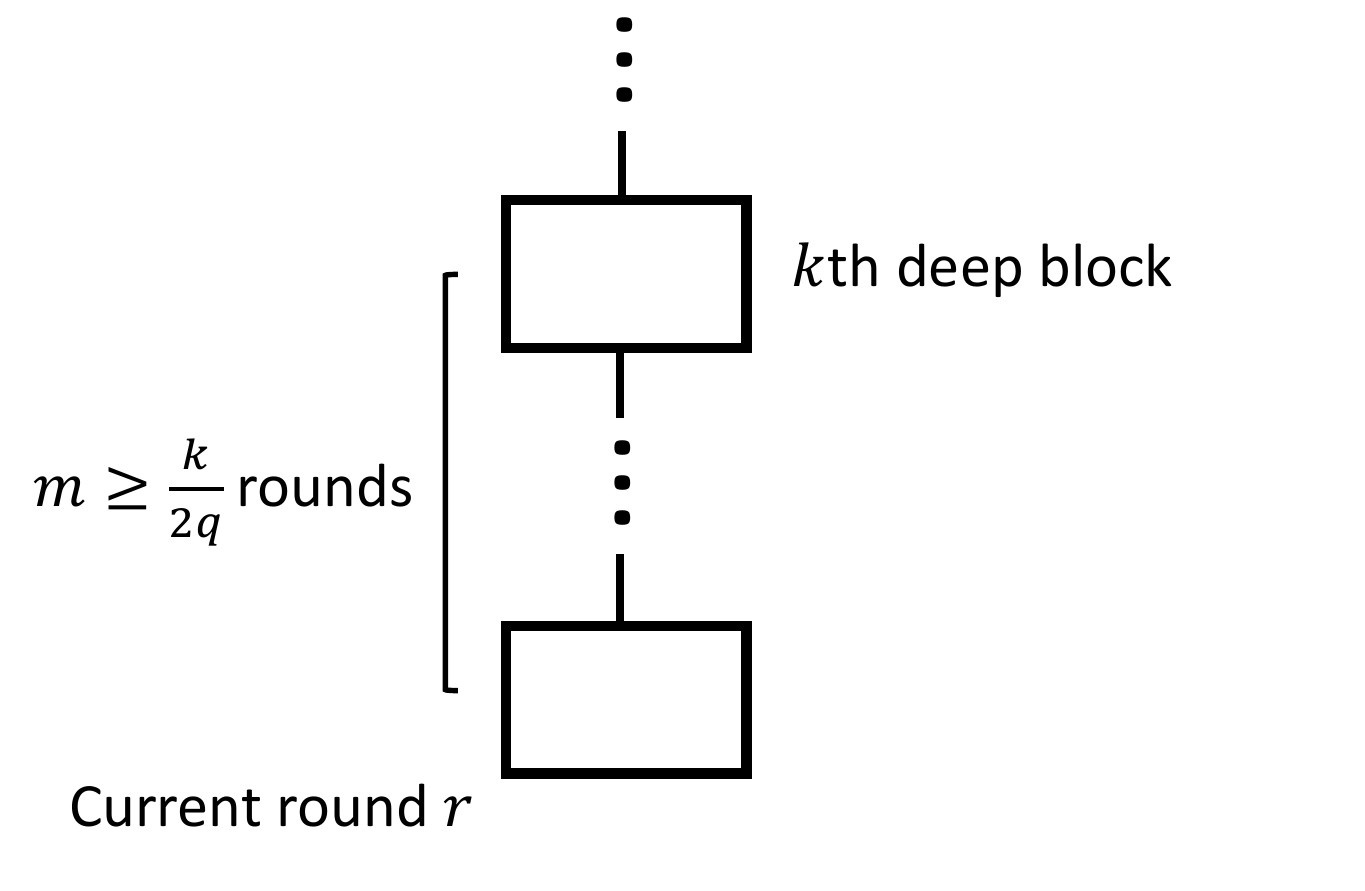}
\caption{Illustration for Lemma \ref{lemma: blockchain growth2 at least rounds}.}
\end{figure}


\begin{theorem} \label{thm: blockchain growth}
(Blockchain growth theorem)
Let $r, s, s_1$ be integers satisfying $1\le s_1\le s<r$. Then under typical event $G[s, r]$,  {\color{black}the length of} {\color{black}every honest blockchain} {\color{black} must increase by} at least $(1-\ep) q(r-s_1)$ during {\color{black}rounds $\{s_1, \ldots, r\}$}.
\end{theorem}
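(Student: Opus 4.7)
The plan is to reduce the theorem to a direct combination of the typical properties lemma (Lemma \ref{lemma: good properties}) and the blockchain growth lemma of Garay et al.\ (Lemma \ref{lemma: blockchain growth}), after first observing that the typical event $G[s,r]$ already controls all intervals of the form $[s_1,r]$ with $s_1\le s$.

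First I would unpack the definition of $G[s,r]=\cap_{0\le a<s,\;b\ge 0}E[s-a,r+b]$ and set $a=s-s_1$, $b=0$. Since $1\le s_1\le s$, we have $0\le a<s$, so $G[s,r]\subseteq E[s_1,r]$. This is the key structural observation, and really the only nontrivial ingredient: it is exactly what motivated the definition of $G$ in the first place (enlarging the interval leftwards is allowed), and it is what lets us obtain growth guarantees that reach back before $s$.

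Next, conditioning on $E[s_1,r]$, I would invoke Lemma \ref{lemma: good properties}, specifically inequality \eqref{equ: X < (1+1/6del)qr}, to obtain
\begin{align}
X[s_1,r] > (1-\ep)\,q\,(r-s_1).
\end{align}
Now let $l$ denote the common length of every honest blockchain by round $s_1$ (common by Lemma \ref{lemma: equal length}). Applying Lemma \ref{lemma: blockchain growth} with the interval starting at $s_1$ gives that every honest blockchain by round $r$ has length at least $l+X[s_1,r]$, hence the increase during rounds $\{s_1,\ldots,r-1\}$ is at least $X[s_1,r]>(1-\ep)q(r-s_1)$, which is the claim.

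I do not expect any real obstacle: the only thing that could trip one up is the index bookkeeping in $G[s,r]$, i.e.\ checking that $s_1\le s$ corresponds precisely to a legal choice of $a$ in the intersection defining $G$. Once that inclusion $G[s,r]\subseteq E[s_1,r]$ is in hand, the rest is a one-line application of Lemmas \ref{lemma: good properties} and \ref{lemma: blockchain growth}.
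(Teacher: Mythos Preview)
Your proposal is correct and follows exactly the paper's approach: use the inclusion $G[s,r]\subseteq E[s_1,r]$ (which you make more explicit than the paper does), apply inequality \eqref{equ: X < (1+1/6del)qr} from Lemma~\ref{lemma: good properties} to lower bound $X[s_1,r]$, and then invoke Lemma~\ref{lemma: blockchain growth}. There is nothing to add or correct.
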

\begin{figure}\centering
\includegraphics[clip, trim=0cm 6cm 20cm 0cm,width=0.4\columnwidth]{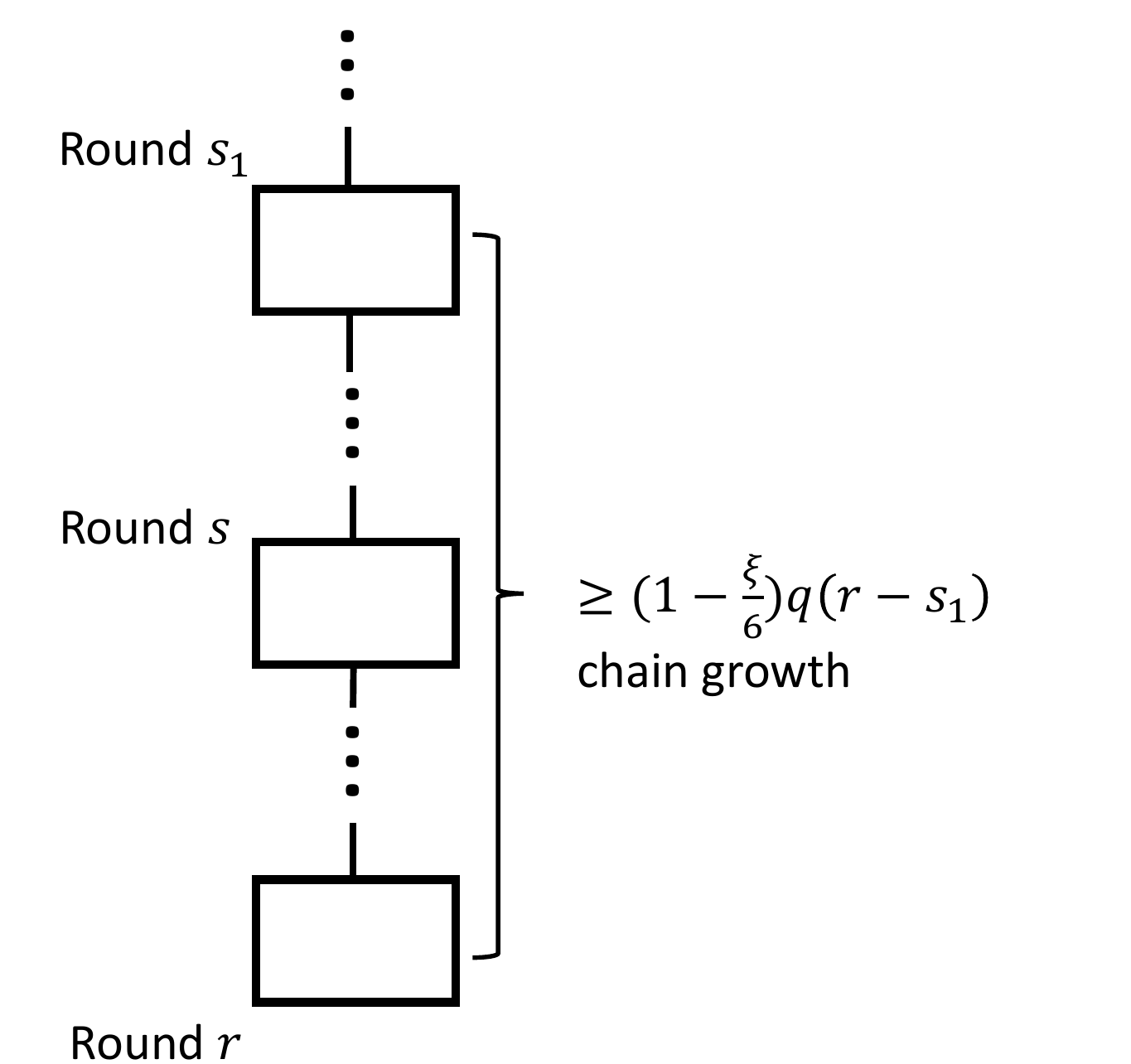}
\caption{Illustration for blockchain growth theorem.}
\end{figure}
\begin{proof}
Under $G[s,r]$,
\begin{align}
    X[s_1,r] & > (1-\ep)\mathbb{E}[X[s_1,r]] \\
    & =  (1-\ep)q (r-s_1) \label{3.11}
\end{align}
where \eqref{3.11} is due to \eqref{equ: X < (1+1/6del)qr}.
According to Lemma \ref{lemma: blockchain growth}, the blockchain growth  for any honest miner is at least $X[s_1,r]$ during $[s_1, r]$.
\end{proof}

Let $len(C)$ denote the length of a blockchain $C$.
\begin{theorem} \label{thm: blockchain quality}
(Blockchain quality theorem) Let $r,s,k$ be integers satisfying $1\le s < r$ and $k\ge 2q(r-s)$. Suppose an honest {\color{black}miner's} blockchain has more than $k$ blocks {\color{black}by round $r$}. Under event $G[s, r]$, by round $r$, at least $\epp{2}$ fraction of the last $k$ blocks of this miner's blockchain are honest.
\end{theorem}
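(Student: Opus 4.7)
The plan is to anchor the argument on the latest honest block of $C$ that sits strictly before the window of the last $k$ blocks, and use that block's publication round (rather than the round $r_1$ at which the window starts) to convert a chain-length inequality into an upper bound on $X$. Let $L$ denote the length of $C$ and enumerate its last $k$ blocks as $B_1,\dots,B_k$, with $B_k$ the tip. Since $k\ge 2q(r-s)$, Lemma \ref{lemma: blockchain growth2 at least rounds} forces $B_1$ to be mined in some round $r_1<s$. Let $B^o$ be the latest honest block in $C$ whose position $p^o$ satisfies $p^o\le L-k$; such a block exists because the genesis is honest and sits at position $1\le L-k$. Let $r^o$ denote the round in which $B^o$ is mined, and let $j$ denote the number of honest blocks among $B_1,\dots,B_k$. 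By the maximality of $p^o$, all blocks of $C$ at positions $p^o+1,\dots,L-k$ are adversarial, so the total adversarial count at positions $p^o+1,\dots,L$ equals $(L-k-p^o)+(k-j)=L-p^o-j$.

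Next I would record two propagation facts. Since $B^o$ is honest, it is broadcast by the end of round $r^o$, so Lemma \ref{lemma: equal length} gives that every honest blockchain has length at least $p^o$ by round $r^o+1$. Because no miner can know $B^o$ before round $r^o$ ends, every block of $C$ at positions $p^o+1,\dots,L$ must be mined in some round of $[r^o+1,r-1]$; in particular, the $L-p^o-j$ adversarial blocks at those positions are all accounted for by $Z[r^o+1,r]$. Because $1\le r^o+1\le r_1<s$, the interval $[r^o+1,r]$ contains $[s,r]$, so the event $E[r^o+1,r]$ is implied by $G[s,r]$.

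The remainder is assembly. Lemma \ref{lemma: good properties}~\eqref{equ: Z < X2} yields $L-p^o-j\le Z[r^o+1,r] < (1-\tfrac{\xi}{2})\,X[r^o+1,r]$, while applying Lemma \ref{lemma: blockchain growth} with starting length $p^o$ at round $r^o+1$ gives $X[r^o+1,r]\le L-p^o$. Substituting and rearranging produces $j > \tfrac{\xi}{2}(L-p^o) \ge \tfrac{\xi}{2}\,k$, using $L-p^o\ge k$ by the choice of $p^o$.

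The main obstacle is obtaining the upper bound $X[r^o+1,r]\le L-p^o$: a naive anchor at $r_1$ would require the honest chain length at the start of round $r_1$ to be at least $L-k$, which can easily fail when the adversary privately builds a long prefix that keeps the honest chain short until release. Choosing $B^o$ as the latest honest pre-window block is the trick, because its public broadcast forces the honest length by round $r^o+1$ to be at least $p^o$, and Lemma \ref{lemma: blockchain growth} then upgrades this chain-length fact into the required upper bound on $X[r^o+1,r]$.
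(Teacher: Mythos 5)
Your proof is correct and follows essentially the same route as the paper's: your anchor block $B^o$ is exactly the paper's "last honest block before the window" $B_{u'}$, your quantity $L-p^o$ is the paper's $L$, and the closing chain of inequalities ($L-p^o-j \le Z[r^o+1,r] < (1-\frac{\xi}{2})X[r^o+1,r] \le L-p^o$) is identical to the paper's. The only nit is the written claim $r^o+1\le r_1$, which should be $r^o\le r_1$ (the two blocks could be mined in the same round); the needed containment $[s,r]\subset[r^o+1,r]$ still follows since $r^o\le r_1<s$.
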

\begin{proof}
The intuition is {\color{black}that} under typical event $G[s,r]$, an honest miner's blockchain grow by at least $X[s,r]$ {\color{black}according to Lemma \ref{lemma: blockchain growth}}.
{Meanwhile,} the number of adversarial blocks mined is {\color{black}upper bounded by \eqref{equ: Z < X2}}.
Thus, {\color{black}at} least $\epp{2}$ fraction of blocks must be honest even in the worst case that all adversarial blocks {\color{black}are included in the blockchain}.

To be precise, assume an honest miner adopts blockchain $C$ by round $r$. Denote $B_i$ as the $i$th block of blockchain $C$ ($C = B_0B_1\ldots B_{len(C)-1}$, where $B_0$ is the genesis block).
By assumption, $len(C)>k$.
Let $u=len(C)-k$. Then the last $k$ blocks of $C$ are $B_u\ldots B_{len(C)-1}$.
Let $B_{u'}$ be the last honest block before $B_{u}$.
That is to say, $u' = \max \{u'|u'\le u-1, B_{u'} \text{ is honest} \}$ ($u'$ is always well defined as $B_0$ is regarded as honest). Let $r^*$ be the round when $B_{u'}$ is mined. By Lemma \ref{lemma: blockchain growth2 at least rounds}, $r^*<s$.
Let $L=len(C)-u'-1$. Note that $L \ge k$. These definitions are illustrated in Figure \ref{fig:thmChainQuality}.
\begin{figure}\centering
\includegraphics[clip, trim=0cm 0cm 17cm 0cm,width=0.5\columnwidth]{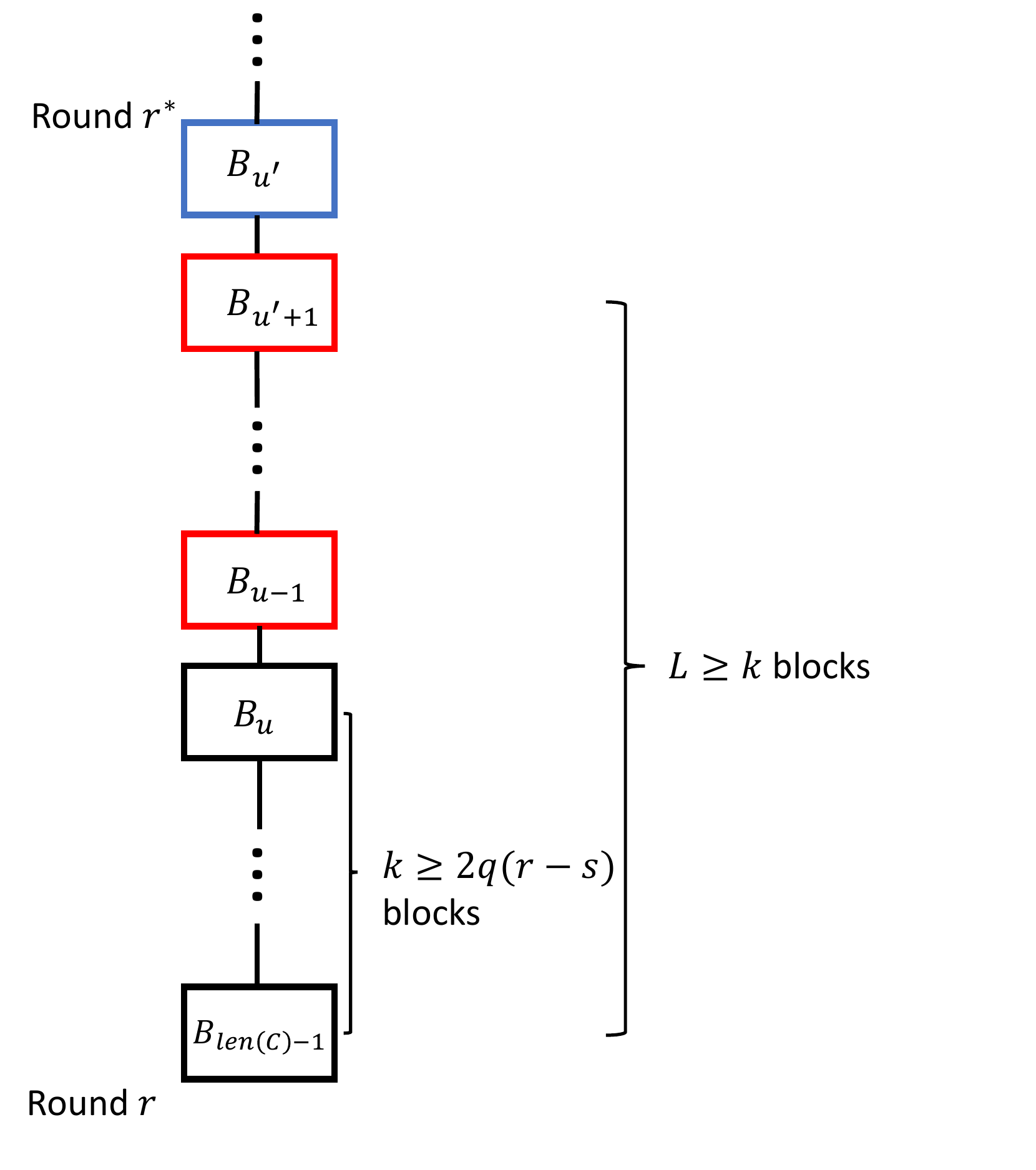}
\caption{Illustration to prove blockchain quality theorem\label{fig:thmChainQuality}.}
\end{figure}

Let $x$ be the number of honest blocks in $B_u\ldots B_{len(C)-1}$. To prove the theorem, it suffices to show $x > \epp{2} k$.  Since all blocks in $B_{u'+1}\ldots B_{u-1}$ are {\color{black}adversarial}, the number of honest blocks in  $B_{u'+1}\ldots {\color{black}B_{len(C)-1}}$ is also $x$. Thus, the number of adversarial blocks in $B_{u'+1}\ldots B_{len(C)-1}$ is $L - x$. Under $G[s,r]$, which implies that ${\color{black}E[r^*+1,r]}$ also occurs, we have
\begin{align}
L -x  & \le Z[r^*+1,r] \\
& < (1-\epp{2})X[r^*+1, r] \label{7.2}\\
& \le (1-\epp{2})L \label{7.3}\\
& \le L - \epp{2} k \label{7.4},
\end{align}
where \eqref{7.2} is due to \eqref{equ: Z < X2}, \eqref{7.3} is due to Lemma {\color{black}\ref{lemma: blockchain growth}}, and \eqref{7.4} is due to $L\ge k$. From \eqref{7.4}, $x>\epp{2}k$ is derived.
\end{proof}

Let $C^{\lceil k}$ denote the $k$-deep prefix of blockchain $C$. If $len(C)\le k$, let $C^{\lceil k}$ be the genesis block.

\begin{definition}\label{def: permanent seq}
Let $G$ be an event and $r$ be a positive integer. A block or a sequence (of blocks)  is said to be {\em permanent after round $r$ under $G$} if, under event $G$, the block or sequence remains in all honest blockchains starting from round $r$.
\end{definition}

\begin{definition}\label{def: eps permanent seq}
Let $r$ be a positive integer. A block or a sequence (of blocks) is said to be {\em $\epsilon$-permanent after round $r$} if, there exists an event $G$ with $P(G)>1-\epsilon$ such that the block or sequence is permanent after round $r$ under $G$.
\end{definition}

\begin{lemma} \label{lemma: permanent ever after}
If a bock or {\color{black}a} sequence is $\epsilon$-permanent after round $r$, then it is also $\epsilon$-permanent after round $s$ for every $s>r$.
\end{lemma}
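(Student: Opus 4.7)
The plan is simply to reuse the same event $G$ that witnesses $\epsilon$-permanence after round $r$. By Definition \ref{def: eps permanent seq}, if the block or sequence is $\epsilon$-permanent after round $r$, there exists an event $G$ with $P(G) > 1 - \epsilon$ such that under $G$, the block or sequence remains in every honest blockchain in all rounds $r, r+1, r+2, \ldots$. I would take this very same $G$ as my candidate witness for $\epsilon$-permanence after round $s$.

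The only thing to verify is the straightforward set-theoretic observation that, for any $s > r$, the set of rounds $\{s, s+1, \ldots\}$ is a subset of $\{r, r+1, \ldots\}$. Hence ``the block or sequence is in every honest blockchain in all rounds $\ge r$'' (which holds under $G$) immediately implies ``the block or sequence is in every honest blockchain in all rounds $\ge s$'', matching the condition in Definition \ref{def: permanent seq} for permanence after round $s$. Since $P(G) > 1 - \epsilon$ is unchanged, applying Definition \ref{def: eps permanent seq} concludes that the block or sequence is $\epsilon$-permanent after round $s$. There is no substantive obstacle here; the lemma is just recording that $\epsilon$-permanence is monotone in the starting round, which follows directly from the definitions without invoking any property of the backbone protocol or any of the earlier typical-event bounds.
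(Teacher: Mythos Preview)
Your proposal is correct and matches the paper's treatment: the paper states this lemma without proof, treating it as an immediate consequence of Definitions~\ref{def: permanent seq} and~\ref{def: eps permanent seq}, which is exactly the unpacking you carry out.
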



\begin{theorem}\label{thm: common prefix}
(Common prefix theorem) Let $r, s ,k$ be integers satisfying $1\le s < r$ and $k\ge 2q(r-s)$. If by round $r$ an honest blockchain has a $k${\color{black}-deep} prefix, then  the prefix is permanent {\color{black}after round $r$ under $G[s,r]$ }.
\end{theorem}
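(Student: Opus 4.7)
The plan is to argue by contradiction, following the Garay--Kiayias--Leonardos common-prefix template adapted to this paper's typical-event framework $G[s,r]$. The anchor point will be the \emph{last honest} block in the common prefix of the two disputed chains, mirroring the role of $B_{u'}$ in the proof of Theorem~\ref{thm: blockchain quality}.

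Suppose, for contradiction, that under $G[s,r]$ there exist a round $r' \ge r$ and an honest blockchain $C'$ adopted by round $r'$ such that $C'$ does not contain the $k$-deep prefix $P$ of $C$. Let $B^{*}$ be the last honest block common to $C$ and $C'$; it is well-defined since the genesis block is honest and common. Let $i^*$ be its position in both chains and $r^*$ the round in which it was mined. Because $C'$ omits $P$, the divergence of $C$ and $C'$ occurs within the first $\mathrm{len}(C)-k$ blocks of $C$, so $B^*$ is at depth at least $k$ in $C$ by round $r$; Lemma~\ref{lemma: blockchain growth2 at least rounds} then forces $r^* < s$, hence $r^*+1 \le s$. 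Because $B^*$ is honest and broadcast by the end of round $r^*$, every honest miner has adopted a chain of length at least $i^*+1$ at round $r^*+1$; Lemmas~\ref{lemma: equal length} and~\ref{lemma: blockchain growth} extend this to the claim that every honest chain at every round $\tau \ge r^*+1$ has length at least $i^*+1$.

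Since $r^*+1 \le s$ and $r'+1 > r$, the subevent $E[r^*+1, r'+1]$ lies inside $G[s,r]$, so Lemma~\ref{lemma: good properties} gives $Y[r^*+1, r'+1] > Z[r^*+1, r'+1]$. The plan is to inject the uniquely successful rounds of $[r^*+1, r']$ into distinct adversarial blocks of $C \cup C'$ and derive the opposite inequality. For each uniquely successful round $\tau$, let $H_\tau$ be the unique honest block mined at $\tau$ and $p_\tau$ its position; the chain-length bound above gives $p_\tau > i^*$. Lemma~\ref{lemma: unique block} makes the positions $p_\tau$ pairwise distinct and guarantees that every chain's block at position $p_\tau$ is either $H_\tau$ or adversarial. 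If $H_\tau$ were simultaneously in $C$ and $C'$ at position $p_\tau$, then $H_\tau$ would be an honest common block deeper than $B^*$, contradicting the maximality of $B^*$. Hence at least one of $C, C'$ contains an adversarial block at position $p_\tau$; being descendants of $B^*$ at pairwise distinct positions these adversarial blocks are pairwise distinct and all mined after round $r^*$, yielding $Y[r^*+1, r'+1] \le Z[r^*+1, r'+1]$ and the desired contradiction.

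The main obstacle is the boundary case of uniquely successful rounds $\tau \ge r$ for which $p_\tau \ge \mathrm{len}(C)$, so that $C$ has no block at position $p_\tau$ and the injection must be routed through $C'$ alone. The injection is then inconclusive for those $\tau$ where $H_\tau$ happens to sit on $C'$ itself. I plan to handle this either by bounding the number of such ``lost'' rounds by the gap $\mathrm{len}(C')-\mathrm{len}(C)$ (which is controlled by Lemma~\ref{lemma: blockchain growth} together with the upper bound on $X$ in~\eqref{equ: X < (1+1/6del)qr}), or by first reducing to the minimal violating round $r'' \ge r$ so that the argument can be carried out with $r'=r''$, at which point all relevant positions fall within both chains and the matching above is watertight.
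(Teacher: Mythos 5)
Your overall strategy is the paper's: anchor at the last common honest block $B^*$, force $r^*<s$ via Lemma~\ref{lemma: blockchain growth2 at least rounds}, and derive $Z\ge Y$ over an interval containing $[s,r]$ to contradict \eqref{equ: Z < Y}. The gap is exactly the one you flag yourself, and neither of your proposed repairs closes it as stated. The injection of uniquely successful rounds into adversarial blocks genuinely fails when you compare $C$ (frozen at round $r$) against a chain $C'$ adopted at a later round $r'$: for a uniquely successful round $\tau\in\{r,\dots,r'-1\}$ the position $p_\tau$ can exceed $\mathrm{len}(C)$, and if $H_\tau$ happens to sit in $C'$ at that position you extract no adversarial block. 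Your first repair (charging the lost rounds to $\mathrm{len}(C')-\mathrm{len}(C)$) does not obviously terminate: that gap is bounded only by something like $X[r,r']+Z[r,r']$, and subtracting it from $Y[r^*+1,r'+1]$ destroys the strict inequality you need, since the typical event controls $Y-Z$ only over intervals containing $[s,r]$ and gives you no handle on the sub-interval $[r,r']$ alone. Your second repair is the right idea but incomplete as phrased: taking the minimal violating round $r''$ does not by itself make ``all relevant positions fall within both chains,'' because $C$ is still only as long as it was at round $r$.

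The paper's resolution is to change \emph{both} chains being compared: let $r_2$ be the minimal round at which some honest miner $P_2$ adopts a chain $C_2$ with $C_1^{\lceil k}\npreceq C_2$, and compare $C_2$ not with $C_1$ but with the chain $C'_2$ that $P_2$ adopted at round $r_2-1$, which still contains the prefix by minimality of $r_2$. Both $C_2$ and $C'_2$ are adopted by an honest miner at rounds $\ge r_2-1$, so by Lemmas~\ref{lemma: equal length} and~\ref{lemma: blockchain growth} both have length at least $l_u$ for every uniquely successful round $u\le r_2-2$; the injection then never falls off the end of either chain, and every such $u$ yields a distinct adversarial block mined in $\{r^*+1,\dots,r_2-1\}$, giving $Z[r^*+1,r_2-1]\ge Y[r^*+1,r_2-1]$ against \eqref{equ: Z < Y}. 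If you make that substitution, the rest of your argument --- distinctness of the positions $p_\tau$, the case analysis via Lemma~\ref{lemma: unique block}, and the adversariality of any common block beyond $B^*$ --- goes through essentially verbatim.
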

\begin{proof}
The intuition is based on Lemma \ref{lemma: unique block}: Once a block is mined in a uniquely successful round, a different block on any other {\color{black}blockchain} at the same position must be adversarial. If some adversarial miners wish to fork the blockchain, they must generate at least one adversarial block during every uniquely successful round after the common prefix. This can not be true because according to \eqref{equ: Z < Y}, the number of uniquely successful rounds must be greater than the number of adversarial blocks under the typical event.

To be precise, we prove the desired result by contradiction. Suppose blockchain $C_1$ whose length is great than $k$ is adopted by an honest miner $P_1$ by round $r$.  Contrary to the claim, assume $r_2 > r$ is the smallest round by which an honest miner $P_2$ adopts a blockchain $C_2$ such that $C_1^{\lceil k} \npreceq C_2$.
Let $C'_2$ be the blockchain $P_2$ adopted by round $r_2-1$. Note that $C_1^{\lceil k}\preceq C'_2$.

Assume the last honest block on the common prefix of $C'_2$ and $C_2$ is {\color{black}mined during} round $r^*$.  If $r^*>0$,  this common block of $C'_2$ and $C_2$ must  be more than $k$  {\color{black}deep in $C_1$ by round $r$.}  According to Lemma \ref{lemma: blockchain growth2 at least rounds}, we have {\color{black}$r^*< s$}, so that
\begin{align}\label{8.01}
[s,r] \subset [r^*+1, r_2-1].
\end{align}
On the other hand, if $r^*=0$, the last common block is the genesis block. Since $s\ge 1$, \eqref{8.01} also holds. An illustration of these chains and parameters is given in Figure \ref{fig:thmCommonPrefix}.

\begin{figure}\centering
\includegraphics[clip, trim=0cm 0cm 17.5cm 0cm,width=0.5\columnwidth]{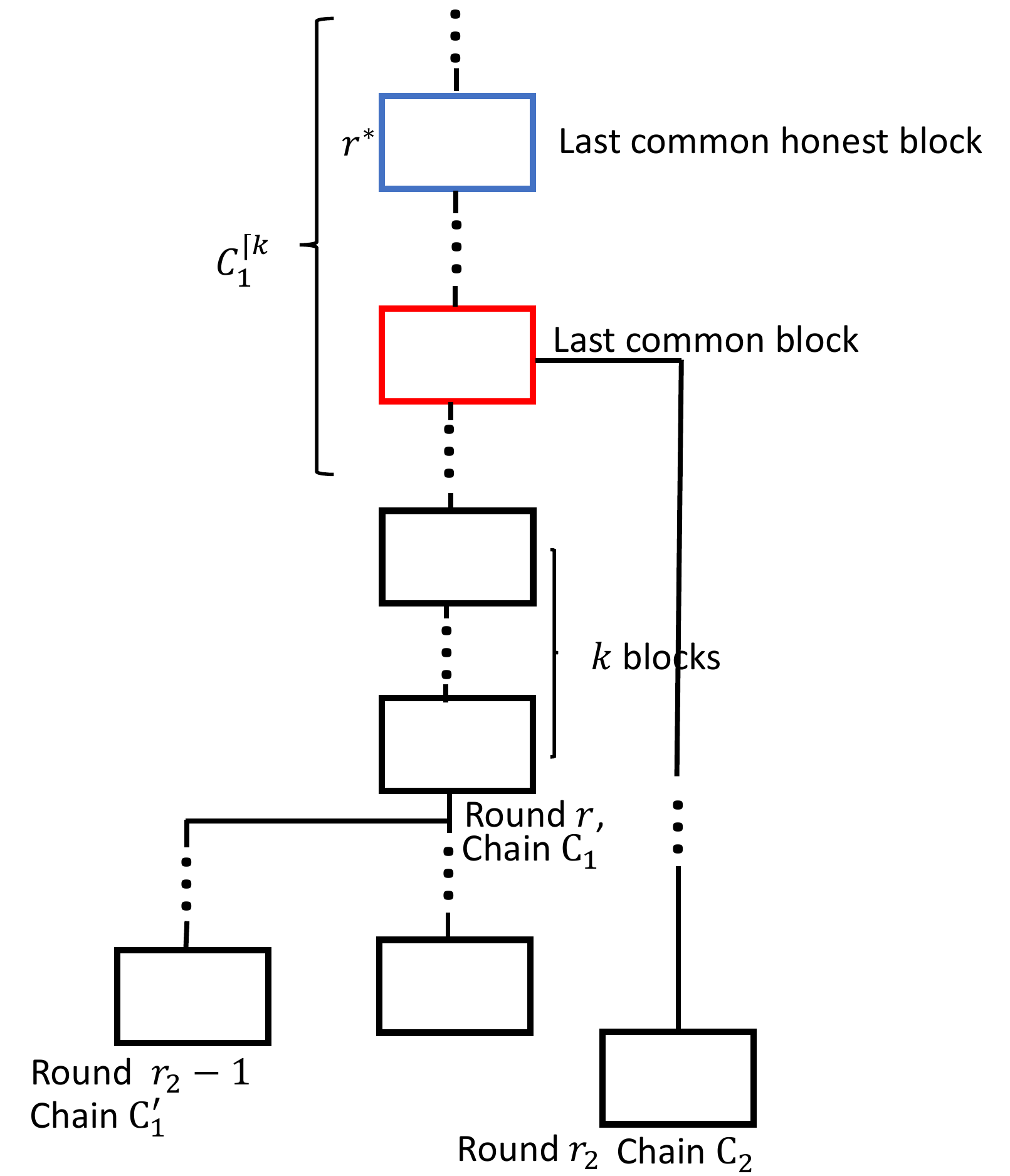}
\caption{Illustration to prove common prefix theorem\label{fig:thmCommonPrefix}.}
\end{figure}

By assumption $G[s,r]$, {\color{black}$E_2[s,r]$ also occurs, so that $Y[s,r]>0$} according to \eqref{equ: E2}. {\color{black}Hence,} there must be at least one uniquely successful round $u\in \{r^*+1, \ldots, r_2-2\}$. According to Lemma \ref{lemma: equal length}, all honest miners have the same {\color{black}chain} length.  Let $l_u$ denote one plus the length of {\color{black}the} honest miners' blockchains by round $u$. Suppose honest miner $P$ mine{\color{black}s} $B_u$ during round $u$.
According to Lemma \ref{lemma: unique block}, the $l_u$th block of every blockchain is either $B_u$ or an adversarial block.
Because $C_2$ and $C'_2$ are adopted by an honest miner after round $r_2-2$, they must be no shorter than $\max \{l_u: u\; \text{is a uniquely successful round in}\; \{r^*+1,\ldots,r_2-2\} \}$.

For every uniquely successful round {\color{black}$u$} in $\{r^*+1,\ldots,r_2-2\}$, if the $l_u$th blocks of $C_2$ and $C'_2$ are different, then at least one of them must be adversarial according to Lemma \ref{lemma: unique block}. On the other hand, if the $l_u$th block of $C_2$ and $C'_2$ are identical, the block must be in their common prefix, which must be adversarial by definition of $r^*$. Thus, at least one adversarial block is mined during each uniquely successful round, {\color{black}so that} $Z[r^* +1, r_2-1] \ge Y[r^*+1, r_2-1]$. However, since $[s,r]\subset [r^*+1, r_2-1]$,  $E[r^*+1,r_2-1]$ occurs under $G[s,r]$, {\color{black}so that} $Z[r^* +1, r_2-1] < Y[r^*+1, r_2-1]$ according to \eqref{equ: Z < Y}. {\color{black}Contradiction arises. Hence the proof of the theorem.}
\end{proof}


\section{The Prism backbone protocol} \label{sec: prism}
The Prism protocol is {\color{black}invented and fully described} in \cite{bagaria2018deconstructing}. Here we describe the Prism backbone with just enough detail{\color{black}s} to facilitate its analysis.
We assume {\color{black}$m+1$ genesis blocks are generated for the same number of} blockchains during round $0$ by honest miners. {\color{black}Blockchain $0$ is referred to as
the proposer blockchain. The remaining blockchains are voter blockchains.}
 A block is mined before knowing which blockchain it will be part of. Sortition relies on the range the nonce's hash lands in: If a miner find a nonce whose hash is within $[j\alpha, j\alpha + \alpha)$ for $j= 0,1,\ldots,m$, the mined block belongs to blockchain $i$. Mining difficulty can be adjusted by changing parameter $\alpha$.  This sortition scheme ensures the mining power of both honest and adversarial miners are evenly distributed across different voting blockchains and the proposer blockchain.

To certify its level, a new honest voter block for blockchain $j$ ($j=1,2,\ldots,m$) points to blockchain $j$'s maximum-level block by a {\em parent link} ({\color{black}ties are broken by} predefined rules).
To certify its level, an honest new proposer block includes the hash of a maximum-level block in the proposer blockchain and point to it by a {\em reference link}. In addition, an honest new proposer includes one reference link to every {\color{black}existing} block in both proposer and voter blockchains {\color{black}that} has not been pointed to by other reference links.

{\color{black}Following the bitcoin protocol, an honest} {\color{black}miner} decides {\color{black}each} main voter blockchain by the longest  blockchain  rule. The {\color{black}miner determines the} its main blockchain by votes from {\color{black}the main} voter blockchains. Let $B$ be an honest block on {\color{black}a} voter blockchain $j$. {\color{black}By} $B$'s ancestors {\color{black}we} mean all blocks on $B$'s path to the blockchain genesis block {\color{black}following parent links}. By saying $B$ votes for a level $l$, we mean $B$ chooses one proposer block among all proposer blocks at level $l$ according to a predefined rule, and points to its choice with a reference link. {\color{black}An honest} voter block vote{\color{black}s} for all levels which have not been voted by its ancestors.

A voter blockchain is allowed to vote only once for each level (more {\color{black}votes} from the same voter blockchain {\color{black}are} discarded). That is to say, proposer blocks {\color{black}on} the same level receive $m$ votes in total. At each level, the proposer block with most votes is elected as a leader block{\color{black}, with ties broken by a predefined rule}. The sequence of leader blocks over all levels is called the leader sequence.

{\color{black}A miner generates} its final ledger {\color{black}based on its} leader sequence. Given a leader sequence $B_0B_1\ldots B_l$, each leader block $B_i$ defines an epoch. Added to the ledger are the blocks which are pointed to by $B_i$, as well as other blocks reachable from $B_i$ but have not been included in previous epochs. The list of blocks are sorted topologically, {\color{black}with} ties broken by their contents. Since the blocks referenced are mined independently, there can be double spends or redundant transactions. An end user can create a valid ledger by keeping only the first transaction among double spends or redundant transactions.

For $j =0,1,\ldots,m$ and  $r = 1,2,\ldots$ , let $H_j[r]$ denote the {\color{black}total} number of honest blocks mined during round $r$ for blockchain $j$. {\color{black}Following} the definitions in {\color{black}Section \ref{sec: bitcoin backbone}}, for $j = 0,1,\ldots, m$ and $r = 1,2,\ldots$, we also define
\begin{align}
    X_j[r] =
    \begin{cases}
    1, \; \; & \text{if}\; H_j[r] \ge 1\\
    0, & \text{otherwise},
    \end{cases}
\end{align}
\begin{align}
    Y_j[r] =
    \begin{cases}
    1, \; \; & \text{if}\; H_j[r] = 1\\
    0, & \text{otherwise},
    \end{cases}
\end{align}
{\color{black}and let} $Z_j[r]$  be the {\color{black}total} number adversarial blocks {\color{black}mined} for blockchain $j$ during round $r$.


\begin{definition} \label{def: Ej}
For all integers $1\le s<r$ and  $0\le  j \le m$, define event
\begin{align}
    E_j[s,r] := E_{1,j}[s,r] \cap E_{2,j}[s,r] \cap E_{3,j}[s,r]
\end{align}
where
\begin{align}
   E_{1,j}[s,r] &:= \left\{ (1-\ep)\mathbb{E}[X_j[s,r]] < X_j[s,r] < (1+\ep)\mathbb{E}[X_j[s,r]]\right\}\\
   E_{2,j}[s,r] &:=\left\{ (1-\ep)\mathbb{E}[Y_j[s,r]] < Y_j[s,r]\right\}\\
   E_{3,j}[s,r] &:= \left\{Z_j[s,r] < \mathbb{E}[Z_j[s,r]] + \ep \mathbb{E}[X_j[s,r]]\right\}.
\end{align}
\end{definition}


{\color{black}We note} that for integers $0\le j \le m$ and $r\ge 1$,  $H_j[r], X_j[r], Y_j[r],$ and $Z_j[r]$ {\color{black} here are identically distributed as} $H[r], X[r], Y[r],$ and $Z[r]$ {\color{black} defined in Section \ref{sec: bitcoin backbone}}. Also, for $1\le s < r$, $E_j[s,r]$ is defined {\color{black} in} the same manner as $E[s,r]$. Thus,  the proposer blockchain and all voter blockchains satisfy similar properties as in Lemma \ref{lemma: good properties}:


\begin{lemma} \label{lemma: good properties for j}
(Typical properties lemma for proposer and voter blockchain) For all integers $1\le s < r$ and $0\le j \le m$, under event $E_j[s,r]$, the following holds{\color{black}:}
\begin{align}\label{equ: X_j < (1+1/6del)qr}
(1-\ep)q(r-s) < X_j[s,r]  < (1+\ep)q(r-s)
\end{align}
\begin{align} \label{equ: Y_j > (1-1/del)qr}
Y_j[s,r]  >(1-\epp{3})q(r-s)
\end{align}
\begin{align}\label{equ: Z_j < (1-2/3del)qr}
Z_j[s,r] < (1-\epq{2}{3})q(r-s)
\end{align}
\begin{align} \label{equ: Z_j < X_j2}
 Z_j[s,r]   < (1-\epp{2})X_j[s,r]
\end{align}
\begin{align} \label{equ: Z_j < Y_j}
Z_j[s,r] < Y_j[s,r].
\end{align}
\end{lemma}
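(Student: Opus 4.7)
The plan is to observe that Lemma \ref{lemma: good properties for j} is a direct consequence of Lemma \ref{lemma: good properties} applied separately to each blockchain $j \in \{0, 1, \ldots, m\}$. The proof is essentially a notational rewrite.

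First, I would invoke the two facts recorded immediately before the statement: (i) for every $j$ and every round $r$, the random variables $H_j[r], X_j[r], Y_j[r], Z_j[r]$ have the same marginal distributions as $H[r], X[r], Y[r], Z[r]$ of Section \ref{sec: bitcoin backbone}, because the sortition scheme distributes both honest and adversarial mining power uniformly across the $m+1$ blockchains, so that the per-blockchain success probabilities are exactly $q$, $(n-t)p(1-p)^{n-t-1}$, and $pt$ as before; (ii) the event $E_j[s,r]$ is defined in terms of $X_j, Y_j, Z_j$ in literally the same way that $E[s,r]$ is defined in terms of $X, Y, Z$ in Definition \ref{def: E}.

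Next, I would observe that the proof of Lemma \ref{lemma: good properties} uses only Propositions \ref{prop: X}, \ref{prop: Y}, and \ref{prop: Z<X} together with the standing assumption $q \le \xi/6$ from \eqref{equ: honest majority}. None of these ingredients depends on the identity of the blockchain; they depend only on the shared parameters $p$, $n$, $t$, and $\xi$. Substituting $X_j, Y_j, Z_j$ for $X, Y, Z$ throughout the proof of Lemma \ref{lemma: good properties}, each line carries over verbatim. In particular, the derivations of \eqref{equ: X < (1+1/6del)qr}--\eqref{equ: Z < Y} become proofs of \eqref{equ: X_j < (1+1/6del)qr}--\eqref{equ: Z_j < Y_j} with no modification beyond attaching the subscript~$j$.

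The only real subtlety, and the point where I would take care, is justifying that the analogues of Propositions \ref{prop: X}--\ref{prop: Z<X} hold with the same numerical constants for each indexed blockchain. This reduces to confirming that $\mathbb{E}[X_j[r]] = q$, $\mathbb{E}[Y_j[r]] = (n-t)p(1-p)^{n-t-1}$, and $\mathbb{E}[Z_j[r]] = pt$, which is guaranteed by the identical-distribution claim from the sortition scheme. Once this is in hand, the lemma follows immediately and no new calculation is required.
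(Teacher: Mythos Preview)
Your proposal is correct and matches the paper's approach exactly: the paper's proof is the single sentence ``For $j = 0,1,\ldots,m$, the lemma admits essentially the same proof as that for Lemma \ref{lemma: good properties},'' and you have simply spelled out why that sentence is justified.
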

\begin{proof}
{\color{black}For $j = 0,1,\ldots,m$, the lemma admits essentially the same proof as that for Lemma \ref{lemma: good properties}.}
\end{proof}
\begin{definition} \label{lemma: event Gj}
For all integers $1\le s < r$ and $0\le j\le m$, {\color{black}define blockchain $j$'s typical event with respect to $[s,r]$ as}
\begin{align}
    G_j[s, r] := \cap_{0\le a < s, b\ge 0}E_j[s-a,r+b].
\end{align}
\end{definition}

\begin{lemma} \label{lemma: prob of typical event j}
For all integers $1\le s<r$ and $0\le j\le m$,
\begin{align}
    P(G_j[s, r]) > 1-{\color{black}5\eta^{-2}}e^{-\eta(r-s)}
\end{align}
where $\eta$ is defined in \eqref{def: gamma}.
\end{lemma}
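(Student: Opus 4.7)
The plan is to mirror the proof of Lemma \ref{lemma: prob. of event G} essentially verbatim, since the processes $X_j$, $Y_j$, $Z_j$ for each fixed $j$ are identically distributed to the $X$, $Y$, $Z$ processes of Section \ref{sec: bitcoin backbone}. First I would establish the per-interval bound $P(E_j[s,r]) > 1 - 4e^{-\eta(r-s)}$ by noting that the Chernoff calculations in the proof of Lemma \ref{lemma: prob of E} depend only on the marginal distributions of the $X$, $Y$, $Z$ sequences (together with Propositions \ref{prop: X}, \ref{prop: Y}, \ref{prop: Z<X}, which only use $q$ and $p$). Since $X_j[r]\sim \mathrm{Bernoulli}(q)$, $Y_j[r]\sim \mathrm{Bernoulli}((n-t)p(1-p)^{n-t-1})$, and $Z_j[r]\sim \mathrm{Binomial}(t,p)$, the bounds on $P(E_{1,j}^c)$, $P(E_{2,j}^c)$, $P(E_{3,j}^c)$ carry over unchanged, giving the same exponential tail.

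Next I would apply stationarity of the $X_j$, $Y_j$, $Z_j$ sequences to deduce $P(E_j[s,r]) = P(E_j[1,r-s+1])$, so that the probability depends only on the interval length $r-s$. Then, by the union bound,
\begin{align}
P(G_j^c[s,r]) &= P\!\left(\bigcup_{0\le a < s,\, b\ge 0} E_j^c[s-a,r+b]\right)\\
&\le \sum_{k=0}^{\infty}(k+1)\,P(E_j^c[1,r-s+k+1])\\
&< \sum_{k=0}^{\infty}(k+1)\,4e^{-\eta(r-s+k)}\\
&= \frac{4}{(1-e^{-\eta})^2}e^{-\eta(r-s)},
\end{align}
exactly as in Lemma \ref{lemma: prob. of event G}.

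Finally I would invoke \eqref{equ: honest majority} and \eqref{def: gamma} to conclude $\eta \le \frac{1}{1080}$, then apply the elementary inequality $1-e^{-x}\ge \sqrt{4/5}\,x$ valid for $0\le x\le \frac{1}{1080}$ to replace $(1-e^{-\eta})^{-2}$ by $\frac{5}{4}\eta^{-2}$, yielding the claimed $5\eta^{-2}e^{-\eta(r-s)}$ bound. There is no real obstacle here; the only thing worth being careful about is confirming that the Chernoff-style estimates used inside the proof of Lemma \ref{lemma: prob of E} depended only on the marginal distributions of each individual coordinate sequence and not on any joint structure across $j$, so that the per-$j$ probability bound is indeed identical and the proof reduces to a reference back to the earlier argument.
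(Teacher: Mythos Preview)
Your proposal is correct and matches the paper's approach exactly: the paper simply states that for each $j=0,1,\ldots,m$ the lemma admits essentially the same proof as Lemma~\ref{lemma: prob. of event G}, relying on the fact that $X_j,Y_j,Z_j$ are identically distributed to $X,Y,Z$. Your write-up just spells out that reference in full detail.
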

\begin{proof}
{\color{black}For $j = 0,1,\ldots,m$, the lemma admits essentially the same proof as that for Lemma \ref{lemma: prob. of event G}.}
\end{proof}

Since the proposer blockchain and all voter blockchains grow in the same manner as how a bitcoin blockchain grows, the blockchain growth lemma and blockchain growth theorem remain valid{\color{black}:}
{\color{black}
\begin{lemma} \label{lemma: blockchain j growth}
Let $1 \le s < r$ and $0\le j \le m$ be integers. Suppose an honest voter blockchain $j$ is of length $l$ by round $s$. Then by round $r$, the length of every honest voter blockchain
$j$ is at least $l + X_j[s,r]$.
\end{lemma}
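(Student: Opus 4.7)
The plan is to mirror the two-step argument used for the bitcoin backbone case (Lemmas \ref{lemma: equal length} and \ref{lemma: blockchain growth}), but applied separately to each voter blockchain $j$. The key observation is that each voter blockchain evolves according to the same longest-chain rule as the bitcoin backbone, with the sortition scheme merely restricting which round-$r$ honest blocks count toward blockchain $j$'s growth. Since honest miners receive all published blocks by the end of the round in which they are broadcast, and since every honest miner adopts the longest voter blockchain $j$ as its working head for blockchain $j$, the proof reduces to copying the bitcoin argument with $X$, $H$ replaced by $X_j$, $H_j$.

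First, I would state (or invoke implicitly) a per-voter-blockchain analog of Lemma \ref{lemma: equal length}: by the end of every round, all honest miners' voter blockchain $j$ have identical length. This follows verbatim from the proof of Lemma \ref{lemma: equal length} because honest miners see the same set of blockchain-$j$ blocks at the end of each round and each applies the longest-chain rule to blockchain $j$ in isolation.

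Next, I would induct on $r$. For the base case $r = s+1$, all honest miners have a blockchain $j$ of length $l$ by round $s$; if $X_j[s] = 0$ there is nothing to prove, and if $X_j[s] = 1$ at least one honest blockchain-$j$ block is broadcast during round $s$, so each honest miner's blockchain $j$ grows to length at least $l+1$ by round $s+1$, which equals $l + X_j[s,s+1]$. For the inductive step, assume the claim holds through round $r_1$. If $X_j[r_1] = 0$ the bound is unchanged and still holds. If $X_j[r_1] = 1$, at least one honest miner produces a blockchain-$j$ block during round $r_1$, extending its blockchain $j$ to length at least $l + X_j[s,r_1] + 1$; by the equal-length property applied to round $r_1+1$, every honest miner's blockchain $j$ is at least this long by round $r_1+1$, which is $l + X_j[s, r_1+1]$.

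I do not foresee a hard step: the only mild subtlety is confirming that the equal-length property applies per blockchain despite the shared mining process, and this is already guaranteed by the design that the sortition unambiguously assigns each mined block to a single blockchain and that honest miners run the longest-chain rule on each blockchain independently. Consequently the proof is essentially identical to that of Lemma \ref{lemma: blockchain growth}, and I would simply note this rather than re-derive every detail.
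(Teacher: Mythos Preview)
Your proposal is correct and matches the paper's approach exactly: the paper's proof simply states that for each $j$ the lemma admits essentially the same proof as Lemma~\ref{lemma: blockchain growth}. Your explicit induction plus the per-chain analog of Lemma~\ref{lemma: equal length} is precisely that argument spelled out.
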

\begin{proof}
For $j = 0,1,\ldots,m$, the lemma admits essentially the same proof as that for Lemma \ref{lemma: blockchain growth}.
\end{proof}
}

\begin{lemma} \label{lemma: blockchain j growth2 at least rounds}
(Blockchain growth lemma for voter and proposer blockchain) For all integers $1 \le s < r$, $k \ge 2q(r-s)$ and $0\le j \le m$, under typical event $G_j[s, r]$, every honest miner's $k$-deep {\color{black}block} of blockchain $j$ by round $r$ must be mined before round $s$.
\end{lemma}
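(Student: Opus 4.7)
The plan is to mirror the proof of Lemma \ref{lemma: blockchain growth2 at least rounds} essentially verbatim, substituting in the per-blockchain quantities and events. Since Lemma \ref{lemma: good properties for j} gives the same deterministic bounds on $X_j[s,r]$, $Y_j[s,r]$, $Z_j[s,r]$ under $E_j[s,r]$ as Lemma \ref{lemma: good properties} does under $E[s,r]$, and since by Definition \ref{lemma: event Gj} the typical event $G_j[s,r]$ implies $E_j[s,r]$ (take $a=b=0$), all the ingredients are already in place.

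First I would observe that a $k$-deep block of blockchain $j$ by round $r$ is, by definition, one of the last $k$ blocks excluded from the tail of that blockchain; equivalently, it suffices to show that the number of blocks appended to an honest view of blockchain $j$ during rounds $\{s, \ldots, r-1\}$ is strictly less than $k$. Every block appended to blockchain $j$ during these rounds must have been mined during these rounds and sorted into blockchain $j$, so the growth of blockchain $j$ during $[s,r]$ is upper bounded by $X_j[s,r] + Z_j[s,r]$ (the honest and adversarial contributions for blockchain $j$).

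Next I would invoke Lemma \ref{lemma: good properties for j}, using \eqref{equ: X_j < (1+1/6del)qr} and \eqref{equ: Z_j < (1-2/3del)qr}, which hold under $E_j[s,r]$ and therefore under the stronger event $G_j[s,r]$. This yields
\begin{align}
X_j[s,r] + Z_j[s,r] &< (1+\ep)q(r-s) + (1-\epq{2}{3})q(r-s) \\
&< 2q(r-s) \\
&\le k,
\end{align}
where the last inequality is the hypothesis $k \ge 2q(r-s)$. Hence the blockchain can grow by at most $k-1$ blocks during $[s,r]$, and consequently every $k$-deep block of an honest miner's blockchain $j$ by round $r$ must already have been mined before round $s$.

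I do not expect any real obstacle here: the content of the lemma is identical to Lemma \ref{lemma: blockchain growth2 at least rounds}, merely with subscripts $j$ attached. The only conceptual point worth stating explicitly is that the sortition scheme makes $H_j, X_j, Y_j, Z_j$ i.i.d.\ copies of their unsubscripted counterparts for each fixed $j$, so that the per-blockchain typical event $E_j[s,r]$ indeed licenses the same inequalities used in the bitcoin case. No new probabilistic estimates are required.
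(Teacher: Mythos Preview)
Your proposal is correct and follows exactly the approach the paper intends: the paper's own proof simply states that the lemma ``admits essentially the same proof as that for Lemma \ref{lemma: blockchain growth2 at least rounds},'' and you have spelled out precisely that argument with the subscript $j$ attached, invoking \eqref{equ: X_j < (1+1/6del)qr} and \eqref{equ: Z_j < (1-2/3del)qr} from Lemma \ref{lemma: good properties for j} in place of their unsubscripted counterparts.
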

\begin{proof}
For $j = 0,1,\ldots,m$, the lemma admits essentially the same proof as that for Lemma \ref{lemma: blockchain growth2 at least rounds}.
\end{proof}

\begin{theorem} \label{thm: blockchain j growth}
(Blockchain growth theorem for voter and proposer blockchain)
Let $r, s, s_1$ be integers satisfying $1\le s_1\le s<r$. Let $j$ be an integer satisfying $0\le j\le m$. Then under typical event $G_j[s, r]$,   the length of every honest miner's blockchain $j$ {\color{black}must grow by at least $(1-\ep) q(r-s_1)$} during rounds $\{s_1, \ldots, r\}$.
\end{theorem}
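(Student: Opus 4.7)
The plan is to transfer the proof of Theorem \ref{thm: blockchain growth} to the per-blockchain setting, since the Prism backbone analysis has been organized so that every tool used in the bitcoin argument has a $j$-indexed counterpart: Lemma \ref{lemma: good properties for j} supplies the concentration bounds on $X_j, Y_j, Z_j$, and Lemma \ref{lemma: blockchain j growth} supplies the monotone chain-length growth for blockchain $j$. The two-line argument behind Theorem \ref{thm: blockchain growth} can therefore be re-run almost verbatim.

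First I would verify that $G_j[s, r] \subseteq E_j[s_1, r]$. Since $1 \le s_1 \le s$, the pair $(a, b) = (s - s_1, 0)$ satisfies $0 \le a < s$ and $b \ge 0$, so it is admissible in the intersection defining $G_j[s, r]$, and it picks out precisely $E_j[s - (s - s_1),\, r + 0] = E_j[s_1, r]$. Hence on $G_j[s, r]$ the bound \eqref{equ: X_j < (1+1/6del)qr} from Lemma \ref{lemma: good properties for j} yields
\begin{align}
X_j[s_1, r] > (1 - \ep)\, q\, (r - s_1).
\end{align}

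Next I would invoke Lemma \ref{lemma: blockchain j growth}. By round $s_1$ all honest miners' copies of blockchain $j$ share a common length $l$ (the $j$-indexed analog of Lemma \ref{lemma: equal length}, which the proof of Lemma \ref{lemma: blockchain j growth} already relies on), and by round $r$ every honest miner's blockchain $j$ has length at least $l + X_j[s_1, r]$. Combined with the previous display this gives growth of at least $(1 - \ep)\, q\, (r - s_1)$ over rounds $\{s_1, \ldots, r\}$, which is the theorem.

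The only place that requires care is the index bookkeeping in the first step: one must notice that $G_j[s, r]$ intersects $E_j$ over \emph{all} intervals containing $[s, r]$ and that $[s_1, r]$ is one such interval because $s_1 \le s$, so the concentration bound is automatically inherited at the longer interval $[s_1, r]$. Once this is observed, the remainder is a direct port of the bitcoin-backbone argument, and the statement holds for every $j \in \{0, 1, \ldots, m\}$ in exactly the same way.
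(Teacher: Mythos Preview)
Your proposal is correct and follows exactly the paper's approach: the paper's proof simply says the argument is the same as for Theorem \ref{thm: blockchain growth}, and you have spelled out precisely that argument with $j$-indexed objects, including the observation that $G_j[s,r]\subseteq E_j[s_1,r]$ needed to apply \eqref{equ: X_j < (1+1/6del)qr} on $[s_1,r]$. There is nothing to add.
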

\begin{proof}
{\color{black}For $j = 0,1,\ldots,m$, the lemma admits essentially the same proof as that for {\color{black}Theorem} \ref{thm: blockchain growth}.}
\end{proof}

Since the {\color{black}protocol} {\color{black}for} voter blockchains is identical to that of bitcoin, the blockchain quality theorem and  {\color{black}the} common prefix theorem hold for all voter blockchains.

\begin{theorem} \label{thm: blockchain j quality}
(Blockchain quality theorem for voter blockchain) Let $r,s,k, j$ be integers satisfying $1\le s < r$, $k\ge 2q(r-s)$ and $1\le j \le m$.  Suppose an honest {\color{black}blockchain has} more than $k$ blocks {\color{black}by round $r$}. Under event $G_j[s, r]$, at least $\epp{2}$ fraction of the last $k$ blocks of this blockchain $j$ are honest.
\end{theorem}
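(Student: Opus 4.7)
The plan is to mirror, essentially line for line, the proof of Theorem \ref{thm: blockchain quality}, replacing every object by its voter-blockchain analogue. Because the sortition scheme guarantees that the $H_j$, $X_j$, $Y_j$, $Z_j$ processes on voter blockchain $j$ have the same joint distribution as the bitcoin processes $H$, $X$, $Y$, $Z$, and because each voter blockchain is maintained under the bitcoin longest-chain rule, we already possess the $j$-indexed analogues of every ingredient used in the bitcoin proof: the typical-properties Lemma \ref{lemma: good properties for j}, the chain-growth Lemma \ref{lemma: blockchain j growth}, and the $k$-deep mining-time Lemma \ref{lemma: blockchain j growth2 at least rounds}.

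Concretely, I would fix an honest voter blockchain $C$ on blockchain $j$ with $len(C) > k$, set $u = len(C) - k$, and locate the last honest block $B_{u'}$ at or before position $u-1$ (with the genesis block serving as a fallback so that $u'$ is always well defined). Let $r^*$ be the round in which $B_{u'}$ is mined. Lemma \ref{lemma: blockchain j growth2 at least rounds} forces $r^* < s$, so choosing $a = s - r^* - 1 \ge 0$ and $b = 0$ in the definition of $G_j[s,r]$ shows that $E_j[r^*+1, r]$ is implied by $G_j[s,r]$. Writing $L = len(C) - u' - 1 \ge k$ and letting $x$ denote the number of honest blocks among the last $k$ positions of $C$, the positions strictly between $u'$ and $u$ are all adversarial by maximality of $u'$, so the number of adversarial blocks in the suffix $B_{u'+1}, \ldots, B_{len(C)-1}$ is exactly $L - x$. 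Bounding this count above by $Z_j[r^*+1, r]$ and applying \eqref{equ: Z_j < X_j2} together with Lemma \ref{lemma: blockchain j growth} (which gives $X_j[r^*+1, r] \le L$ since the chain has grown by at least $X_j[r^*+1, r]$ blocks after $B_{u'}$) yields $L - x < (1 - \epp{2}) L \le L - \epp{2} k$, and hence $x > \epp{2} k$.

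There is no substantive obstacle: the argument is a direct transport of the bitcoin proof to the $j$-subscripted setting. The only point worth double-checking is the statistical equivalence of the voter-blockchain and bitcoin mining processes, which is already guaranteed by the sortition scheme and is precisely the fact that justified stating Lemmas \ref{lemma: good properties for j}--\ref{lemma: blockchain j growth2 at least rounds} in the same form as their bitcoin counterparts.
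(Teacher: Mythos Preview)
Your proposal is correct and matches the paper's approach exactly: the paper's own proof consists of the single sentence ``For $j = 0,1,\ldots,m$, the lemma admits essentially the same proof as that for Theorem \ref{thm: blockchain quality},'' and what you have written is precisely that proof transported to the $j$-indexed setting. There is nothing to add.
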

\begin{proof}
{\color{black}For $j = 0,1,\ldots,m$, the lemma admits essentially the same proof as that for Theorem \ref{thm: blockchain quality}.}
\end{proof}

\begin{theorem}\label{thm: common prefix j}
(Common prefix theorem for voter blockchain) Let $r, s ,k, j$ be integers satisfying $1\le s < r$, $k\ge 2q(r-s)$ and $1\le j \le m$. If by round  $r$ {\color{black}an honest} voter blockchain $j$ has a $k$ prefix, then  the prefix is {\color{black}permanent after round $r$ under $G_j[s,r]$.}

\end{theorem}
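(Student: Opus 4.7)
The plan is to replicate, nearly verbatim, the argument used for Theorem \ref{thm: common prefix}, with every quantity replaced by its blockchain $j$ analog. Since voter blockchain $j$ is governed by exactly the bitcoin backbone rules (longest chain rule, parent links, honest miners extending the longest chain of blockchain $j$ they have seen), Lemmas \ref{lemma: equal length} and \ref{lemma: unique block} apply \emph{verbatim} to blockchain $j$, and Lemmas \ref{lemma: blockchain j growth}, \ref{lemma: blockchain j growth2 at least rounds} and the properties in Lemma \ref{lemma: good properties for j} furnish exactly the quantitative bounds that the original proof relied on.

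First, I would argue by contradiction: suppose an honest miner $P_1$ adopts a blockchain $C_1$ for voter blockchain $j$ by round $r$ with $\mathrm{len}(C_1)>k$, but the $k$-deep prefix $C_1^{\lceil k}$ is not permanent after round $r$ under $G_j[s,r]$. Let $r_2>r$ be the smallest round at which some honest miner $P_2$ adopts a blockchain $C_2$ for voter blockchain $j$ with $C_1^{\lceil k}\npreceq C_2$, and let $C'_2$ be the voter blockchain $j$ adopted by $P_2$ by round $r_2-1$, for which $C_1^{\lceil k}\preceq C'_2$ still holds.

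Second, I would identify the last honest block on the common prefix of $C_2$ and $C'_2$, mined in round $r^*$ (if no such honest block exists then $r^*=0$, the genesis). When $r^*>0$, this common block is more than $k$-deep in $C_1$ by round $r$, so Lemma \ref{lemma: blockchain j growth2 at least rounds} applied to blockchain $j$ gives $r^*<s$; when $r^*=0$, $r^*<s$ holds trivially. Either way,
\begin{align}
[s,r]\subset[r^*+1,r_2-1].
\end{align}

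Third, I would use $G_j[s,r]$ to conclude the contradiction. Since $E_{2,j}[s,r]$ occurs, $Y_j[s,r]>0$, so there exists at least one uniquely successful round $u\in\{r^*+1,\ldots,r_2-2\}$ for voter blockchain $j$. By Lemma \ref{lemma: equal length} applied to voter blockchain $j$, all honest miners of blockchain $j$ share a common chain length $l_u-1$ by round $u$; by Lemma \ref{lemma: unique block}, the $l_u$th block of both $C_2$ and $C'_2$ is either the unique honest block $B_u$ mined at round $u$ or an adversarial block. Following exactly the case analysis in the proof of Theorem \ref{thm: common prefix} (the two chains either disagree at position $l_u$, forcing at least one adversarial block, or agree at position $l_u$, in which case by definition of $r^*$ that block must be adversarial), we obtain an adversarial block of blockchain $j$ for every uniquely successful round in $\{r^*+1,\ldots,r_2-2\}$, so $Z_j[r^*+1,r_2-1]\ge Y_j[r^*+1,r_2-1]$. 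But $[s,r]\subset[r^*+1,r_2-1]$, and the definition of $G_j[s,r]$ includes $E_j[r^*+1,r_2-1]$, so by \eqref{equ: Z_j < Y_j} we have $Z_j[r^*+1,r_2-1]<Y_j[r^*+1,r_2-1]$, a contradiction.

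The main obstacle is really only bookkeeping: one must check that Lemma \ref{lemma: unique block} transfers to a single voter blockchain despite the presence of other voter blockchains and the proposer blockchain. This is harmless, however, because the sortition mechanism in Prism assigns each mined block to exactly one blockchain, so all reasoning about the length and forking structure of voter blockchain $j$ is determined entirely by the blocks assigned to blockchain $j$, and the honest chain rule for blockchain $j$ operates independently of activity on the other blockchains. Once this observation is made, the proof is a line-by-line port of Theorem \ref{thm: common prefix}.
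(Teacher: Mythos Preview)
Your proposal is correct and takes exactly the approach the paper takes: the paper's proof is the single sentence ``For $j = 0,1,\ldots,m$, the lemma admits essentially the same proof as that for Theorem~\ref{thm: common prefix},'' and you have simply written out that port explicitly. Your added remark about sortition ensuring that voter blockchain $j$ behaves independently is a reasonable justification for why the transfer is legitimate.
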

\begin{proof}
{\color{black}For $j = 0,1,\ldots,m$, the lemma admits essentially the same proof as that for Theorem \ref{thm: common prefix}.}
\end{proof}


{\color{black}Since} the leader sequence of {\color{black}the} proposer blockchain is decided by votes instead of the longest  blockchain  rule, the blockchain quality theorem and common prefix theorem do not immediately extend to the leader sequence of {\color{black}the} proposer blockchain.


\begin{definition} \label{d:Rlek}
    We define $R_l$ as the round in which the first proposer block on level $l$ is mined.
    We also define
    \begin{align} \label{def: epsilon}
    \epsilon_k=6m\eta^{-2}e^{-\eta \frac{k}{2q}},  {\color{black}\quad k = 1,2,\ldots }
    \end{align}
    where $\eta$ is given by \eqref{def: gamma}.
\end{definition}

\begin{definition}
We let $\textbf{LedSeq}_l(r)$ denote the proposer blockchain's leader sequence up to level $l$ by round $r$.
\end{definition}

\begin{lemma} \label{lemma: one honest block fix all}
Consider a given level $l$. {\color{black}L}et  $k$ be a positive integer.
If by some round  $r > \max\left\{\frac{k}{2q}, R_l+1\right\}$, every voter blockchain  contains at least one honest block mined after round $R_l$ which is at least $k$-deep, then {\color{black}$\textbf{LedSeq}_l(r)$} is  $\epsilon_k$-permanent after round $r$.
\end{lemma}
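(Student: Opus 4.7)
The plan is to lift the common prefix theorem for voter blockchains (Theorem~\ref{thm: common prefix j}) to permanence of the proposer leader sequence by using the fact that each voter blockchain casts exactly one counted vote per level. Pick $s := r - \lfloor k/(2q)\rfloor$; the hypothesis $r > k/(2q)$ forces $s \ge 1$, and $r - s \le k/(2q)$ gives $k \ge 2q(r-s)$, so Theorem~\ref{thm: common prefix j} applies on $[s,r]$ to every voter blockchain $j$. Define $G := \bigcap_{j=1}^m G_j[s, r]$. Lemma~\ref{lemma: prob of typical event j} together with a union bound over the $m$ voter blockchains gives
\begin{align}
    P(G^c) \le \sum_{j=1}^{m} P(G_j^c[s,r]) < 5m\eta^{-2} e^{-\eta(r-s)}.
\end{align}
Since $r - s \ge k/(2q) - 1$ and $\eta \le 1/1080$ (so $e^{\eta} < 6/5$), the right-hand side is bounded by $6m\eta^{-2} e^{-\eta k/(2q)} = \epsilon_k$.

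Next, I would check that $\textbf{LedSeq}_l(r)$ is permanent after round $r$ under $G$. Fix $j \in \{1,\ldots,m\}$ and let $B_j$ be the honest $k$-deep block of voter blockchain $j$ mined after $R_l$ supplied by the hypothesis. Because $B_j$ sits inside the $k$-deep prefix of an honest voter chain at round $r$, Theorem~\ref{thm: common prefix j} makes the entire prefix of voter blockchain $j$ up to and including $B_j$ permanent after round $r$. Since $B_j$ was mined after $R_l \ge R_{l'}$ for every $l' \le l$, at least one proposer block at level $l'$ already exists in the network when $B_j$ is mined; because $B_j$ is honest and votes for every level not previously voted on by its ancestors, voter blockchain $j$'s unique counted vote for level $l'$ resides either in $B_j$ or in one of its ancestors, hence inside the permanent prefix. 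Applying this to all $m$ voter blockchains, the tally received by every proposer block at every level $l' \le l$ is locked in under $G$; combined with deterministic tie-breaking, each level's leader, and therefore $\textbf{LedSeq}_l(r)$, is fixed once and for all.

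The main obstacle is the protocol-level bookkeeping required to rule out any post-$r$ activity that could disturb these tallies: an adversary could try to extend voter blockchain $j$ along a different fork, inject additional votes for the same level, or mine new proposer blocks at some level $l' \le l$. Each of these is neutralized by the same pair of ingredients: (i) Theorem~\ref{thm: common prefix j} forces every future honest adoption of voter blockchain $j$ to extend $B_j$'s prefix verbatim, so the vote already stored there cannot be removed or rewritten; and (ii) the Prism rule that discards any further vote from a voter blockchain for a level it has already voted on ensures that new proposer blocks at levels $\le l$ receive zero additional votes from any voter blockchain. Together these close the argument and yield the $\epsilon_k$-permanence claim.
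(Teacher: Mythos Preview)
Your proposal is correct and matches the paper's proof essentially step for step: the same choice $s = r - \lfloor k/(2q)\rfloor$, the same event $G = \bigcap_{j=1}^m G_j[s,r]$, the same appeal to Theorem~\ref{thm: common prefix j} to freeze the prefix through $B_j$, and the same probability computation via Lemma~\ref{lemma: prob of typical event j}, a union bound over $m$ chains, and the estimate $e^\eta < 6/5$. Your treatment of the protocol-level bookkeeping (why the vote tally at each level $l'\le l$ cannot change after round $r$) is in fact more explicit than the paper's, which simply asserts that honest $B_j$ ``would have voted for all levels up to level $l$'' and leaves the one-vote-per-level rule implicit.
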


\begin{proof}
{\color{black}Let
\begin{align} \label{9.-25}
    s=r-\left\lfloor \frac{k}{2q} \right\rfloor,
\end{align}which}
must be a positive integer because $2qr > k$. Define
\begin{align}
    G =\cap_{j=1,2,\ldots,m}G_j[s, r].
\end{align}

For $j=1,\dots,m$, let $B_j$ denote an honest block  on {\color{black}an honest} voter blockchain $j$ {\color{black}which} is mined after round $R_l$ and is at least $k$-deep by round $r$. According to Theorem \ref{thm: common prefix j}, $B_j$ and its {\color{black}ancestors} are permanent after round $r$ under $G_j[s, r]$  . {\color{black}Hence,} $B_1,\ldots,B_m$ and all their {\color{black}ancestors} must be permanent after round $r$ under $G$. Thus, all voter blockchains' voting are permanent. Since $B_1,\ldots,B_m$ are honest, they would have voted for all levels up to level $l$ of the proposer blockchain by the voting rule. Hence, the leader block sequence up to level $l$ is permanent after round $r$ under $G$.
Note that
\begin{align}
P(G) & = 1-P(\cup_{j=1,2,\ldots,m}G^c_j[s, r]) \\
& \ge 1 - \sum_{j=1}^m P(G^c_j[s, r])\label{9.-1}\\
& = 1-mP(G^c_1[s,r]) \label{9.-05}\\
& > 1-  {\color{black}5m\eta^{-2}}e^{-\eta(r-s)},  \label{9.0}
\end{align}
where \eqref{9.-1} is due to {\color{black}the} union bound, \eqref{9.-05} is due to symmetry of all {\color{black}voter} blockchains, \eqref{9.0} is due to Lemma \ref{lemma: prob of typical event j}.
By \eqref{9.-25}, {\color{black}we have} $2q(r-s+1)>k$, {\color{black}so that \eqref{9.0} becomes}
\begin{align}
P(G) & > 1-  {\color{black}5m\eta^{-2}}e^{-\eta  \frac{k}{q}+\eta} \\
&{\color{black} > 1-  {\color{black}6m\eta^{-2}}e^{-\eta \frac{k}{2q}}} \label{9.15}\\
& = 1-\epsilon_k
\end{align}
where
\eqref{9.15} is due to $e^{\eta}< \frac{6}{5}$.
Thus, the leader block sequence up to level $l$ is $\epsilon_k$-permanent after round $r$.
\end{proof}

\begin{lemma} \label{lemma: get honest block}
{\color{black}If positive integers $R$, $r$, and $k$ satisfy}
\begin{align} \label{def: r 8.1}
r \ge  \frac{2(k+1)}{(1-\ep)\xi q}+1,
\end{align}
{\color{black}then right before round $R+r$,} with probability at least $1-\epsilon_k$, all honest voter blockchains have an honest block mined after round $R$ which is at least $k$ deep.
\end{lemma}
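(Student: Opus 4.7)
The plan is to adapt the blockchain-quality contradiction argument to each voter blockchain. First I would define the joint typical event $G=\cap_{j=1}^{m}G_j[R+1,R+r]$. Lemma~\ref{lemma: prob of typical event j} together with the union bound gives $P(G)>1-5m\eta^{-2}e^{-\eta(r-1)}$, and the hypothesis $r\ge 2(k+1)/((1-\ep)\xi q)+1$ implies $r-1\ge k/(2q)$, so $P(G)>1-5m\eta^{-2}e^{-\eta k/(2q)}\ge 1-\epsilon_k$.

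Working under $G$, I would fix an arbitrary $j\in\{1,\ldots,m\}$ and an arbitrary honest miner's voter blockchain $j$, denoted $C_j$, of length $l$ by round $R+r$. Let $\lambda$ be the common length of the honest voter blockchain $j$ by round $R+1$ (Lemma~\ref{lemma: equal length}). By monotonicity of the honest chain length together with Lemma~\ref{lemma: equal length}, an honest block of $C_j$ at position $p$ was mined after round $R$ if and only if $p>\lambda$. Suppose for contradiction that no honest block of $C_j$ lies at a position in $[\lambda+1,l-k]$. Then the last honest block of $C_j$ at position $\le l-k$---call it $B_{u''}$, defaulting to the genesis block if none exists---sits at some $u''\le\lambda$ and is mined at round $r^{**}\le R$.

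By the choice of $u''$, every block of $C_j$ at positions $u''+1,\ldots,l-k$ is adversarial. Because each block's round strictly exceeds that of its parent, these $l-k-u''$ adversarial blocks are all mined during $[r^{**}+1,R+r-1]$, giving
\begin{align*}
l-k-u''\le Z_j[r^{**}+1,R+r].
\end{align*}
Lemma~\ref{lemma: blockchain j growth} yields $l-u''\ge X_j[r^{**}+1,R+r]$. Because $r^{**}+1\le R+1$, $G_j[R+1,R+r]$ implies $E_j[r^{**}+1,R+r]$, which gives $X_j[r^{**}+1,R+r]>(1-\ep)q\Delta$ and $Z_j[r^{**}+1,R+r]<E[Z_j[r^{**}+1,R+r]]+\ep q\Delta$ with $\Delta=R+r-r^{**}-1\ge r-1$. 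Combining with $E[Z_j]\le(1-\xi)q\Delta/(1-q)\le(1-\xi)q\Delta/(1-\ep)$ from Proposition~\ref{prop: Z<X} and $q\le\ep$ yields $(1-2\ep)q\Delta-(1-\xi)q\Delta/(1-\ep)<k$. A direct computation with $\ep=\xi/6$ shows the left-hand side is at least $q\Delta\,\xi/(2(1-\ep))$, so $\Delta<2(1-\ep)k/(\xi q)$. However $\Delta\ge r-1\ge 2(k+1)/((1-\ep)\xi q)$ exceeds $2(1-\ep)k/(\xi q)$ (since $k+1>k(1-\ep)^2$), producing the contradiction.

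The main obstacle is the monotonicity argument used to conclude $r^{**}\le R$: any honest block of $C_j$ at position $\le\lambda$ must, by Lemma~\ref{lemma: equal length} and the non-decreasing nature of the common honest chain length, have been mined by round $R$. Once that is in place, the remainder reduces to the algebraic step of lower-bounding $(1-2\ep)q-pt$ by $q\xi/(2(1-\ep))$, which relies on the tight estimate $E[Z_j]\le(1-\xi)q/(1-q)$ from Proposition~\ref{prop: Z<X} combined with the honest-majority assumption $q\le\xi/6$ to produce the $\xi$ factor in the denominator of the lemma's hypothesis.
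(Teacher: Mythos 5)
Your proof is correct, but it takes a genuinely different route from the paper's. The paper composes two already-proved black boxes: it first applies the blockchain growth theorem (Theorem~\ref{thm: blockchain j growth}) over $[R,R+r]$ to show each voter chain gains more than $\ell=\lceil 2k/\xi\rceil$ blocks (so the last $\ell$ blocks are all post-$R$), and then applies the blockchain quality theorem (Theorem~\ref{thm: blockchain j quality}) with window $s_1=\lfloor k/(q\xi)\rfloor$ to extract at least $\tfrac{\xi}{2}\ell\ge k$ honest blocks among those last $\ell$, the earliest of which is $k$-deep; its typical event is $\cap_j G_j[R+r-s_1,R+r]$. You instead unroll the quality-style counting argument from scratch: you encode ``mined after round $R$'' as ``position exceeds $\lambda$,'' the common honest chain length at round $R+1$ (both directions of that equivalence do hold, by Lemma~\ref{lemma: equal length} and monotonicity of the honest chain length), assume no honest block sits in positions $(\lambda, l-k]$, and derive $X_j[r^{**}+1,R+r]-Z_j[r^{**}+1,R+r]\le k$ from Lemma~\ref{lemma: blockchain j growth} and the adversarial block count, which contradicts $E_{1,j}$ and $E_{3,j}$ over an interval of length $\Delta\ge r-1$. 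Your algebra checks out: $(1-\tfrac{\xi}{3})-\tfrac{1-\xi}{1-\xi/6}=\tfrac{\xi/2+\xi^2/18}{1-\xi/6}\ge\tfrac{\xi}{2(1-\ep)}$, and $\tfrac{2(k+1)}{(1-\ep)\xi q}>\tfrac{2(1-\ep)k}{\xi q}$ closes the contradiction; the probability bound also works since $r-1\ge k/(2q)$ and $5m\eta^{-2}\le 6m\eta^{-2}$. What your approach buys is self-containment and a cleaner handling of the ``mined after round $R$'' requirement (a single position threshold $\lambda$ instead of the paper's two nested windows and auxiliary parameters $\ell,s_1$), at the cost of re-proving the quality mechanism inline; your single event $\cap_j G_j[R+1,R+r]$ spans a longer interval than the paper's, so your failure probability is in fact slightly smaller. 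Two harmless caveats: your bound $l-k-u''\le Z_j[r^{**}+1,R+r]$ uses the same convention as the paper's own proof of Theorem~\ref{thm: blockchain quality} (blocks extending a round-$r^{**}$ block are counted from round $r^{**}+1$), and your $l-u''\ge X_j$ is off by one from the paper's length bookkeeping in the direction that only strengthens your inequality.
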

\begin{proof}
Let
\begin{align}
    \ell = \left\lceil  \frac{2k}{\xi} \right\rceil. \label{8.111}
\end{align}
Let
\begin{align}
    s_1 = \left\lfloor \frac{k}{q\xi}\right\rfloor.  \label{7.95}
\end{align}
Then
\begin{align}
    \ell&\ge \frac{2k}{\xi} \\
    & \ge 2q \left\lfloor \frac{k}{q\xi}\right\rfloor \\
    & = 2qs_1. \label{8.105}
\end{align}

According to the Theorem \ref{thm: blockchain j growth}, under event $G_j[R, R+r]$, an honest {\color{black}voter} blockchain $j$'s growth during $\{R, R+1,\ldots, R+r-1\}$ is at least
\begin{align}
    (1-\ep)qr & \ge \frac{2k}{\xi}+1 \label{8.112} \\
    & > \ell, \label{8.113}
\end{align}
where \eqref{8.112} is due to \eqref{def: r 8.1} and \eqref{8.113} is due to \eqref{8.111}.

According to Theorem \ref{thm: blockchain j quality} {\color{black}and \eqref{8.105}},  under event $G_j[R + r - s_1, R + r]$, at least $\epp{2}$ fraction of the last $\ell$ blocks of {\color{black}this} voter blockchain $j$ are honest. {\color{black}Because $\frac{\xi}{2}\ell \ge k$}, the earliest of these  honest blocks must be at least $k$ deep.

By \eqref{def: r 8.1} and \eqref{7.95}, it is easy to see that $s_1\le r$. Hence $ G_j[R+r-s_1, R+r]\subset G_j[R, R+r]$. Define
\begin{align}
    G = \cap_{j=1,2,\ldots,m}G_j[R+r-s_1, R+r] .
\end{align}
Under event $G$, by round $R+r$, every {\color{black}honest} voter blockchain has an honest block mined after round $R$ which is at least $k$ deep. {\color{black}The probability of the typical event can be lower bounded:}
\begin{align}
P(G) & = P(\cap_{j=1,2,\ldots,m}G_j[R+r-s_1, R+r]) \\
& = 1-P(\cup_{j=1,2,\ldots,m}G^c_j[R+r-s_1, R+r])\label{8.0}\\
& \ge 1-mP(G^c_1[R+r-s_1, R+r]) \label{8.1}\\
&> 1- {\color{black}5}m\eta^{-2}e^{-\eta s_1} \label{8.2} \\
& > 1- 6m\eta^{-2}e^{-\eta s_1},
\end{align}
where \eqref{8.1} is due to the union bound and symmetry of all voter blockchains and \eqref{8.2} is due to  Lemma \ref{lemma: prob of typical event j}. Moreover,
\begin{align}
    qs_1 & > q\left( \frac{k}{q\xi}-1\right) \label{8.21}\\
    & = \frac{k}{\xi}-q\\
    &{\color{black} > k - \frac{1}{6}} \\
    & > \frac{k}{2},
\end{align}
{\color{black}where \eqref{8.21} is due to \eqref{7.95}.}
Therefore,
\begin{align}
P(G) &  > 1- {\color{black}6m\eta^{-2}}e^{-\eta \frac{k}{2q}} \\
& = 1- \epsilon_k,
\end{align}

In summary, by round $R+r$, with probability at least $1-\epsilon_k$, {\color{black}all honest} voter blockchains have an honest block mined after round $R$ which is at least $k$ deep.
\end{proof}

\begin{theorem} \label{thm: permanent leader block}
Fix {\color{black}$\epsilon \in (0,1)$}. Let $R_l$ be {\color{black}the round during which the first proposer block on level $l$ is mined}. {\color{black}For every integer}
\begin{align} \label{def: r thm permanent}
    r \ge  \frac{5}{(1-\ep) \xi \eta}\log {\color{black}\frac{12m\eta^{-2}}{\epsilon}},
\end{align}
{\color{black}the leader sequence up to level $l$} is $\epsilon-$permanent after round $R_l + r$.
\end{theorem}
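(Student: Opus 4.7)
The plan is to combine Lemma \ref{lemma: get honest block} and Lemma \ref{lemma: one honest block fix all} via a union bound. The first lemma delivers, with high probability, a configuration in which every voter blockchain has a deep honest block mined after $R_l$ by round $R_l+r$; the second lemma then converts such a configuration into $\epsilon_k$-permanence of the leader sequence up to level $l$. Chaining these two high-probability events yields the claimed $\epsilon$-permanence for a suitable choice of depth parameter $k$.

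Concretely, I would first calibrate $k$ so that the probability budget is exactly right. Since $\epsilon_k = 6m\eta^{-2}e^{-\eta k/(2q)}$, choosing
\begin{align}
    k = \left\lceil \frac{2q}{\eta}\log\frac{12m\eta^{-2}}{\epsilon}\right\rceil
\end{align}
ensures $\epsilon_k \le \epsilon/2$. Next, apply Lemma \ref{lemma: get honest block} with $R=R_l$ and this $k$: it produces an event $G_1$ with $P(G_1)\ge 1-\epsilon_k$ such that, under $G_1$, by round $R_l+r$ every honest voter blockchain already contains an honest block mined after round $R_l$ that is at least $k$-deep. Then invoke Lemma \ref{lemma: one honest block fix all} at round $R_l+r$ with the same $k$: it produces an event $G_2$ with $P(G_2)\ge 1-\epsilon_k$ under which the leader sequence \textbf{LedSeq}$_l$ remains in every honest blockchain starting from round $R_l+r$. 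Setting $G := G_1\cap G_2$, the union bound gives $P(G) \ge 1-2\epsilon_k \ge 1-\epsilon$, and under $G$ the leader sequence is permanent after round $R_l+r$. By Definition \ref{def: eps permanent seq} this is exactly $\epsilon$-permanence after round $R_l+r$.

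The main obstacle is the arithmetic check that the stated lower bound on $r$ is actually strong enough to satisfy the hypothesis $r \ge \frac{2(k+1)}{(1-\ep)\xi q}+1$ of Lemma \ref{lemma: get honest block} and the hypothesis $R_l+r > \max\{k/(2q),R_l+1\}$ of Lemma \ref{lemma: one honest block fix all}. Substituting $k \le \frac{2q}{\eta}\log\frac{12m\eta^{-2}}{\epsilon}+1$ into $\frac{2(k+1)}{(1-\ep)\xi q}+1$ yields a requirement of roughly $\frac{4}{(1-\ep)\xi\eta}\log\frac{12m\eta^{-2}}{\epsilon}$ plus additive terms of order $\frac{1}{(1-\ep)\xi q}$. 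The stated bound $\frac{5}{(1-\ep)\xi\eta}\log\frac{12m\eta^{-2}}{\epsilon}$ leaves a slack of $\frac{1}{(1-\ep)\xi\eta}\log\frac{12m\eta^{-2}}{\epsilon}$, and since $\eta \le 1/1080$ forces the logarithm to be sizable and $1/\eta = 180/(\xi^2 q) \gg 1/q$, this slack comfortably absorbs the lower-order terms as well as the weaker condition of Lemma \ref{lemma: one honest block fix all}, completing the argument.
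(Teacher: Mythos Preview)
Your proposal is correct and follows essentially the same approach as the paper: choose $k=\lceil \frac{2q}{\eta}\log\frac{12m\eta^{-2}}{\epsilon}\rceil$ so that $\epsilon_k\le \epsilon/2$, apply Lemma~\ref{lemma: get honest block} to secure the deep honest voter blocks, then Lemma~\ref{lemma: one honest block fix all} to convert this into $\epsilon_k$-permanence, and union-bound to get $2\epsilon_k\le\epsilon$. The only cosmetic difference is that the paper first establishes permanence at an intermediate round $R_l+s$ with $s=\lceil \frac{2(k+1)}{(1-\ep)\xi q}+1\rceil$, verifies $s<r$ via the inequality chain $2(k+1)+\text{(small)}<\tfrac{5}{2}(k-1)$ (using $k>10$), and then invokes Lemma~\ref{lemma: permanent ever after} to pass to $R_l+r$, whereas you apply both lemmas directly at round $R_l+r$ and absorb the lower-order terms into the slack $\frac{1}{(1-\ep)\xi\eta}\log\frac{12m\eta^{-2}}{\epsilon}$; both bookkeeping choices are valid.
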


\begin{proof}
 Let
\begin{align}
k = \left\lceil \frac{2q}{\eta}\log \frac{12m\eta^{-2}}{\epsilon} \right\rceil, \label{11.01}
\end{align}
and
\begin{align}
    s = \left\lceil \frac{2(k+1)}{(1-\ep)\xi q}+1 \label{11.02} \right\rceil.
\end{align}
Let $\epsilon_k$ be as defined as in \eqref{def: epsilon}.

According to Lemma \ref{lemma: get honest block} and \eqref{11.02}, by round $R_l + s$, all honest voter blockchains have an honest block which is mined after $R_l$ and is at least $k$ deep with probability at least {\color{black}$1-\epsilon_k$}. {\color{black}Under this event}, according  to Lemma \ref{lemma: one honest block fix all} ({\color{black}evidently, $R_l+s > \frac{k}{2q}$}),  the leader sequence up to level $l$ is $\epsilon_k$-permanent after round $R_l+s$. Therefore, {\color{black}the leader sequence up to level $l$} is $2\epsilon_k$-permanent after round $R_l+s$. Note that
\begin{align}
\epsilon_k & = {\color{black}6m\eta^{-2}}e^{-\eta \frac{k}{2q}} \\
& \le {\color{black}6m\eta^{-2}}e^{-\log {\color{black}\frac{12m\eta^{-2}}{\epsilon}}} \label{11.11}\\
& =  \frac{\epsilon}{2}.
\end{align}
{\color{black}the leader sequence up to level $l$} is $\epsilon$-permanent after round $R_l+s$.

From \eqref{11.01}, it is easy to verify that $k>10$. As a consequence, we have
\begin{align}
 s  & <  \frac{2(k+1)}{(1-\ep)\xi q}+2 \label{11.05}\\
 & = \frac{2k+2+2(1-\ep)\xi q}{(1-\ep)\xi q} \\
 & < \frac{\frac{5}{2}(k-1)}{(1-\ep)\xi q} \label{11.051}\\
  & < \frac{5}{(1-\ep) \xi \eta}\log {\color{black}\frac{12m\eta^{-2}}{\epsilon}} \label{11.065}\\
 & \le r, \label{11.066}
\end{align}
where \eqref{11.05} is due to \eqref{11.01},
{\color{black}\eqref{11.051} is due to $k>10$, \eqref{11.065} is due to \eqref{11.01},
and \eqref{11.066} is by \eqref{def: r thm permanent}}.

Since $r>s$, the leader sequence up to level $l$ is $\epsilon$-permanent after round $R_l+r$ by Lemma \ref{lemma: permanent ever after}.
\end{proof}


\begin{theorem} \label{thm: blockchain quality for proposer block}
(Blockchain quality theorem for proposer blockchain)
Let $r,s,k$ be integers satisfying $1 \le s < r$ and $k\ge 2q(r-s)$. Suppose an honest proposer blockchain has more than $k$ leader blocks {\color{black}by round $r$}. Under event $G_0[s, r]$, by round $r$, at least $\epp{2}$ fraction of the last $k$ leader blocks {\color{black}of the {\color{black}proposer} blockchain} are honest.
\end{theorem}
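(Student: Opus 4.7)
The plan is to mirror the proof of Theorem~\ref{thm: blockchain quality} almost line for line, with the proposer blockchain ($j=0$) playing the role of the bitcoin blockchain and ``level'' playing the role of ``position in the adopted chain.'' The key observation to reuse is that every adversarial leader block is, by definition, an adversarial proposer block, so the total count of adversarial leader blocks across any range of levels is controlled by $Z_0$ on the corresponding round interval.

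First, I would set up notation paralleling the earlier proof. Let $l$ denote the highest level containing a leader block by round $r$; the last $k$ leader blocks sit at levels $u, u+1, \ldots, l$ with $u = l-k+1$. Let $u'$ be the largest level strictly below $u$ whose leader block is honest, using $u'=0$ (the genesis) if no such level exists. Let $r^*$ be the round in which the honest leader block at level $u'$ was mined. Because the proposer block at level $u'$ lies at depth at least $k$ in the proposer blockchain by round $r$, Lemma~\ref{lemma: blockchain j growth2 at least rounds} applied to $j=0$ under $G_0[s,r]$ forces $r^*<s$, so that $[s,r] \subset [r^*+1,r]$ and hence $E_0[r^*+1,r]$ is implied by $G_0[s,r]$.

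Next I would count. By construction all leader blocks at levels $u'+1,\ldots,u-1$ are adversarial, so if $x$ is the number of honest leader blocks at levels $u,\ldots,l$, then the total number of adversarial leader blocks among levels $u'+1,\ldots,l$ is $L-x$ where $L = l-u'\ge k$. Since every adversarial leader block is an adversarial proposer block mined in some round of $\{r^*+1,\ldots,r-1\}$, we have $L-x \le Z_0[r^*+1,r]$. Applying Lemma~\ref{lemma: blockchain j growth} to the proposer blockchain gives that the proposer blockchain's level count has increased by at least $X_0[r^*+1,r]$ during $[r^*+1,r]$, hence $X_0[r^*+1,r]\le L$. Combined with \eqref{equ: Z_j < X_j2} from Lemma~\ref{lemma: good properties for j}, this chains into
\[
L-x \le Z_0[r^*+1,r] < (1-\tfrac{\xi}{12})X_0[r^*+1,r] \le (1-\tfrac{\xi}{12})L \le L - \tfrac{\xi}{12}k,
\]
which yields $x > \frac{\xi}{12}k = \epp{2}\,k$, as required.

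The main obstacle is not computational but conceptual: verifying that the proposer blockchain's level structure admits the same growth and deposition arguments as a bitcoin chain, namely that (i) ``$k$-deep'' in the proposer blockchain corresponds to ``at level at most $l-k$,'' so Lemma~\ref{lemma: blockchain j growth2 at least rounds} gives $r^*<s$, and (ii) each adversarial leader at a given level truly consumes one adversarial proposer block mined in the relevant round interval, so that $Z_0$ is the correct upper bound. Once these correspondences are in place, the remainder of the proof is an exact transcription of the argument used for Theorem~\ref{thm: blockchain quality}, with $X$, $Y$, $Z$ replaced by $X_0$, $Y_0$, $Z_0$ and the reference to \eqref{equ: Z < X2} replaced by \eqref{equ: Z_j < X_j2}.
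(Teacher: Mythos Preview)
Your transcription of the bitcoin argument to the proposer chain has a genuine gap at the step ``every adversarial leader block is an adversarial proposer block mined in some round of $\{r^*+1,\ldots,r-1\}$.'' You set $r^*$ to be the round in which the honest \emph{leader} block at level $u'$ was mined, but that block need not be the \emph{first} proposer block at level~$u'$. An adversary may have privately mined a proposer block at level $u'$ (and then at levels $u'+1,u'+2,\ldots$) well before round $r^*$ without publishing; any of those private blocks can later be published and elected leader. Such an adversarial leader at a level $>u'$ would then have been mined \emph{before} $r^*$, so it is not captured by $Z_0[r^*+1,r]$, and the inequality $L-x\le Z_0[r^*+1,r]$ is unjustified. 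The same issue undermines the appeal to Lemma~\ref{lemma: blockchain j growth2 at least rounds} for $r^*<s$: that lemma controls when a block on a chain reaching the current maximum level was mined, not when an arbitrary honest proposer block at a given depth was mined.

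The paper's proof avoids this by anchoring at $l^*$, the highest level at most $l-k$ whose \emph{first proposer block} (not leader block) is honest, and taking $r^*=R_{l^*}$. Because this is literally the first block ever mined at level $l^*$, every block at any level $>l^*$---honest or adversarial, public or private---is mined strictly after $r^*$; this is what makes the bound $Z_0[r^*+1,r]\ge l-l^*-x$ legitimate. The paper then counts one adversarial block per level: the adversarial first proposer blocks on levels $l^*+1,\ldots,l-k$ together with the adversarial leader blocks on levels $l-k+1,\ldots,l$. Replacing ``leader block'' by ``first proposer block'' in your definition of the anchor level is the missing idea. (Separately, your displayed chain uses $1-\xi/12$ where \eqref{equ: Z_j < X_j2} gives $1-\xi/2$; with the correct constant you would obtain $x>\tfrac{\xi}{2}k$, matching $\epp{2}\,k$.)
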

\begin{proof}
{\color{black}Let $l$ denote the highest level} of the proposer blockchain by round $r$. {\color{black}Evidently $l>k$.}
Let $l^*$ be the {\color{black}highest level} before $l-k+1$ {\color{black}on which} the first proposer block is honest. {\color{black}$l^*$ may be as high as $l-k$ and as low as $0$, which corresponds to the genesis block}.
Let $r^*$ be the round when the first block on level $l^*$ is mined. If this block is the genesis block, {\color{black}then} $r^* = 0$. If $r^*>0$, since blocks on level $l^*$ are more than $k$ blocks away from the last level by round $r$, {\color{black}we have} $r^* < s$ according to Lemma \ref{lemma: blockchain j growth2 at least rounds}. In any cases, we have $[s,r]\subset [r^*+1,r]$. {\color{black}An illustration of the said proposer blocks and blockchain is given in} Figure \ref{fig:thmChainQualityProposer}.

\begin{figure}\centering
\includegraphics[clip, trim=0cm 2cm 10cm 0cm,width=0.7\columnwidth]{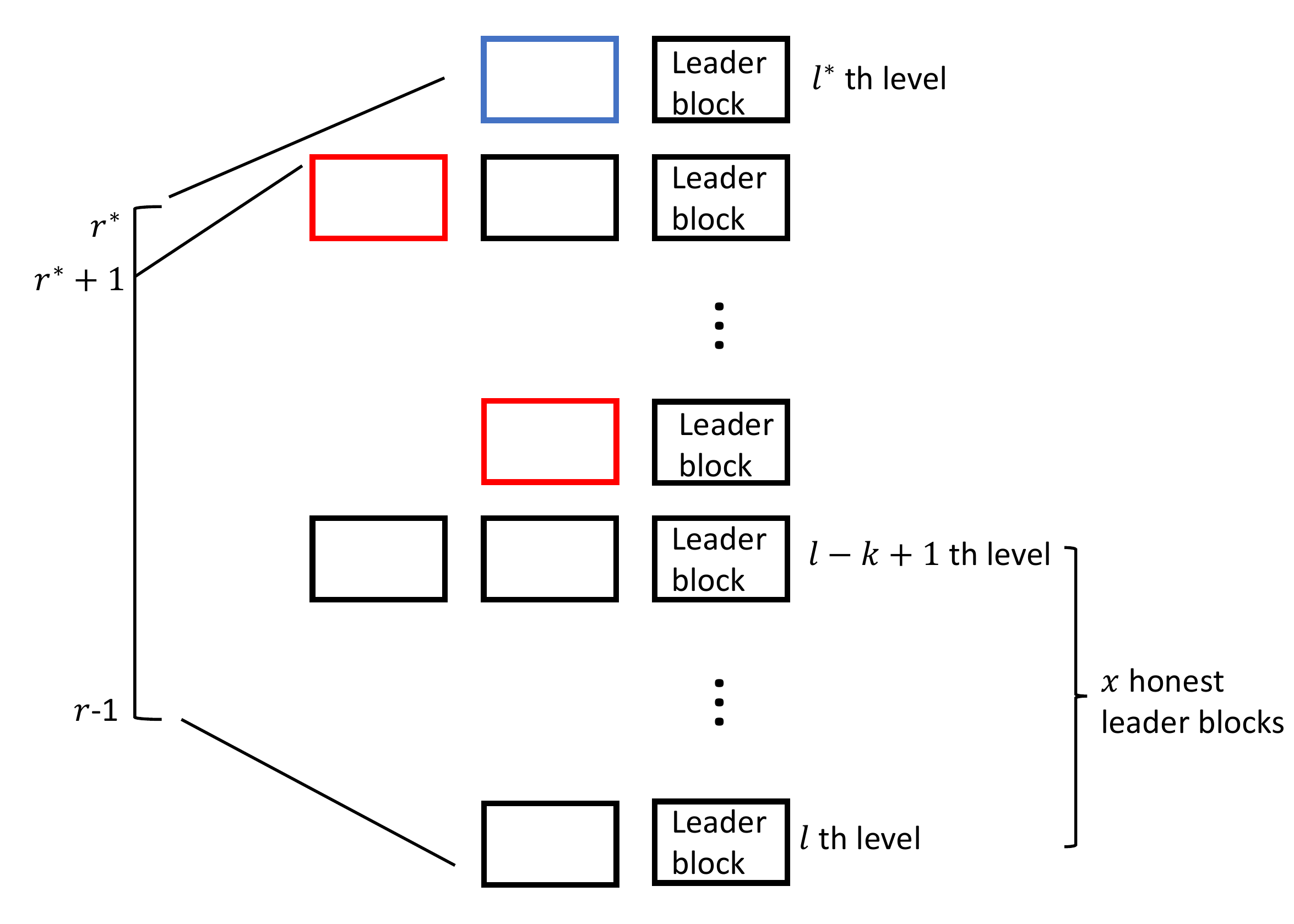}
\caption{Illustration to prove blockchain quality theorem for proposer blockchain\label{fig:thmChainQualityProposer}.}
\end{figure}

{\color{black}Since the first proposer block on every level within $\{l^*+1, \dots, l-k\}$ is adversarial,}
from level $l^*+1$ to level $l$, there must be at least one adversarial block on every level except (possibly) on the levels between $l-k+1$ and $l$ where the leading block is honest. Let $x$ be the number of honest leader blocks on levels $\{l-k+1,\ldots, l\}$. Then during rounds $\{r^*+1,\ldots, r-1\}$, the total number of adversarial {\color{black}proposer} blocks is no fewer than $l-l^*-x$, i.e.,
\begin{align}
Z_0[r^*+1, r] & \ge l-l^* - x. \label{10.-2}
\end{align}
Under $G_0[s,r]$, $E_0[r^*+1, r]$ occurs. Thus,
\begin{align}
x & \ge l - l^* - Z_0[r^*+1, r]\label{10.-1}\\
& > l-l^* - (1-\epp{2})X_0[r^*+1, r] \label{10.0}\\
& \ge \epp{2} (l-l^*) \label{10.3}\\
& \ge \epp{2} k, \label{10.4}
\end{align}
where \eqref{10.0} is due to \eqref{equ: Z_j < X_j2},  \eqref{10.3} is due to Lemma {\color{black}\ref{lemma: blockchain j growth}}, and {\color{black}\eqref{10.4}} is due to $l-l^* \ge k$.
To sum up,  {\color{black}we have $x>\epp{2} k$ and the proof is complete}.
\end{proof}

\begin{definition}
A  transaction $tx$ is  {\em honest} if it has been broadcast, and {\color{black}no other transaction spending from the same unspent output has been broadcast.}
\end{definition}
Note that the {\color{black}notion of honesty} is applicable only to transactions which have been broadcast.

\begin{definition}
A transaction is {\color{black}said to be} {\em $\epsilon$-permanent after round $r$} if{\color{black},} with probability at least $1-\epsilon$, it remains on the final ledger of every honest miner after round $r$.
\end{definition}

\begin{lemma} \label{lemma: honest leader block include blocks}
Suppose {\color{black}right before round $r$}, the leader block on level $l$ is honest. {\color{black}Suppose this} leader block is mined during round $R$. If an honest transaction enters a block {\color{black}and the block} is broadcast by round $R$, then every honest miner's final ledger generated by $\textbf{LedSeq}_l(r)$ will include this honest transaction.
\end{lemma}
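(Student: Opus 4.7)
The plan is to show that block $B$ (the one containing the honest transaction) is reachable from the honest leader block $L$ through the DAG of reference links, so that $B$ is included in the epoch of $L$ (or some earlier epoch) under $\textbf{LedSeq}_l(r)$, and then to verify that honesty of the transaction prevents it from being pruned from the final ledger.

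First I would invoke the paper's convention that ``by round $R$'' denotes rounds $1,\ldots,R-1$ together with the synchronous delivery assumption: since $B$ is broadcast by round $R$, every honest miner---in particular the one mining $L$ during round $R$---has $B$ in its view at the start of round $R$. Let $S$ denote the set of blocks visible to the miner of $L$ when $L$ is being mined; then $B\in S$.

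The key structural step uses the honest proposer rule: $L$ includes a reference link to every block in $S$ that no other block in $S$ already points to by a reference link; call such blocks \emph{roots}. I would then consider the reference-link graph restricted to $S$ and walk backward from $B$ along reverse-reference edges. Since each reference link is created by a later block pointing to an earlier one, this subgraph is a finite DAG, so the walk must terminate at some root $B^{*}\in S$. The honest rule yields the edge $L\to B^{*}$, and reversing the walk gives a directed path $L\to B^{*}\to\cdots\to B$. Hence $B$ is reachable from $L$ in the reference-link DAG.

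By the epoch definition, reachability of $B$ from $L$ implies that $B$ (and hence its contained transactions) lies in the epoch of $L$ or in some earlier epoch $i<l$, so the transaction in $B$ is written to the ledger derived from $\textbf{LedSeq}_l(r)$. Finally, because the transaction is honest---no other transaction spending the same unspent output has ever been broadcast---the ``keep only the first among double spends or redundant transactions'' pruning step cannot remove it. The main subtlety to handle is ensuring the backward walk stays inside $S$: any block $B'$ that references $B$ must have been mined and broadcast before $L$, and hence lies in $S$; privately withheld adversarial blocks are irrelevant because the honest rule references only the roots of $L$'s own view. This bookkeeping about what it means for a block to be ``pointed to by other reference links'' from $L$'s perspective is where I would be most careful.
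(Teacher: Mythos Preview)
Your proposal is correct and follows essentially the same approach as the paper: show that $B$ is reachable from the honest leader $L$ via the rule that $L$ references every block in its view not already pointed to, then invoke the epoch/ledger construction and the honesty of the transaction to conclude it survives pruning. The paper merely splits into three cases ($B$ unreferenced; $B$ already reachable from an earlier honest leader; otherwise find an unreferenced proposer block reaching $B$), whereas you collapse these into a single backward walk in the reference DAG restricted to $L$'s view---the underlying argument is the same, and your care about keeping the walk inside $S$ is exactly the right bookkeeping.
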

\begin{proof}
Suppose the honest transaction $tx$ enters block $B$ which is broadcast by round $R$.  Note that  $B$ {\color{black}may} be honest or adversarial, a voter block or a leader block, and it can be on the main blockchain or an orphan block. Denote the honest leader block on level $l$ as $B_l$.

By saying block $B$ is reachable from block $A$, we mean $A$ can points to $B$ by a sequence of reference links.
According to the Prism protocol, all blocks which are reachable from an honest leader block will be included in the final ledger. By round $R$, one of the following three cases must be true:

1) $B$ is not reachable by any blocks. According to the Prism protocol, $B_l$ will reference $B$, {\color{black}so $B$ will be included in} the final ledger.

2) $B$ is reachable from an honest leader block whose level is smaller than $l$, then $B$ {\color{black}must already be} included {\color{black}in} the final ledger.

3) $B$ is {\color{black}reachable from} some block(s), but none of these block(s) is an honest leader block {\color{black}whose level is smaller than $l$}. Note that the number of proposer blocks by round $R$ is finite, and that reference links {\color{black}cannot} form a circle. Thus, among all the proposer blocks which can reach $B$, there must be {\color{black}at least} one proposer block which is not referenced by any other block by round $R$. Denote {\color{black}such a} block as $B_r$. Then according to the Prism protocol, $B_l$ will reference $B_r$. {\color{black}As a sequence, $B$ will be included in the final ledger}.

{\color{black}Once} $B$ is included in the ledger, the {\color{black}honest transaction $ tx$  will not be discarded}.
\end{proof}

\begin{theorem} \label{thm: transaction becomes permanent}
For {\color{black}every} $\epsilon >0$ and {\color{black}every integer}
\begin{align} \label{def: r permanent tx}
r \ge  \frac{25}{(1-\ep)^2\xi^2\eta} \log \frac{24m\eta^{-2}}{\epsilon},
\end{align}
every transaction that is on an honest blockchain $r$ rounds after its block is broadcast
is $\epsilon$-permanent. 
\end{theorem}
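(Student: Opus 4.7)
Let $R^{*}$ denote the round during which the block $B$ containing the honest transaction $tx$ is broadcast, and let $l_{0}$ be the maximum proposer level by round $R^{*}$. My plan is to combine the proposer-chain growth and quality theorems with Theorem~\ref{thm: permanent leader block} and Lemma~\ref{lemma: honest leader block include blocks}: locate, with high probability by some intermediate round $R^{*}+r_{1}$, an honest leader block $B_{l}$ at a proposer level $l>l_{0}$ (so $B_{l}$ is necessarily mined after $R^{*}$), then invoke Theorem~\ref{thm: permanent leader block} to make the leader sequence up to level $l$ permanent by round $R^{*}+r$, and finally apply Lemma~\ref{lemma: honest leader block include blocks} to force $tx$ into every honest miner's final ledger.

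First I would split $r = r_{1}+r_{2}$, choosing $r_{2}\ge \tfrac{5}{(1-\ep)\xi\eta}\log\tfrac{24m\eta^{-2}}{\epsilon}$ so that Theorem~\ref{thm: permanent leader block} yields an $\epsilon/2$-permanent leader sequence up to any prescribed level $l$ after round $R_{l}+r_{2}$. Next I would apply Theorem~\ref{thm: blockchain j growth} with $j=0$ on $[R^{*},R^{*}+r_{1}]$ to show that the proposer chain grows by more than $k$ new levels beyond $l_{0}$, and Theorem~\ref{thm: blockchain quality for proposer block} at round $R^{*}+r_{1}$ to show that at least $\xi/12$ of the last $k$ leader blocks are honest. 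If $k$ is simultaneously large enough that $k\cdot\xi/12\ge 1$ and small enough that the last $k$ leader blocks all sit at levels above $l_{0}$, I obtain an honest leader block $B_{l}$ at a level $l>l_{0}$ with $R_{l}\le R^{*}+r_{1}$. Since $R_{l}+r_{2}\le R^{*}+r_{1}+r_{2}=R^{*}+r$, the leader sequence up to level $l$ is $\epsilon/2$-permanent after round $R^{*}+r$. Because $B$ was broadcast by round $R^{*}\le R_{l}$ and $B_{l}$ is an honest leader at level $l$, Lemma~\ref{lemma: honest leader block include blocks} forces $B$, and hence $tx$, into the final ledger generated from $\textbf{LedSeq}_{l}(r')$ for every $r'>R^{*}+r$. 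A union bound over the two bad events then gives total failure probability at most $\epsilon/2+\epsilon/2=\epsilon$.

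The main obstacle will be calibrating $r_{1}$ so that both pieces of Step~1 hold with probability at least $1-\epsilon/2$. The blockchain quality theorem requires $k\ge 2q(r_{1}-s)$ where $s$ is the start of the typical-event interval, and Lemma~\ref{lemma: prob of typical event j} requires $5\eta^{-2}e^{-\eta(r_{1}-s)}\le\epsilon/2$. Matching these forces $k$ to scale as $\tfrac{q}{\eta}\log\tfrac{m}{\epsilon}$ and in turn forces $r_{1}\gtrsim\tfrac{1}{(1-\ep)\xi\eta}\log\tfrac{m}{\epsilon}$ via the growth requirement $(1-\ep)qr_{1}>k$. Adding this to $r_{2}$ produces the extra factor $\tfrac{5}{(1-\ep)\xi}$ beyond the bound in Theorem~\ref{thm: permanent leader block}, accounting for exactly the blow-up from $\tfrac{5}{(1-\ep)\xi\eta}$ to $\tfrac{25}{(1-\ep)^{2}\xi^{2}\eta}$ in the constant of~\eqref{def: r permanent tx}. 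Everything else, including the fact that $B$ need not itself be an honest or proposer block and may enter the ledger through an arbitrary chain of reference links, is handled by the three cases already identified in the proof of Lemma~\ref{lemma: honest leader block include blocks}.
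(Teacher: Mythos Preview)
Your high-level plan---locate an honest leader at a proposer level $l>l_0$, freeze the leader sequence up to $l$ via Theorem~\ref{thm: permanent leader block}, then invoke Lemma~\ref{lemma: honest leader block include blocks}---is precisely the paper's strategy. The execution, however, has a genuine timing gap. You identify the honest leader $B_l$ by applying the quality theorem at the \emph{intermediate} round $R^*+r_1$, but Theorem~\ref{thm: permanent leader block} only guarantees permanence of the leader sequence from round $R_l+r_2$ onward. Your only bound on $R_l$ is $R_l\le R^*+r_1$, so nothing prevents $R_l+r_2>R^*+r_1$; in that window the leader at level $l$ can change, and you have no control over who the \emph{permanent} leader at level $l$ actually is. Lemma~\ref{lemma: honest leader block include blocks} requires the leader at level $l$ to be honest at the round the ledger is read, so the argument does not close as written.

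The paper avoids this by running growth, quality, and the reverse chain-growth lemma (Lemma~\ref{lemma: blockchain j growth2 at least rounds}) all at the single terminal round $R+r$. Among the last $\ell=\lceil(1-\xi/6)qr\rceil$ leader levels, at least $k=\lfloor\xi\ell/2\rfloor$ have honest leaders (Theorem~\ref{thm: blockchain quality for proposer block}); since these occupy $k$ distinct levels, the deepest honest leader is at least $k$-deep and hence, by Lemma~\ref{lemma: blockchain j growth2 at least rounds}, was mined at least $k/(2q)$ rounds before $R+r$. One then checks that $k/(2q)$ exceeds the Theorem~\ref{thm: permanent leader block} threshold for $\epsilon/2$, so permanence is already in force at round $R+r$---the same round at which the honest leader is observed---and no leader switch can occur. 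Your split $r=r_1+r_2$ can be repaired along these lines by insisting the honest leader you select at round $R^*+r_1$ be deep enough that $R_l+r_2\le R^*+r_1$; but this forces $k$ (hence $r_1$) to scale with $qr_2/\xi$, not merely with $(q/\eta)\log(m/\epsilon)$. It is that extra depth requirement---not the probability calibration you describe in your final paragraph---that generates the additional factor of $1/((1-\xi/6)\xi)$ in the constant of~\eqref{def: r permanent tx}. (Incidentally, Theorem~\ref{thm: blockchain quality for proposer block} gives fraction $\xi/2$, not $\xi/12$.)
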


\begin{proof}
\begin{figure}\centering
\includegraphics[clip, trim=0cm 2cm 10cm 0cm,width=0.7\columnwidth]{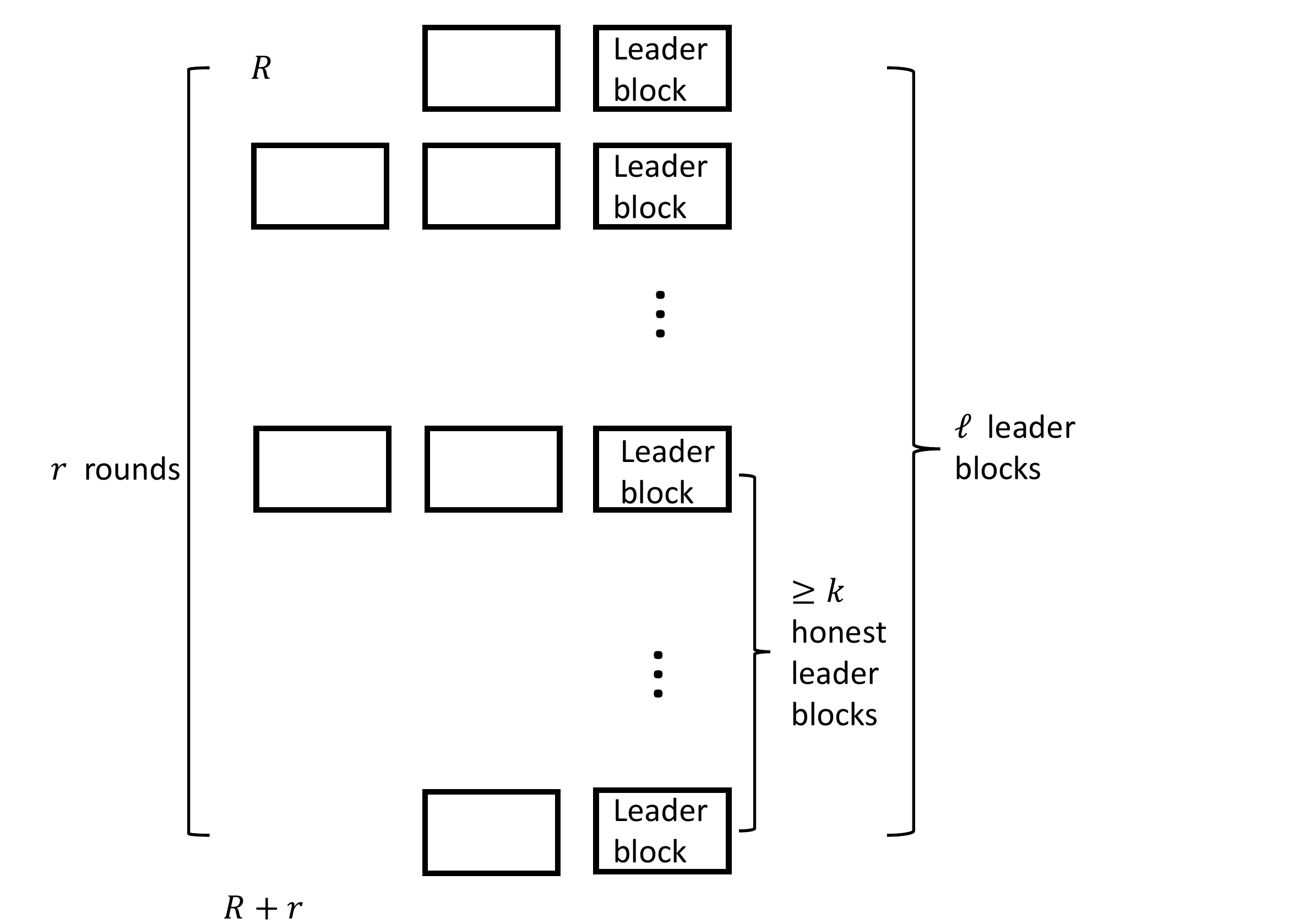}
\caption{Illustration to prove Theorem \ref{thm: transaction becomes permanent}}
\end{figure}
Let
\begin{align}
\ell & = {\color{black} \left\lceil (1-\ep)qr \right\rceil}\label{12.91} \\
k & = {\color{black}\left\lfloor \frac{\xi}{2}\ell \right\rfloor} \label{12.901}\\
w & = \left\lfloor \frac{\ell}{2q} \right\rfloor \label{12.92}\\
u & =  \left\lfloor \frac{k}{2q} \right\rfloor. \label{12.94}
\end{align}
{\color{black}
Let $R$ be the round during which the block including the honest transaction is broadcast.}
Define
\begin{align}
  G = G_0[R+r-u, R+r]  \cap G_0[R+r-w, R+r] \cap G_0[R, R+r].
\end{align}

Note that 1) According to Theorem \ref{thm: blockchain j growth} and \eqref{12.91}, under $G_0[R,R+r]$, the proposer blockchain grows by {\color{black}at least} $\ell$ leader blocks {\color{black}during rounds $\{R,\ldots, R+r\}$}. 2) According to Theorem \ref{thm: blockchain quality for proposer block}, under event $G_0[R+r-w, R+r]$, by round $R+r$ the last $\ell$ leader blocks includes at least $\epp{2}$ fraction of honest ones. Since $k \le \epp{2}\ell$, at least $k$ out of the last $\ell$ leader blocks are honest. 3) According to Lemma \ref{lemma: blockchain j growth2 at least rounds}, under event
$G_0[R+r-u, R+r]$, the deepest one of these $k$ honest leader blocks is mined at least $\frac{k}{2q}$ rounds {\color{black}before} round $R+r$. 4) We have
\begin{align}
    \frac{k}{2q} & \ge \frac{1}{2q}\left\lfloor \frac{\xi}{2}\ell \right\rfloor \label{14.001}\\
    & \ge  \frac{1}{2q} \left\lfloor \frac{\xi}{2}(1-\ep)qr \right\rfloor \label{14.002}\\
    & \ge \frac{1}{2q} \left\lfloor \frac{\xi}{2}(1-\ep)q\frac{25}{(1-\ep)^2\xi^2\eta} \log \frac{24m\eta^{-2}}{\epsilon} \right\rfloor\\
    & \ge  \frac{1}{2q} \left\lfloor \frac{25q}{2(1-\ep)\xi \eta} \log \frac{24m\eta^{-2}}{\epsilon} \right\rfloor \label{14.003}\\
    & > \frac{1}{2q} \left( \frac{25q}{2(1-\ep)\xi\eta} \log \frac{24m\eta^{-2}}{\epsilon} -1 \right) \label{14.004}\\
    & > \frac{1}{2q} \left( \frac{10q}{(1-\ep)\xi\eta} \log \frac{24m\eta^{-2}}{\epsilon} \right) \label{14.005}\\
    & = \frac{5}{(1-\ep) \xi \eta}\log \frac{{\color{black}12}m\eta^{-2}}{{\color{black}\frac{\epsilon}{2}}} \label{14.006},
\end{align}
where \eqref{14.001} is due to {\color{black}\eqref{12.901}}, \eqref{14.002} is due to {\color{black}\eqref{12.91}}, and \eqref{14.005} is obvious due to $\xi \in (0,1]$.
According to Theorem \ref{thm: permanent leader block} and \eqref{14.006},  the deepest honest leader block is $\frac{\epsilon}{2}$-permanent after round $R+r$  under event $G$. Next, we will lower bound probability of $G$.

Note that
\begin{align}
u   & \le \frac{k}{2q} \\
    & \le \frac{\xi \ell}{4q} \label{15.00}\\
    & < \frac{\ell}{2q} - 1 \label{15.01} \\
    & < w, \label{15.016}
\end{align}
where \eqref{15.00} is due to \eqref{12.901}, \eqref{15.01} is due to $q\le \ep$, and \eqref{15.016} is due to \eqref{12.92}. Also,
\begin{align}
w & \le \frac{\ell}{2q} \label{15.0171} \\
& \le  {\color{black}r}, \label{15.025}
\end{align}
where \eqref{15.0171} is due to \eqref{12.92} and
\eqref{15.025} is due to {\color{black}\eqref{12.91}}.
We have $u < w < s$. According to definition, $G_0[R+r-u, R+r] \subset  G_0[R+r-w, R+r] \subset G_0[R,R+r]$. Then,
\begin{align}
P(G) = & P \left( G_0[R+r-u, R+r]\right)  \\
> & 1- {\color{black}5\eta^{-2}}e^{- \eta u} \label{13.-1} \\
> & 1- {\color{black}5\eta^{-2}}e^{- \eta (\frac{k}{2q}-1)} \label{13.0}\\
{\color{black}\ge} & 1- {\color{black}5\eta^{-2}}e^{-\frac{5}{(1-\ep) \xi}\log \frac{24m\eta^{-2}}{\epsilon} + \eta} \label{13.1}\\
> & 1- {\color{black}5\eta^{-2}}e^{-\log \frac{10m\eta^{-2}}{\epsilon}} \label{13.3}\\
\ge &  1- {\color{black}5\eta^{-2}}e^{-\log \frac{10\eta^{-2}}{\epsilon}} \label{13.4}\\
= & 1- \frac{\epsilon}{2},
\end{align}
where \eqref{13.-1} is due to Lemma \ref{lemma: prob of typical event j},
\eqref{13.0} is due to \eqref{12.94},
\eqref{13.1} is due to {\color{black}\eqref{14.006}}, \eqref{13.3} is due to $0<\xi\le1$, and {\color{black}\eqref{13.4} is due to $m\ge 1$.}
According to the union rule, the deepest honest leader block is $\epsilon$-permanent after round $R+r$.
{\color{black}According to Lemma \ref{lemma: honest leader block include blocks}, the honest transaction will become a $\epsilon$-permanent transaction after round $R+r$.}
\end{proof}

\section{Bounded-delay model}
\subsection{The bitcoin protocol in bounded-delay model} \label{sect: bounded-delay, bitcoin}
In this section we generalize the results to the bounded-delay model, in which there is an upper bound $T$ on the delay for message delivery. That is to say, if a block  is broadcast to the network during round $r$, by round $r+T$, all other miners would have received the block. In the special case of $T=1$, this model degenerates to the synchronous model described in Section \ref{sec: model and definitions}.

Let $H[r]$, $X[r]$, $Y[r]$  and $Z[r]$ be defined in the same way as that in synchronized model. In particular, $H[r]$ stands for the number of honest blocks mined during round $r$ for $r=1,2,\ldots$.
A round is said to be a $T$-left-isolated successful round if a single honest block is mined during this round and no other honest block is mined in the previous $T-1$ rounds.  Accordingly, we define the following indicators for $r = T, T+1, T+2, \ldots$:  
\begin{align}\label{def: X'[r]}
    X'[r]= \begin{cases}
    1, \quad \;\; \text{if}\; H[r]=1\;\text{and}\; H[r-1] = H[r-2] = \ldots = H[r-T+1] = 0, \\
    0, \quad \;\; \text{otherwise}.
    \end{cases}
\end{align}

\begin{proposition} \label{prop: X'}
For every $r= T, T+1, \ldots$,
\begin{align}
    \mathbb{E}[X'[r]] & > q(1-q)^{T}  \label{a 0.184}\\
    &> q(1-Tq).\label{a 0.185}
\end{align}
\end{proposition}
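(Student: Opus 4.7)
The plan is to compute $\mathbb{E}[X'[r]]$ explicitly by exploiting that the random variables $H[r-T+1], H[r-T+2], \ldots, H[r]$ are mutually independent, each being $\text{Binomial}(n-t, p)$. By the definition in \eqref{def: X'[r]},
\begin{align}
\mathbb{E}[X'[r]] = P(H[r]=1) \cdot \prod_{i=1}^{T-1} P(H[r-i]=0) = (n-t)p(1-p)^{n-t-1} \cdot (1-p)^{(n-t)(T-1)}.
\end{align}
Using the definition $1-q = (1-p)^{n-t}$, this simplifies to
\begin{align}
\mathbb{E}[X'[r]] = \frac{(n-t)p}{1-p}(1-q)^{T}.
\end{align}

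To establish \eqref{a 0.184}, I would then argue that $(n-t)p/(1-p) > q$. This follows from Proposition \ref{prop: X}, which gives $(n-t)p \ge q$ via Bernoulli's inequality, combined with the strict bound $1/(1-p) > 1$ that holds because $p \in (0,1)$. Multiplying by the factor $(1-q)^T > 0$ then yields $\mathbb{E}[X'[r]] > q(1-q)^T$.

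For \eqref{a 0.185}, I would apply Bernoulli's inequality \eqref{equ: bernoulli} a second time, with the nonnegative integer $T$ and the quantity $-q \in (-1, 0]$, giving $(1-q)^T \ge 1 - Tq$. Combining this with the first inequality yields $\mathbb{E}[X'[r]] > q(1-q)^T \ge q(1-Tq)$, as claimed. Neither step is subtle; the only thing to watch is that at least one of the two inequalities be strict so that the combined statement remains strict, which is guaranteed by the factor $1/(1-p) > 1$ already used in the first part. Since the computation reduces to two applications of Bernoulli's inequality applied to a closed-form product, I do not anticipate a main obstacle here beyond being careful with the bookkeeping of the exponent $(n-t)T - 1$ versus $(n-t)T$.
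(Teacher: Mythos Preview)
Your argument is correct and follows essentially the same route as the paper. The paper's proof writes $\mathbb{E}[X'[r]] = P(H[r]=1)\cdot(1-q)^{T-1}$ and invokes Proposition~\ref{prop: Y} for $P(H[r]=1)>q(1-q)$, whereas you algebraically rewrite the same quantity as $\frac{(n-t)p}{1-p}(1-q)^T$ and bound the first factor via Proposition~\ref{prop: X} together with $1/(1-p)>1$; these two bounds are equivalent, and your handling of the second inequality via Bernoulli is identical to the paper's.
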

\begin{proof}
\eqref{a 0.184} is due to \eqref{equ: E[X]=q} and Proposition \ref{prop: Y} and \eqref{a 0.185} is due to Bernoulli's inequality \eqref{equ: bernoulli}.
\end{proof}

A round is said to be a $T$-doubly-isolated successful round if a single honest block is mined during this round and no other honest block is mined within $T-1$ rounds before or after the round.  Accordingly, we define the following indicators for $r= T, T+1, T+2, \ldots$:
\begin{align}\label{def: Y'[r]}
    Y'[r]= \begin{cases}
    1, \quad \;\; &\text{if}\;  H[r]=1, H[r-1] = H[r-2] = \ldots = H[r-T+1]=0, \\
    & \;\text{and}\; H[r+1] = H[r+2] = \ldots = H[r+T-1] = 0, \\
    0, \quad \;\; &\text{otherwise}.
    \end{cases}
\end{align}

\begin{proposition} \label{prop: Y'}
For every $r = T, T+1, \ldots$,
\begin{align}
    \mathbb{E}[Y'[r]] & > q(1-q)^{2T-1} \label{equ: Y'} \\
    & > q(1-(2T-1)q). \label{a 0.003}
\end{align}
\end{proposition}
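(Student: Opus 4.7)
The plan is to exploit the i.i.d.\ structure of the sequence $H[1], H[2], \ldots$, which immediately makes the $2T-1$ events appearing in the definition of $Y'[r]$ mutually independent. Once the expectation is factored, the first inequality reduces to Proposition~\ref{prop: Y}, and the second reduces to Bernoulli's inequality in exactly the same way as in the preceding Proposition~\ref{prop: X'}.

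More concretely, I would first observe that by the definition in \eqref{def: Y'[r]},
\begin{align}
    \mathbb{E}[Y'[r]] = P(H[r]=1)\,\prod_{i=1}^{T-1}P(H[r-i]=0)\,\prod_{i=1}^{T-1}P(H[r+i]=0),
\end{align}
since the $2T-1$ distinct rounds involved contribute independent events. Using $P(H[r]=0)=(1-p)^{n-t}=1-q$ and recognizing $P(H[r]=1)=\mathbb{E}[Y[r]]$, this becomes
\begin{align}
    \mathbb{E}[Y'[r]] = \mathbb{E}[Y[r]]\,(1-q)^{2(T-1)}.
\end{align}

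Then I would apply Proposition~\ref{prop: Y}, which gives $\mathbb{E}[Y[r]] > q(1-q)$, to obtain $\mathbb{E}[Y'[r]] > q(1-q)^{2T-1}$, establishing \eqref{equ: Y'}. For \eqref{a 0.003}, I would invoke Bernoulli's inequality \eqref{equ: bernoulli} with exponent $2T-1$ and base $1-q$ (valid since $q\le \ep<1$), yielding $(1-q)^{2T-1}\ge 1-(2T-1)q$, and hence $\mathbb{E}[Y'[r]] > q(1-(2T-1)q)$.

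There is essentially no obstacle here: the argument parallels the proof of Proposition~\ref{prop: X'} almost verbatim, with one extra factor of $(1-q)^{T-1}$ arising from the $T-1$ trailing quiet rounds, and with Proposition~\ref{prop: Y} replacing the bound on $\mathbb{E}[Y[r]]$ that underlies the earlier result. The only point to verify carefully is that the $2T-1$ rounds indexed by $r-T+1,\ldots,r+T-1$ are distinct, which is immediate, so independence applies without qualification.
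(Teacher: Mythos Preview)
Your proposal is correct and follows essentially the same approach as the paper: factor the expectation using independence of the $2T-1$ rounds, apply Proposition~\ref{prop: Y} to bound $P(H[r]=1)=\mathbb{E}[Y[r]]>q(1-q)$ together with $P(H[i]=0)=1-q$ for the surrounding rounds, and then invoke Bernoulli's inequality for \eqref{a 0.003}. The paper's proof is merely a terser statement of the same steps.
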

\begin{proof}
Note that $H[r]=1$ indicates $Y[r]=1$, $H[r]=0$ indicates $X[r]=0$. Then, \eqref{equ: Y'} is due to \eqref{equ: E[X]=q} and Proposition \ref{prop: Y}, and \eqref{a 0.003} is due to Bernoulli's inequality \eqref{equ: bernoulli}.
\end{proof}

From this point onward,
it is assumed that the mining difficulty is adjusted to be sufficiently low such that the probability that one or more honest blocks are mined in a slot satisfies
\begin{align} \label{equ: honest majority asyn}
    q \le \epp{20T}.
\end{align}
\begin{proposition} \label{prop: Z < X'}
For every $r = T, T+1, \ldots$,
\begin{align}
     \mathbb{E}[Z[r]] < \mathbb{E}[X'[r]].
\end{align}
\end{proposition}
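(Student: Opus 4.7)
The plan is to follow the same route used for Proposition \ref{prop: Z<X}, adjusting it to accommodate the new lower bound on $\mathbb{E}[X'[r]]$ provided by Proposition \ref{prop: X'}. First, I would write $\mathbb{E}[Z[r]] = pt$ (since $Z[r]\sim\text{Binomial}(t,p)$), rewrite this as $\frac{t}{n-t}\cdot p(n-t)$, and invoke Proposition \ref{prop: X} to bound $p(n-t)$ above by $q/(1-q)$. Noting that $\frac{t}{n-t}=\frac{\beta}{1-\beta}=1-\xi$ from the definition \eqref{def: xi}, this produces the intermediate bound $\mathbb{E}[Z[r]]<(1-\xi)\frac{q}{1-q}$, exactly as in the proof of Proposition \ref{prop: Z<X}.

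Next I would apply Proposition \ref{prop: X'} to obtain $\mathbb{E}[X'[r]]>q(1-Tq)$. It then suffices to establish the purely algebraic inequality $(1-\xi)\frac{q}{1-q}<q(1-Tq)$, which after cross-multiplying is equivalent to $\xi>q(1+T-Tq)$. Under the refined difficulty assumption \eqref{equ: honest majority asyn}, namely $q\le \xi/(20T)$, one has $Tq\le \xi/20$, and therefore
\begin{align*}
q(1+T-Tq)\;\le\;q(1+T)\;\le\;\frac{\xi(1+T)}{20T}\;\le\;\frac{\xi}{10},
\end{align*}
which is strictly less than $\xi$. Chaining the three inequalities then yields $\mathbb{E}[Z[r]]<(1-\xi)\frac{q}{1-q}<q(1-Tq)<\mathbb{E}[X'[r]]$, completing the argument.

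The main obstacle is simply keeping the dependence on $T$ under control: $\mathbb{E}[X'[r]]$ suffers a multiplicative penalty of roughly $(1-q)^T$ relative to $\mathbb{E}[X[r]]=q$, so the upper bound on $q$ has to scale like $1/T$ in order to preserve the honest-vs-adversarial mining gap exploited in the synchronous case. The factor $1/(20T)$ baked into \eqref{equ: honest majority asyn} is exactly calibrated for this, which is why the whole proof closes without any new probabilistic input beyond Propositions \ref{prop: X} and \ref{prop: X'}; the only real care needed is in the step where $Tq$ terms are absorbed into a fraction of $\xi$.
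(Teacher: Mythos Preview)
Your proposal is correct and follows essentially the same chain of inequalities as the paper's proof: $\mathbb{E}[Z[r]]=pt=(1-\xi)p(n-t)<(1-\xi)\frac{q}{1-q}<q(1-Tq)\le\mathbb{E}[X'[r]]$. The only difference is that you spell out the verification of the middle step $(1-\xi)\frac{q}{1-q}<q(1-Tq)$ in detail, whereas the paper simply attributes it to the assumption $q\le\xi/(20T)$.
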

\begin{proof}
\begin{align}
    \mathbb{E}[Z[r]] & = pt \label{a 0.0}\\
    & = \frac{t}{n-t}p(n-t) \label{a 0.1} \\
    & = (1-\xi)p(n-t) \label{a 0.2}\\
    & < (1-\xi)\frac{q}{1-q} \label{a 0.3}\\
    & < q(1-Tq) \label{a 0.4}\\
    & \le \mathbb{E}[X'[r]], \label{a 0.5}
\end{align}
where \eqref{a 0.2} is due to \eqref{def: xi},
\eqref{a 0.3} is due to Proposition \ref{prop: X}, \eqref{a 0.4} is due to \eqref{equ: honest majority asyn}, and \eqref{a 0.5} is due to Proposition \ref{prop: X'}.
\end{proof}

For $T\le s < r$, we define $x(\cdot)$ on $\mathcal{R}^{r-s+T-1}$ by
\begin{align}
    x(h_{s-T+1}, \ldots, h_{r-1}) = \sum_{i=s}^{r-1}\mathds{1}\{h_i=1, h_{i-1}=\ldots = h_{i-T+1}=0\}.
\end{align}
Likewise,  we define $y(\cdot)$ on $\mathcal{R}^{r-s+2T-2}$ by
\begin{align}
    y(h_{s-T+1}, \ldots, h_{r+T-2}) = \sum_{i=s}^{r-1}\mathds{1}\{h_i=1, h_{i-T+1}=\ldots = h_{i-1} = h_{i+1} = \ldots = h_{i+T-1}=0\}.
\end{align}
Although $h_i$ is allowed to take arbitrary real values, the indicator function yields binary value.

For all integers $s$ and $r$ satisfying $T\le s < r$,
we define
\begin{align}
    X'[s,r] & = x(H[s-T+1], \ldots, H[r-1]) \\
             & = \sum_{i=s}^{r-1}X'[i] \label{a 0.400}
\end{align}
and
\begin{align}
    Y'[s,r] &= y(H[s-T+1], \ldots, H[r+T-2]) \\
            &= \sum_{i=s}^{r-1}Y'[i] . \label{a 0.401}
\end{align}

\begin{definition} \label{def: F bounded-delay}
For all integers $T\le s < r$, define event
\begin{align}
    F[s,r] := F_1[s,r] \cap F_2[s,r] \cap F_3[s,r] \cap F_4[s,r]
\end{align}
where
\begin{align}
    F_1[s,r] & := \left\{(1-\epp{20})\mathbb{E}[X'[s,r]] < X'[s,r]\right\},\label{F1}\\
    F_2[s,r] & :=  \left\{ X[s,r] < (1+\epp{20})\mathbb{E}[X[s,r]] \right\}, \label{F2} \\
    F_3[s,r] & := \left\{ (1-\epp{20})\mathbb{E}[Y'[s,r]] < Y'[s,r]\right\}, \label{F3}\\
    F_4[s,r] & :=  \left\{Z[s,r] < \mathbb{E}[Z[s,r]] + \epp{20}\mathbb{E}[X'[s,r]]\right\}.\label{F4}
\end{align}
\end{definition}

\begin{definition}
 Let $f$ be a function on $\mathcal{R}^n$. Let $x, x'\in \mathcal{R}^n$.  A function $f(x_1, x_2, \ldots, x_n)$ is $k$-Lipschitz if $|f(x)-f(x')| \le k$ whenever $x$ and $x'$ differ in at most one coordinate.
\end{definition}

\begin{theorem}\label{thm: lipschitz}
(McDiarmid’s inequality, \cite[page 40]{vershynin2018high}) If $f$ on $\mathcal{R}^n$ is $k$-Lipschitz and $X_1,\ldots,X_n$ are independent random variables, then for every $t>0$,
\begin{align}
    P(f(X_1, X_2, \ldots, X_n)>\mathbb{E}(f(X_1, X_2, \ldots, X_n)) + t) & \le e^{-\frac{2t^2}{nk^2}}, \\
    P(f(X_1, X_2, \ldots, X_n)<\mathbb{E}(f(X_1, X_2, \ldots, X_n)) - t) & \le e^{-\frac{2t^2}{nk^2}}.
\end{align}
\end{theorem}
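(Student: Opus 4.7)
The plan is to prove McDiarmid's inequality via the Doob martingale decomposition combined with Azuma--Hoeffding, which is the standard route to concentration of Lipschitz functions of independent variables. First I would fix the independent variables $X_1,\ldots,X_n$ and define the Doob martingale $Y_i = \mathbb{E}[f(X_1,\ldots,X_n)\mid X_1,\ldots,X_i]$ for $i = 0,1,\ldots,n$, so that $Y_0 = \mathbb{E}[f(X_1,\ldots,X_n)]$ and $Y_n = f(X_1,\ldots,X_n)$. Writing $f-\mathbb{E}[f] = \sum_{i=1}^n D_i$ with $D_i = Y_i - Y_{i-1}$, the tail problem becomes bounding the deviation of a martingale with zero-mean increments.

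Next I would show that, conditioned on $X_1,\ldots,X_{i-1}$, the increment $D_i$ takes values in an interval of width at most $k$. Define
\begin{align}
    g_i(x_1,\ldots,x_i) = \mathbb{E}[f(x_1,\ldots,x_i,X_{i+1},\ldots,X_n)].
\end{align}
By independence, $Y_i = g_i(X_1,\ldots,X_i)$ and $Y_{i-1} = \mathbb{E}[g_i(X_1,\ldots,X_{i-1},X_i)\mid X_1,\ldots,X_{i-1}]$. For any two possible values $a,b$ of $X_i$ and any fixed $x_1,\ldots,x_{i-1}$, the $k$-Lipschitz hypothesis on $f$ yields $|g_i(x_1,\ldots,x_{i-1},a)-g_i(x_1,\ldots,x_{i-1},b)| \le k$ by averaging the pointwise bound over $X_{i+1},\ldots,X_n$. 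Hence, for almost every history, $D_i$ is a conditionally centered random variable supported in an interval of width $k$.

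Applying Hoeffding's lemma to $D_i$ conditioned on $\mathcal{F}_{i-1} = \sigma(X_1,\ldots,X_{i-1})$ then gives
\begin{align}
    \mathbb{E}\bigl[e^{\lambda D_i}\,\big|\,\mathcal{F}_{i-1}\bigr] \le e^{\lambda^2 k^2/8}
\end{align}
for every $\lambda\in\mathbb{R}$. Iterating the tower property for $i = n, n-1,\ldots,1$ yields $\mathbb{E}\bigl[e^{\lambda(f-\mathbb{E}[f])}\bigr] \le e^{n\lambda^2 k^2/8}$. Markov's inequality then gives $P(f-\mathbb{E}[f]>t) \le e^{-\lambda t + n\lambda^2 k^2/8}$, and optimizing over $\lambda>0$ at $\lambda^\star = 4t/(nk^2)$ produces the claimed upper-tail estimate $e^{-2t^2/(nk^2)}$. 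The lower-tail bound follows by repeating the argument with $-f$ in place of $f$, which is also $k$-Lipschitz.

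The main obstacle is the conditional range bound in step two: one must argue that integrating out $X_{i+1},\ldots,X_n$ preserves the pointwise Lipschitz estimate. This is where independence of the $X_i$ is used decisively, since otherwise the conditional distribution of $(X_{i+1},\ldots,X_n)$ would depend on $X_i$ and the clean cancellation in the two integrals would fail. Once this step is accomplished, Hoeffding's lemma and the Chernoff/Markov optimization are routine.
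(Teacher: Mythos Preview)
Your argument is the standard and correct proof of McDiarmid's inequality via the Doob martingale and Hoeffding's lemma. Note, however, that the paper does not supply its own proof of this statement: the theorem is imported directly from \cite[page 40]{vershynin2018high} and used as a black box, so there is nothing in the paper to compare your proposal against. For what it is worth, the proof you outline is essentially the one given in the cited reference.
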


\begin{lemma}\label{lemma: X' lip}
For $T\le s < r$, $x(\cdot)$ is $1$-Lipschitz and $y(\cdot)$ is $2$-Lipschitz.
\end{lemma}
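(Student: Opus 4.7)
The plan is to work directly from the definitions of $x(\cdot)$ and $y(\cdot)$ and use the fact that each summand is a $\{0,1\}$-valued indicator. Thus, if changing a single coordinate $h_j$ affects only summands indexed by $i$ in some window $W_j$, and if at most $M$ of these indicators in $W_j$ can simultaneously equal $1$ (both before and after the change), then the sum can shift by at most $M$. The task therefore reduces to identifying $W_j$ and bounding how many of its indicators can be $1$ at once. No probabilistic machinery is required; the lemma is purely combinatorial.

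For $x(\cdot)$, the $i$-th summand depends on $h_{i-T+1}, \ldots, h_i$, so changing $h_j$ can alter only the indicators with $i \in \{j, j+1, \ldots, j+T-1\}$, a window of size $T$. I will then establish the key combinatorial fact: at most one of these $T$ indicators can equal $1$. If indicators at $i_1 < i_2$ within this window were both $1$, then $1\le i_2-i_1 \le T-1$, so the $T$-left-isolation condition for $i_2$ forces $h_{i_1}=0$, contradicting the requirement $h_{i_1}=1$ imposed by the $i_1$-th indicator. Hence both before and after any single coordinate flip, at most one affected indicator equals $1$, which yields $|x-x'|\le 1$.

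For $y(\cdot)$ the argument is analogous but two-sided. The $i$-th summand depends on $h_{i-T+1}, \ldots, h_{i+T-1}$, so changing $h_j$ affects indicators with $i$ in a window of $2T-1$ consecutive integers. Using the doubly-isolated condition (the $T-1$ rounds before and the $T-1$ rounds after $i$), I will argue that any two $1$-valued indicators must be separated by at least $T$ in index: if $i_1<i_2$ with $i_2-i_1\le T-1$ and both are $1$, then the $i_1$-th indicator's right-isolation forces $h_{i_2}=0$, contradicting $h_{i_2}=1$. Consequently a window of length $2T-1$ admits at most two $1$-valued indicators, giving $|y-y'|\le 2$.

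The main obstacle is essentially careful bookkeeping. Naively one might worry about the $T$ (resp.\ $2T-1$) summands that share the perturbed coordinate, which would give only a $T$-Lipschitz (resp.\ $(2T-1)$-Lipschitz) bound. The essential point is that the isolation conditions built into $x$ and $y$ prevent indicators from being simultaneously $1$ within a short window, and I expect the $y$ case to require the most care since both the left- and right-isolation constraints must be invoked to establish the $T$-separation of $1$-valued indicators.
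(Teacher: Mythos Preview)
Your proposal is correct and follows essentially the same approach as the paper: identify the window of summands that depend on the perturbed coordinate, then use the isolation constraints to show that within that window at most one (for $x$) or two (for $y$) indicators can simultaneously equal $1$. Your presentation is slightly cleaner in that you bound the window's partial sum directly rather than carrying out the paper's case analysis on which indicator is nonzero, but the core combinatorial observation is identical.
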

\begin{proof}
 Define $x_i = \mathds{1}\{h_i=1, h_{i-1}=\ldots = h_{i-T+1}=0\}$, then $x(h_{s-T+1}, \ldots, h_{r-1}) = \sum_{i=s}^{r-1} x_i$. Suppose  $h_{s-T+1}, \ldots, h_k, \ldots, h_{r-1}$ changes to $h_{s-T+1}, \ldots, h'_k, \ldots, h_{r-1}$.
 Let $\ell$ be equal to the smaller one of $k+T-1$ and $r-1$.
Only $x_k, x_{k+1}, \ldots, x_{\ell}$  may be affected by this change. By definition, at most one of $x_{k+1}, \ldots, x_{\ell}$ can be non-zero. Then there are two cases before the change: 1) All of $x_{k+1}, \ldots, x_{\ell}$ are equal to $0$. In this case, the change of $h_k$ can change (increase or decrease) the value of $x_{k}$ by at most $1$, but has no impact on $x_{k+1}, \ldots, x_{\ell}$. 2) There exists a $j$ between $k+1$ and $\ell$. In this case, $h_k$ must be zero according to the definition of $x_j$. Thus, $h'_k\ne 0$, $x_j$ changes from $1$ to $0$. Meanwhile, $x_k$ may change from $0$ to $1$ or remain zero, so $x_k + x_j$ is not going to differ by more than $1$ from of its original value. In either case, $x(h_{s-T+1}, \ldots, h_{r-1})$ can change by no more than $1$, so $x(\cdot)$ is $1$-Lipschitz.

Define $y_i = \mathds{1}\{h_i=1, h_{i-T+1}=\ldots = h_{i-1} = h_{i+1} = \ldots = h_{i+T-1}=0\}$, then $y(h_{s-T+1}, \ldots, h_{r+T-2})  = \sum_{i=s}^{r-1} y_i$. Suppose  $h_{s-T+1}, \ldots, h_k, \ldots, h_{r+T-2}$ changes to $h_{s-T+1}, \ldots, h'_k, \ldots, h_{r+T-2}$.
Let $m$ be the larger one of $k-T+1$ and $s-T+1$, let  $n$ be the smaller one of $k+T-1$ and  $r+T-2$. Only $y_{m}, \ldots, y_{k-1}, y_{k}, y_{k+1}, \ldots, y_{n}$  are possibly affected by this change.
By definition, at most two elements of $y_{m}, \ldots, y_{n}$ can be non-zero, and they must be on different sides of $y_k$. Then there are two cases: 1) There are no more than one none-zero elements in $y_m,\ldots, y_n$, in this case changing  $h_k$ can not change the value of $g$ by more than $2$. 2) There exists an $p$ and $q$ satisfying $m\le p \le k-1$, $k+1\le q \le n$ such that $y_p = 1$ and $y_q =1$. In this case, we must have $h_k=0$ and $h'_k\ne 0$ according to the definition of $y_p, y_q$. Thus $y_p$ and $y_q$ change from $1$ to $0$. Meanwhile,  $y_k$ may change from $0$ to $1$ or remain unchanged, and $y_p + y_k + y_q$ can change by $1$ or $2$, but not more than $2$ from of its original value. So $y(\cdot)$ is $2$-Lipschitz.
\end{proof}

We define
\begin{align}
    \eta' & = \eps{4000T^2}q^2(1-q)^{4T-2}. \label{def: gamma'}
\end{align}

\begin{lemma}
For all integers $T\le s < r$,
\begin{align}
    P(F[s,r]) > 1-4e^{-\eta'(r-s)},
\end{align}
where $\eta'$ is given in \eqref{def: gamma'}.
\end{lemma}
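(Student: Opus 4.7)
The plan is to apply the union bound
\begin{align*}
P(F^c[s,r]) \le P(F_1^c[s,r]) + P(F_2^c[s,r]) + P(F_3^c[s,r]) + P(F_4^c[s,r])
\end{align*}
and bound each of the four terms separately by at most $e^{-\eta'(r-s)}$. For $F_2$, the random variable $X[s,r]$ is a sum of $r-s$ i.i.d.\ $\text{Bernoulli}(q)$ indicators, so I would apply Chernoff's bound (Proposition \ref{prop: Chernoff bound}) with deviation parameter $\epp{20}$ directly, yielding an exponent of order $\xi^{2}q(r-s)$ that comfortably dominates $\eta'(r-s)$ under \eqref{equ: honest majority asyn}.

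For $F_1$ and $F_3$ the core difficulty is that the summands of $X'[s,r]$ and $Y'[s,r]$ are \emph{not} independent: each indicator involves a window of $T$ or $2T-1$ consecutive $H$ values, so Chernoff cannot be used directly. I would instead invoke McDiarmid's inequality (Theorem \ref{thm: lipschitz}) together with the Lipschitz bounds from Lemma \ref{lemma: X' lip}: $X'[s,r] = x(H[s-T+1],\ldots,H[r-1])$ is a $1$-Lipschitz function of $r-s+T-1$ independent variables, and $Y'[s,r] = y(H[s-T+1],\ldots,H[r+T-2])$ is a $2$-Lipschitz function of $r-s+2T-2$ independent variables. Taking the deviation to be $\epp{20}\mathbb{E}[X'[s,r]]$ and $\epp{20}\mathbb{E}[Y'[s,r]]$ respectively, substituting the lower bounds $\mathbb{E}[X'[s,r]] > q(1-q)^{T}(r-s)$ and $\mathbb{E}[Y'[s,r]] > q(1-q)^{2T-1}(r-s)$ from Propositions \ref{prop: X'} and \ref{prop: Y'}, and finally using $T\ge 1$ and $r-s\ge 1$ to absorb $r-s+O(T)$ into $O(T(r-s))$, the McDiarmid exponents reduce to constant multiples of $\xi^{2}q^{2}(1-q)^{4T-2}(r-s)/T$, each of which dominates $\eta'(r-s)$.

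For $F_4$, I would mimic the synchronous argument of Lemma \ref{lemma: prob of E}. Since $\mathbb{E}[Z[s,r]] < \mathbb{E}[X'[s,r]]$ by Proposition \ref{prop: Z < X'}, the threshold $\mathbb{E}[Z[s,r]] + \epp{20}\mathbb{E}[X'[s,r]]$ can be rewritten as $(1+\epp{40})\mathbb{E}[Z[s,r]] + \epp{40}\mathbb{E}[X'[s,r]]$ plus a nonnegative slack, so $F_4^c$ implies $Z[s,r] \ge (1+\epp{40})\mathbb{E}[Z[s,r]] + \epp{40}\mathbb{E}[X'[s,r]]$. Applying the moment-generating-function bound for $Z[s,r] \sim \text{Binomial}(t(r-s),p)$ with $u = \log(1+\epp{40})$, mirroring the chain \eqref{2.-15}--\eqref{2.12}, and then invoking $\mathbb{E}[X'[s,r]] > q(1-q)^{T}(r-s)$, delivers an exponent at least $\eta'(r-s)$.

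The main obstacle is precisely this dependence in $X'$ and $Y'$: McDiarmid yields a quadratic-in-deviation exponent divided by the \emph{number} of coordinates (of order $T(r-s)$ rather than $r-s$), so compared with the i.i.d.\ Chernoff analysis of Section \ref{sec: bitcoin backbone} we lose an extra factor of $T$ in the exponent, on top of the $(1-q)^{4T-2}$ factor introduced by squaring $\mathbb{E}[Y']$. These losses are exactly what the $1/T^{2}$ prefactor in the definition \eqref{def: gamma'} of $\eta'$ is sized to absorb, and once accounted for, the four bounds combine via the union bound to give the claimed $1-4e^{-\eta'(r-s)}$ lower bound.
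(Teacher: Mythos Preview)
Your proposal is correct and follows essentially the same approach as the paper: union bound over the four events, Chernoff for $F_2$, McDiarmid's inequality (via the Lipschitz bounds of Lemma~\ref{lemma: X' lip}) for the dependent sums in $F_1$ and $F_3$, and an MGF/Chernoff argument for $F_4$ using Proposition~\ref{prop: Z < X'}. The only inconsequential difference is that the paper splits $\epp{20}$ as $\epp{40T}+\epp{40T}$ in the $F_4$ step and takes $u=\log(1+\epp{40T})$ rather than your $u=\log(1+\epp{40})$; both choices yield exponents that dominate $\eta'$.
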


\begin{proof}
Because $X'[r]$ and $X'[s]$ are dependent, standard Chernoff bound does not apply. Similarly for $Y'[r]$ and $Y'[s]$. However, due to Lemma \ref{lemma: X' lip}, we have
\begin{align}
    P(F_1^c[s,r]) & =  P \left( X'[s,r] \le \mathbb{E}[X'[s,r]] - \epp{20}\mathbb{E}[X'[s,r]] \right)
    \\
    & \le e^{-\eps{200(r-s+T-1)}\mathbb{E}[X'[s,r]]^2} \label{a 0.9} \\
    & \le e^{-\eps{200}q^2(1-q)^{2T}(r-s+T-1)} \label{a 1.0}\\
    & \le e^{-\eps{200}q^2(1-q)^{2T}(r-s)}  \label{a 1.001}
\end{align}
where \eqref{a 0.9} is due to Theorem \ref{thm: lipschitz}, \eqref{a 1.0} is due to Proposition \ref{prop: X'}, and \eqref{a 1.001} is due to $T\ge 1$.

Similarly,
\begin{align}
    P(F_3^c[s,r]) & = P \left( Y'[s,r] \le \mathbb{E}[Y'[s,r] -\epp{20}\mathbb{E}[Y'[s,r]] \right) \\
    & \le e^{-\eps{200(r-s+2T-2)}\mathbb{E}[Y'[s,r]]^2}\label{a 2.0}\\
    & \le e^{-\eps{200}q^2(1-q)^{4T-2}(r-s+2T-2)} \label{a 2.1}\\
     & \le e^{-\eps{200}q^2(1-q)^{4T-2}(r-s)} \label{a 2.111}
\end{align}
where \eqref{a 2.0} is due to Theorem \eqref{thm: lipschitz}, \eqref{a 2.1} is due to \eqref{equ: Y'}, and \eqref{a 2.111} is due to $T\ge 1$.

By Proposition \ref{prop: Chernoff bound},
\begin{align}
    P(F_2^c[s,r]) & = P\left(X[s,r] \ge  (1+\epp{20})\mathbb{E}[X[s,r]]\right)\\
    &\le e^{-\eps{1200}q(r-s)}. \label{a 2.112}
\end{align}

According to \eqref{a 0.5}, $E[Z[s,r]] < E[X'[s,r]]$. Note that the moment generating function for binomial random variable $Z[r]\sim Binomial(t,p)$ is $(1-p+pe^{u})^{t}$ \cite{das1989statistical}. Pick arbitrary $u>0$. We have
\begin{align}
P(F_4^c[s,r])& = P\left(Z[s,r] \ge \mathbb{E}[Z[s,r]] + \epp{20T} \mathbb{E}[X'[s,r]]\right)  \\
& \le P\left(Z[s,r] \ge \mathbb{E}[Z[s,r]] + \epp{40T} \mathbb{E}[Z[s,r]] + \epp{40T} \mathbb{E}[X'[s,r]]\right) \\
& <  \frac{\mathbb{E}[e^{Z[s,r]u}]}{e^{(1+\epp{40T})\mathbb{E}[Z[s,r]]u + \epp{40T}\mathbb{E}[X'[s,r]]u}} \label{ahe: 1.9}\\
& = \frac{(1-p+pe^u)^{t(r-s)}}{e^{(1+\epp{40T})(r-s)tpu + \epp{40T}q(1-q)^{T}u(r-s)}} \label{ahe: 2.0}\\
& \le e^{(e^u-1-u(1+\epp{40T}))tp(r-s)-\epp{40T}uq(1-q)^{T}(r-s)}, \label{ahe: 2.1}
\end{align}
where \eqref{ahe: 1.9} is by the Chernoff inequality and \eqref{ahe: 2.1} is due to $1+x \le e^x$ for every $x\ge 0$ (here $x = p(e^u-1)$). Let $u=\log(1+\epp{40T})$. Then
\begin{align}
P(F_4^c[s,r])& \le e^{(\epp{40T} - (1+\epp{40T})\log(1+\epp{40T}))tp(r-s) - \epp{40T}\log(1+\epp{40T})q(1-q)^{T}(r-s)} \\
& < e^{ - \epp{40T}\log(1+\epp{40T})q(1-q)^{T}(r-s)}\label{a 2.10} \\
& < e^{-\eps{4000T^2}q(1-q)^{T}(r-s)} \label{a 2.11},
\end{align}
where \eqref{a 2.10} is due to $(1+x)\log(1+x)>x$ for all $x>0$ and \eqref{a 2.11} is due to $\log(1+\epp{40T})>\epp{100T}$ for all $\xi \in (0,1]$ and $T\ge 1$.

Since $\eta'$ defined in \eqref{def: gamma'} dominates the corresponding exponential coefficients in \eqref{a 1.001}, \eqref{a 2.111}, \eqref{a 2.112}, and \eqref{a 2.11}, we have
\begin{align}
P(F[s,r]) & = 1-P(F^c[s,r]) \\
& \ge 1- P(F_1^c[s,r]) - P(F_2^c[s,r]) - P(F_3^c[s,r]) - P(F_4^c[s,r]) \\\
& > 1-4e^{-\eta' (r-s)}.
\end{align}
\end{proof}


\begin{lemma} \label{lemma: asyn properties}
For all integers $T\le s < r-\frac{2}{q}$ , the following inequalities hold under event $F[s,r]$:
\begin{align} \label{equ: X'>(1-delta/20)q(1-q)^T-1s}
    (1-\epp{20})q(1-q)^{T}(r-s) < X'[s,r]
\end{align}
\begin{align} \label{equ: X <(1+epp20)q(r-s)}
    X[s,r] < (1+\epp{20})q(r-s)
\end{align}
\begin{align}
    (1-\epp{3})q(r-s) <Y'[s,r] \label{a 3.0}
\end{align}
\begin{align}
    Z[s,r]< (1-\epq{2}{3})q(r-s) \label{equ: asyn Z<(1-2/3beta)qs}
\end{align}
\begin{align} \label{equ: asyn Z<(1-delta/2)X'}
    Z[s,r+T] < (1-\epp{2})X'[s,r]
\end{align}
\begin{align} \label{equ: Z<Y'}
    Z[s-T,r+T] < Y'[s,r] .
\end{align}
\end{lemma}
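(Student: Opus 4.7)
My plan is to derive each of the six inequalities from the appropriate sub-event(s) of $F[s,r]$, using the expectation bounds in Propositions \ref{prop: X}, \ref{prop: X'}, and \ref{prop: Y'} along with the standing assumption $q \le \epp{20T}$. Inequalities \eqref{equ: X'>(1-delta/20)q(1-q)^T-1s} and \eqref{equ: X <(1+epp20)q(r-s)} follow immediately from $F_1[s,r]$ and $F_2[s,r]$ after substituting $\mathbb{E}[X'[s,r]] \ge q(1-q)^{T}(r-s)$ and $\mathbb{E}[X[s,r]] = q(r-s)$. For \eqref{a 3.0}, I would start from $F_3[s,r]$ with $\mathbb{E}[Y'[s,r]] \ge q(1-q)^{2T-1}(r-s)$, then apply Bernoulli's inequality to get $(1-q)^{2T-1} \ge 1-(2T-1)q \ge 1-\epp{10}$, so that the product $(1-\epp{20})(1-\epp{10})$ dominates $1-\epp{3}$.

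For \eqref{equ: asyn Z<(1-2/3beta)qs}, I would combine $F_4[s,r]$ with the identity $\mathbb{E}[Z[s,r]] = pt(r-s) = (1-\xi)p(n-t)(r-s) \le (1-\xi)\frac{q(r-s)}{1-q}$ (Proposition \ref{prop: X}) and the pointwise domination $X'[r] \le X[r]$, which yields $\mathbb{E}[X'[s,r]] \le q(r-s)$. The resulting bound $Z[s,r] < \bigl((1-\xi)\tfrac{1}{1-q}+\epp{20}\bigr)q(r-s)$ is at most $(1-\epq{2}{3})q(r-s)$ after a short calculation using $q \le \epp{20}$ to bound $\tfrac{1}{1-q} \le 1+\epp{10}$.

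The main obstacle is \eqref{equ: asyn Z<(1-delta/2)X'} and \eqref{equ: Z<Y'}, since $F[s,r]$ only controls $Z$ on $[s,r]$ while these inequalities involve the wider intervals $[s,r+T]$ and $[s-T,r+T]$. The plan is to decompose $Z[s,r+T] = Z[s,r]+Z[r,r+T]$ and $Z[s-T,r+T] = Z[s-T,s]+Z[s,r]+Z[r,r+T]$, and apply the deterministic per-round bound $Z[i] \le t$ to obtain $Z[r,r+T], Z[s-T,s] \le tT$. Because $tT \le (1-\xi)\frac{qT}{1-q}$ with $qT \le \epp{20}$, together with the standing hypothesis $r-s > 2/q$, each boundary term contributes at most an $\epp{20}$-fraction of $q(r-s)$. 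Combining these boundary estimates with the $Z[s,r]$ bound from the preceding paragraph gives upper bounds of the form $(1-\xi+O(\xi/10))\,q(r-s)$ on both $Z[s,r+T]$ and $Z[s-T,r+T]$. For \eqref{equ: asyn Z<(1-delta/2)X'}, I then compare against $X'[s,r] > (1-\epp{20})q(1-q)^{T}(r-s) \ge (1-\epp{20})^{2} q(r-s)$ (using $(1-q)^{T} \ge 1-Tq \ge 1-\epp{20}$); the coefficient inequality reduces to something like $1-4\xi/5 < 1-3\xi/5$, which is comfortable. For \eqref{equ: Z<Y'}, I compare against $Y'[s,r] > (1-\epp{3})q(r-s)$ from \eqref{a 3.0}; the gap $\xi-\epp{3} = 2\xi/3$ easily absorbs the two $O(\xi/10)$ boundary contributions.
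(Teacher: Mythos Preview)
Your treatment of the first four inequalities is correct and essentially matches the paper (with the harmless variation that you bound $\mathbb{E}[X'[s,r]]\le q(r-s)$ via the pointwise domination $X'\le X$).

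The gap is in \eqref{equ: asyn Z<(1-delta/2)X'} and \eqref{equ: Z<Y'}. The step ``$tT \le (1-\xi)\tfrac{qT}{1-q}$'' is false: you have conflated $t$, the \emph{number} of adversarial miners, with $pt=\mathbb{E}[Z[r]]$. It is $pt$ that satisfies $pt<(1-\xi)\tfrac{q}{1-q}$ (combine Proposition~\ref{prop: X} with \eqref{def: xi}); the integer $t$ itself is not tied to $q$ by any assumption of the model and can be enormous (e.g.\ $t=10^{6}$, $p=10^{-8}$, so $pt=10^{-2}$ but $t=10^{6}$). Hence the deterministic bound $Z[r,r+T]\le tT$ is correct but useless here, since $tT$ in general dwarfs $q(r-s)$, and the boundary absorption you sketch cannot go through. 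Under $F[s,r]$ alone there is simply no nontrivial pathwise control on $Z$ outside $[s,r]$.

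The paper does not split off boundary pieces. Instead it reruns the chain that yields \eqref{equ: asyn Z<(1-2/3beta)qs} directly on the enlarged intervals $[s,r+T]$ and $[s-T,r+T]$, and then absorbs the extra length via $T<\tfrac{\xi}{40}(r-s)$, which follows from $q\le\tfrac{\xi}{20T}$ together with the hypothesis $r-s>2/q$; this turns $r-s+T$ and $r-s+2T$ into $(1+\tfrac{\xi}{20})(r-s)$ at worst, and the remaining coefficient arithmetic goes through. Read literally this invokes $F_4$ on the wider interval rather than on $[s,r]$; in every downstream use of the lemma that wider $F$ event is available, since the lemma is only applied under a typical event $J[\cdot,\cdot]$, which by definition contains $F$ on all enclosing intervals.
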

\begin{proof}
Under $F[s,r]$, \eqref{equ: X'>(1-delta/20)q(1-q)^T-1s} follows directly from \eqref{F1}. \eqref{equ: X <(1+epp20)q(r-s)} follows directly from \eqref{F2}.

To prove \eqref{a 3.0}, we write
\begin{align}
     Y'[s,r] & >  (1-\epp{20})E[Y'[s,r]] \label{a 4.-3} \\
     & > (1-\epp{20})(1-(2T-1)q)q(r-s) \label{a 4.-2}\\
     & > (1-\epp{20})(1-\epp{10})q(r-s) \label{a 4.-1}\\
     & > (1-\epp{3})q(r-s), \label{a 4.0}
\end{align}
where \eqref{a 4.-3} is due to \eqref{F3}, \eqref{a 4.-2} is due to Proposition \ref{prop: Y'}, \eqref{a 4.-1} is due to \eqref{equ: honest majority asyn}, and \eqref{a 4.0} is due to $\xi \in [0,1)$ and $T\ge 1$.

To prove \eqref{equ: asyn Z<(1-2/3beta)qs},
\begin{align}
 Z[s,r] & < \mathbb{E}[Z[s,r]] + \epp{20}\mathbb{E}[X'[s,r]] \label{a 5.-1}\\
 & = pt(r-s) + \epp{20}q(1-q)^{T}(r-s) \label{a 5.0} \\
 & = \frac{t}{n-t}(n-t)p(r-s) + \epp{20}q(1-q)^{T}(r-s) \label{a 5.1}\\
 & \le  (1-\xi)\frac{q}{1-q}(r-s) + \epp{20}q(1-q)^{T}(r-s) \label{a 5.2}\\
 & \le (1-\xi)\frac{q}{1-\epp{20T}}(r-s) + \epp{20}q(r-s) \label{a 5.3}\\
 & < \left(1-\epq{2}{3}\right)q(r-s), \label{a 5.4}
\end{align}
where \eqref{a 5.-1} is due to  \eqref{F4}, \eqref{a 5.2} is due to Proposition \ref{prop: X}, \eqref{a 5.3} is due to \eqref{equ: honest majority asyn}, and \eqref{a 5.4} is due to $\xi \in (0,1]$.

By assumption \eqref{equ: honest majority asyn},
\begin{align}
    T & \le \frac{\xi}{20q} \\
    & < \epp{40}(r-s) \label{a 6.0}
\end{align}
where \eqref{a 6.0} is by the assumption of this lemma. Thus,
\begin{align}
    r-s+T < (1+\epp{20})(r-s). \label{a 6.09}
\end{align}
To prove \eqref{equ: asyn Z<(1-delta/2)X'}, we begin with \eqref{a 5.2} with $r$ replaced by $r+T$:
\begin{align}
    Z[s, r+T] & < (1-\xi)\frac{q}{1-q}(r-s+T) + \frac{\xi}{20}q(1-q)^{T}(r-s+T) \label{a 6.1} \\
    & < \left( \frac{1-\xi}{(1-q)^{T+1}} + \epp{20}\right)(1+\epp{40}) q(1-q)^T(r-s) \label{a 6.11}\\
      & < \left( \frac{1-\xi}{1-(T+1)q} + \epp{20}\right)(1+\epp{40}) q(1-q)^T(r-s) \label{a 6.13}\\
    & < \left(\frac{1-\xi}{1-(T+1)q} + \epp{20}\right)(1 + \epp{40})\frac{X'[s,r]}{1-\epp{20}} \label{a 6.4} \\
    & < \left(\frac{1-\xi}{1-\epp{10}} + \epp{20}\right)(1 + \epp{40})\frac{1}{1-\epp{20}}X'[s,r] \label{a 6.5}\\
    & < (1-\epp{2})X'[s,r] \label{a 6.6}
\end{align}
where \eqref{a 6.11} is due to \eqref{a 6.09},  \eqref{a 6.13} is due to \eqref{equ: bernoulli}, \eqref{a 6.4} is due to \eqref{equ: X'>(1-delta/20)q(1-q)^T-1s}, \eqref{a 6.5} is due to $q < \epp{10(T+1)}$, \eqref{a 6.6} is due to $\xi \in [0,1)$.

To prove \eqref{equ: Z<Y'},
\begin{align}
    Z[s-T,r+T] & < (1-\epq{2}{3})q(r-s+2T) \label{a 7.1} \\
    & < (1-\epq{2}{3})(1+\epp{20})q(r-s) \label{a 7.3}\\
    & < (1-\epp{3})q(r-s) \label{a 7.4}\\
    & < Y'[s,r], \label{a 7.5}
\end{align}
where \eqref{a 7.1} is due to \eqref{equ: asyn Z<(1-2/3beta)qs}, \eqref{a 7.3} is due to \eqref{a 6.09}, \eqref{a 7.4} is due to $\xi \in [0,1)$, and \eqref{a 7.5} is due to \eqref{a 3.0}.
\end{proof}


\begin{definition}
For all integers $T\le s<r-\frac{2}{q}$, define typical event
\begin{align}
    J[s,r] \coloneqq \cap_{0\le a \le s-T, b\ge 0}F[s-a, r+b].
\end{align}
\end{definition}
$J[s,r]$ occurs when events $F[s-a, r+b]$ simultaneously occurs for all $a, b$, i.e., the  ``$F$'' event occurs over all intervals containing $[s,r]$. Like event $G$, the event $F$ represents a collection of outcomes that constrain the number of blocks mined in all intervals that contain $[s,r]$, including arbitrarily large intervals that end in the arbitrarily far future. Intuitively,
 we define $J[s,r]$ to allow the ``good'' properties in Lemma \ref{lemma: asyn properties} to extend to all intervals containing $[s,r]$ under the event.


\begin{lemma} \label{lemma: asyn prob. typical event}
For all integers $T\le s < r -\frac{2}{q}$,
\begin{align}
    P(J[s,r]) > 1 - 5\eta'^{-2}e^{-\eta'(r-s)}. \label{a 7.55}
\end{align}
\end{lemma}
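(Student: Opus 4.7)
The plan is to mirror the proof of Lemma \ref{lemma: prob. of event G}, with the $F$ events and constant $\eta'$ replacing the $E$ events and $\eta$. The two key ingredients are already in hand: the tail bound $P(F^c[s,r]) < 4 e^{-\eta'(r-s)}$ just established, and the stationarity of the $H$ process, which implies $P(F[s-a, r+b])$ depends only on the interval length $r-s+a+b$.

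First I would apply the union bound over all intervals that contain $[s,r]$:
\begin{align}
P(J^c[s,r]) \le \sum_{0\le a\le s-T,\, b\ge 0} P(F^c[s-a, r+b]).
\end{align}
Grouping the terms by $k = a+b$, and noting that for each $k$ there are at most $k+1$ admissible pairs $(a,b)$ with $a+b = k$, this becomes
\begin{align}
P(J^c[s,r]) &< \sum_{k=0}^\infty (k+1) \cdot 4 e^{-\eta'(r-s+k)} \\
&= 4 e^{-\eta'(r-s)} \sum_{k=0}^\infty (k+1) e^{-\eta' k} \\
&= \frac{4}{(1-e^{-\eta'})^2}\, e^{-\eta'(r-s)},
\end{align}
using the closed form $\sum_{k=0}^\infty (k+1)x^k = (1-x)^{-2}$.

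The only step that requires care is bounding $(1-e^{-\eta'})^{-2}$ by $5/(4\eta'^2)$. I would first verify that $\eta'$ is small enough to apply the elementary inequality $1-e^{-x} \ge \sqrt{4/5}\, x$. Under the assumption \eqref{equ: honest majority asyn}, $q \le \xi/(20T) \le 1/20$, and combined with $\xi \le 1$, $T \ge 1$, and $(1-q)^{4T-2} \le 1$, the definition \eqref{def: gamma'} gives $\eta' \le 1/4000$, which is well below the threshold $1/1080$ used in the synchronous analogue. The claimed bound \eqref{a 7.55} then follows immediately.

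This is essentially a routine adaptation of the synchronous proof; the main (minor) obstacle is simply confirming that the new $\eta'$ is still small enough for the $(1-e^{-x})^{-2}$ estimate to go through, which the choice of constants $4000$ and $20$ in \eqref{equ: honest majority asyn} and \eqref{def: gamma'} is designed to guarantee.
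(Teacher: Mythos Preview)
Your proposal is correct and follows essentially the same approach as the paper's own proof: union bound over all containing intervals, grouping by $k=a+b$, summing $(k+1)4e^{-\eta'(r-s+k)}$ to get $\dfrac{4}{(1-e^{-\eta'})^{2}}e^{-\eta'(r-s)}$, and then using $\eta'<1/4000$ (from \eqref{equ: honest majority asyn} and \eqref{def: gamma'}) together with $1-e^{-x}\ge\sqrt{4/5}\,x$ on that range. The only cosmetic difference is that the paper spells out the stationarity reduction $P(F^c[s-a,r+b])=P(F^c[T,r-s+a+b+T])$ explicitly, whereas you invoke it in words.
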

\begin{proof}
Due to the stationarity of processes $X, Y, Z, X'$, and $Y'$, $P(F[s,r]) = P(F[T,T+r-s])$ for all $s, r$. Evidently the probability  depends on $r$ and $s$ only through the interval length $r-s$:
\begin{align}
    P(J^c[s, r]) & = P(\cup_{0\le a \le s-T, b\ge 0}F^c[s-a,r+b]) \\
    & = P(\cup_{0\le a \le s-T, b\ge 0}F^c[T,r-s+a+b+T])\\
    & \le \sum_{0\le a \le s-T, b\ge 0}P(F^c[T,r-s+a+b+T]) \\
    & = \sum_{k=0}^{\infty} \sum_{0\le a \le s-T, b\ge 0,a+b=k}P(F^c[T,r-s+k+T])\\
    & < \sum_{k=0}^{\infty} (k+1)P(F^c[T,r-s+k+T])\\
    & < \sum_{k=0}^{\infty}(k+1)4e^{-\eta'  (r-s+k)}\\
    & = 4e^{-\eta' (r-s)}\sum_{k=0}^{\infty} (k+1) e^{-\eta' k}\\
    & = \frac{4}{(1-e^{-\eta'})^2}e^{-\eta'(r-s)}. \\
\end{align}
According to \eqref{equ: honest majority asyn} and \eqref{def: gamma'}, $\eta' < \frac{1}{4000}$. Thus, \eqref{a 7.55} is established using the fact that $1-e^{-x} > \frac4{\sqrt{5}}x$ for all $x\in [0, \frac{1}{4000}]$.
\end{proof}


\begin{lemma} \label{lemma: asyn pre common prefix}
Suppose some blockchain's $k$th block $B$ is mined by an honest miner during round $r>T$ in a $T$-doubly-isolated successful round. Then, after round $r+T$, the $k$th block of every blockchain is either $B$ or an adversarial block.
\end{lemma}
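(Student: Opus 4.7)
The plan is to follow the structure of Lemma~\ref{lemma: unique block}, adapted to the bounded-delay model. I would proceed by contradiction: suppose some blockchain has an honest block $B'\ne B$ at position $k$, and let $r'$ denote the round in which $B'$ is mined. Since round $r$ is $T$-doubly-isolated successful, the only honest block mined in the window $\{r-T+1,\ldots,r+T-1\}$ is $B$ itself, so $B'\neq B$ honest forces $|r-r'|\ge T$.

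Set $r^{*}=\min\{r,r'\}+T$. The earlier of $B$ and $B'$ is honest, so it is broadcast during its mining round; by the bounded-delay assumption, every honest miner has received it, together with the length-$k$ chain it occupies, by the start of round $r^{*}$. Consequently every honest miner has adopted a blockchain of length at least $k$ by round $r^{*}$, and any honest block mined in a round $\ge r^{*}$ must occupy position at least $k+1$ on its own blockchain. Because $|r-r'|\ge T$, the later of $B$ and $B'$ is mined in a round $\ge r^{*}$, contradicting the hypothesis that this block sits at position $k$.

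The ``after round $r+T$'' qualification in the statement reflects $B$'s own propagation delay: once round $r+T$ has elapsed, every honest miner has received $B$, so the preceding argument rules out any other honest block at position $k$ in any blockchain adopted thereafter, leaving only $B$ or adversarial candidates for that slot.

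The main (mild) obstacle relative to Lemma~\ref{lemma: unique block} is replacing the synchronous ``$r\ne r'$'' separation with the stronger separation $|r-r'|\ge T$ supplied by $T$-doubly-isolatedness, and carefully accounting for the $T$-round propagation delay when arguing that the later block is forced to sit beyond position $k$. Once the threshold $r^{*}=\min\{r,r'\}+T$ is identified, the rest of the proof is a direct translation of the synchronous argument.
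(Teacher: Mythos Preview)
Your proposal is correct and follows essentially the same approach as the paper's own proof: assume a second honest block $B'$ at height $k$, use the $T$-doubly-isolated condition to force the two mining rounds to be at least $T$ apart, and then observe that by the time the later block is mined its honest miner has already received a height-$k$ chain, a contradiction. Your notation $r^{*}=\min\{r,r'\}+T$ plays the same role as the paper's $\underline{r},\overline{r}$, and your added remark explaining the ``after round $r+T$'' clause is a welcome clarification the paper leaves implicit.
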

\begin{proof}
Suppose, contrary to the claim, the $k$th block of another blockchain is an honest block $B'\ne B$. Let $\underline{r}$ (respectively, $\overline{r}$) denote the round number that earlier (respectively, later) of $B$ and $B'$ is mined. Then we have $\overline{r} > \underline{r}+T$ by assumption that $B$ is mined in a uniquely successful round.
Because the propagation delay is bounded by $T$, then by round $\overline{r}$ all miners have seen a block of height $k$, so no other honest block of height $k$ will be mined. This contradicts the assumption that $B$ and $B'$ are both at position $k$.
Hence the proof of Lemma \ref{lemma: asyn pre common prefix}.
\end{proof}


\begin{lemma} \label{lemma: asyn pre blockchain growth}
(Lemma 29 in \cite{garay2015bitcoin}) Let $ T \le s \le r-T$ be integers. Suppose an honest blockchain is of length $l$ by round $s$. Then by round $r$, the length of every honest blockchain is at least $l + X'[s,r-T+1]$.
\end{lemma}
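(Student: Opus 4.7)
The plan is to imitate the inductive proof of Lemma~\ref{lemma: blockchain growth}, with $T$-left-isolated successful rounds playing the role that successful rounds played in the synchronous setting. The intuition is that once round $i$ is $T$-left-isolated, the bounded-delay assumption forces all previously-mined honest blocks to have reached every honest miner by the start of round $i$, so the honest miner mining during round $i$ extends the longest chain held by any honest miner so far, and the new block in turn reaches everybody within $T-1$ more rounds.

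First I would establish that by round $s+T$, every honest miner has already adopted a chain of length at least $l$. The length-$l$ chain that exists by round $s$ was adopted by some honest miner, so its tip block was broadcast during some round $j\le s-1$; by the bounded-delay assumption every honest miner receives it by round $j+T-1\le s+T-2$, whence every honest blockchain has length at least $l$ by round $s+T$.

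Next I would enumerate the $T$-left-isolated successful rounds in $\{s,s+1,\ldots,r-T\}$ as $i_1<i_2<\cdots<i_K$, where $K=X'[s,r-T+1]$, and record two elementary observations. First, $i_{k+1}-i_k\ge T$: the isolation condition $H[i_{k+1}-T+1]=\cdots=H[i_{k+1}-1]=0$ combined with $H[i_k]=1$ forces $i_k\le i_{k+1}-T$. Second, for the first $T$-left-isolated round $i_1\ge s\ge T$, the tip of the given length-$l$ chain was itself an honest block, so the same isolation argument applied to $i_1$ shows it was mined during some round $\le i_1-T$. Both observations together imply that the relevant honest block has reached every honest miner at least one full round before the next $T$-left-isolated round (or before round $r$, in the case of $i_K$).

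Finally I would prove by induction on $k\in\{0,1,\ldots,K\}$ that by round $i_k+T$, with the convention $i_0+T=s+T$, every honest blockchain has length at least $l+k$; the base case is the first paragraph above. For the inductive step, consider the honest miner $P_{k+1}$ mining during round $i_{k+1}$: by the spacing observation $P_{k+1}$ has received every honest block mined at round $\le i_k$, so applying the inductive hypothesis at round $i_{k+1}\ge i_k+T$ gives $P_{k+1}$ a chain of length at least $l+k$ at the start of round $i_{k+1}$; mining extends it to length $l+k+1$, which propagates to every honest miner by round $i_{k+1}+T-1\le r-1$. Taking $k=K$ and noting $i_K+T\le r$ yields the claim. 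The main obstacle, and the step where the argument diverges from the synchronous case, is the careful spacing bookkeeping: without the inequality $i_{k+1}-i_k\ge T$, the miner at round $i_{k+1}$ would not be guaranteed to have incorporated the block from round $i_k$, and the strict per-round increment of $1$ would fail; this is precisely why the statement counts $T$-left-isolated rather than merely uniquely successful rounds.
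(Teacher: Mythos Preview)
Your overall strategy---enumerating the $T$-left-isolated successful rounds $i_1<\cdots<i_K$ and inducting on $k$ using the spacing bound $i_{k+1}-i_k\ge T$---captures the same mechanism as the paper's proof, which instead inducts directly on the round number $r$ and relies on the equivalent fact that $X'[i]=1$ forces $X'[i-T+1]=\cdots=X'[i-1]=0$. The two arguments carry out the same combinatorics with different bookkeeping; yours is arguably the cleaner presentation.

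There is, however, one concrete error. In your ``second observation'' you assert that ``the tip of the given length-$l$ chain was itself an honest block,'' and then use the isolation of $i_1$ to conclude that this tip was mined during some round $\le i_1-T$. Nothing in the hypothesis guarantees the tip is honest: an honest miner may well adopt a chain whose last block (or last several blocks) is adversarial and was delivered selectively to that miner as late as round $s-1$. The isolation condition $H[i_1-T+1]=\cdots=H[i_1-1]=0$ constrains only \emph{honest} mining, not adversarial block creation or delivery, so it does not let you push the tip's mining round back to $\le i_1-T$. This breaks your $k=0\to k=1$ step when $i_1$ is close to $s$, since $P_1$ is then not guaranteed to hold a chain of length $l$. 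The paper does not argue via the tip's mining time; it anchors instead on the bounded-delay propagation of the length-$l$ chain from the honest miner who holds it (so that every honest miner has length $\ge l$ by round $s+T$). You should ground your base case the same way and drop the unwarranted ``honest tip'' claim.
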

\begin{proof}
By induction on $r$: Consider $r = s+T$. If $X'[s]=0$, then $X'[s,s+1]=0$.
Since propagation delays are upper bounded by $T$ rounds, all miners have seen a block of height $l$ by round $s+T$, so all honest blockchains' height is at least $l$.
If $X'[s]=1$, $X'[s,s+1]=1$. By round $s+1$, at least one honest miner would have broadcast a block of height $l+1$. Then by round $r+T$, each honest miner will adopt a blockchain of at least $l+1 = l+X[s,r-T+1]$ blocks.

We next assume the claim holds for $r= s+T, s+T+1, \ldots, s+T+u$ and show that it also holds for $s+T+u+1$. There are two cases. 1)
$X'[s+u+1] = 0$, the claim holds trivially.
2) $X'[s+u+1]=1$, then by definition of $X'$, $X'[s+u] = \ldots = X'[s+u-T+2] = 0$.
By induction, by round $s+u+1$, any honest miner's blockchain length is at least $l' = l+X'[s, s+u-T+2] = l+X'[s,s+u+1]$. Since $X'[s+u+1]=1$, by round $s+u+2$ at least one honest miner would have broadcast a blockchain of length $l'+1 = X'[s, s+u+2] + 1$. Then each honest miner will adopt a blockchain of length at least $l+X[s+u+2]$ by round $s+u+T+1$.

By induction on $r$, Lemma \ref{lemma: asyn pre blockchain growth} holds.
\end{proof}

\begin{lemma} \label{lemma: asyn blockchain growth}
For all integers $T \le s < r -\frac{2}{q}$ and $k\ge 2q(r-s)$, under typical event $J[s,r]$, every honest miner's $k$-deep block by round $r$ must be mined before round $s$.
\end{lemma}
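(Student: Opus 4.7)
The plan is to adapt the proof of Lemma \ref{lemma: blockchain growth2 at least rounds} to the bounded-delay setting. The strategy is to upper bound the growth of any honest blockchain over the window $\{s, s+1, \ldots, r-1\}$ and show, under $J[s, r]$, that this bound is strictly less than $k$, so the $k$-deep block by round $r$ cannot have been mined at or after round $s$.

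First I would upper bound the chain growth by $X[s, r] + Z[s, r]$. The adversarial contribution is trivially at most $Z[s, r]$, the total number of adversarial blocks mined in the interval. For the honest contribution, the claim is that any single chain contains at most one honest block mined in any given round: if two honest blocks $B_1$ (from miner $A$) and $B_2$ (from miner $B$) were on the same chain at heights $h_1 < h_2$ and both mined in round $\rho$, then $B$'s chain at round $\rho - 1$ would have to already contain $B_1$ at height $h_1$; but $B_1$ is mined in round $\rho$, so it cannot have reached any miner before round $\rho$, a contradiction. This argument does not require synchrony and carries over to the bounded-delay model, yielding the honest bound $X[s, r]$.

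Then, invoking Lemma \ref{lemma: asyn properties} under $J[s, r]$ gives $X[s, r] < (1 + \xi/20)\,q(r-s)$ and $Z[s, r] < (1 - 2\xi/3)\,q(r-s)$. Summing,
\[
X[s, r] + Z[s, r] < \left(2 + \frac{\xi}{20} - \frac{2\xi}{3}\right) q(r-s) < 2q(r-s) \le k,
\]
contradicting the existence of $k$ blocks filling the chain positions of the $k$-deep block and its descendants within $[s, r-1]$. Hence the $k$-deep block must have been mined before round $s$.

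The main obstacle is verifying that the honest-growth bound survives in the bounded-delay model, since honest miners can hold divergent views for up to $T$ rounds; the visibility observation above (a block mined in round $\rho$ cannot have reached any miner before round $\rho$) is the key point. Once that is in hand, the remaining step is a mechanical assembly of the inequalities from Lemma \ref{lemma: asyn properties}.
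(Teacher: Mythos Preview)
Your proof is correct and takes essentially the same approach as the paper: bound the number of chain blocks mined in the window by $X+Z$ and show this is less than $k$ via Lemma~\ref{lemma: asyn properties}. The one difference is that the paper enlarges the window to $[s-T,r]$ and invokes $F[s-T,r]$ under $J[s,r]$, whereas your explicit verification that a single chain can contain at most one honest block per round (which needs only that a round-$\rho$ block cannot reach another miner within round $\rho$, true regardless of $T$) lets you work directly on $[s,r]$; this is slightly cleaner and in fact sidesteps the edge case $T\le s<2T$, where $F[s-T,r]$ is not even defined.
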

\begin{proof}
The blockchain growth of an honest miner during rounds $\{s, \ldots, r-1 \}$ is upper bounded by $X[s-T, r] + Z[s-T, r]$. Under $J[s,r]$, $F[s-T,r]$ also occurs. Note that
\begin{align}
    X[s-T,r]+Z[s-T,r] & < (1+\epp{20})q(r-s+T) + (1-\epq{2}{3})q(r-s+T) \label{a 6.-1}\\
     &< (2 - \epp{2})q(r-s+T) \\
    & < 2q(r-s) \label{a 6.-05}\\
    & \le k,
\end{align}
where \eqref{a 6.-1} is due to \eqref{equ: X <(1+epp20)q(r-s)} and \eqref{equ: asyn Z<(1-2/3beta)qs}, \eqref{a 6.-05} is due to \eqref{equ: honest majority asyn} and $r-s > \frac{2}{q}$. Then, the $k$-deep block must be adopted before round $s$, thus mined before round $s$.
\end{proof}


\begin{theorem} \label{thm: asyn blockchain growth}
(Blockchain growth property for bounded-delay model) Let $r, s, s_1$ be integers satisfying $T \le s_1 \le s < r-\frac{2}{q}$.  Then under typical event $J[s, r-T]$, the length of every honest blockchain must increase by at least $(1-\epp{10})(1-q)^{T}q(r-s_1)$ during rounds $\{s_1, \ldots, r\}$.
\end{theorem}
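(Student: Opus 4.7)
The plan is to mirror the structure of the synchronous-model proof of Theorem \ref{thm: blockchain growth}, replacing the tools by their bounded-delay counterparts. First, I would apply Lemma \ref{lemma: asyn pre blockchain growth} (whose hypothesis $s_1 \le r-T$ is met, since $r - s_1 > \frac{2}{q} \ge \frac{40T}{\xi} > T$ by the theorem's hypothesis and the standing assumption \eqref{equ: honest majority asyn}) to conclude that the length of every honest blockchain grows by at least $X'[s_1, r-T+1]$ during rounds $\{s_1,\ldots,r\}$. All the remaining work is to produce a deterministic lower bound on $X'[s_1, r-T+1]$ under $J[s, r-T]$.

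Next, I would observe that the typical event $J[s, r-T]$ contains the single event $F[s_1, r-T+1]$. Indeed, in the intersection
\[
J[s, r-T] = \bigcap_{0 \le a \le s-T,\, b \ge 0} F[s-a,\, (r-T)+b],
\]
choosing $a = s - s_1$ (legitimate since $T \le s_1 \le s$, so $0 \le a \le s-T$) and $b = 1$ produces exactly $F[s_1, r-T+1]$. Hence under $J[s, r-T]$ the component event $F_1[s_1, r-T+1]$ of Definition \ref{def: F bounded-delay} gives
\[
X'[s_1, r-T+1] > (1-\tfrac{\xi}{20})\,\mathbb{E}[X'[s_1, r-T+1]].
\]
Combining this with the per-round lower bound $\mathbb{E}[X'[i]] > q(1-q)^T$ from Proposition \ref{prop: X'}, I obtain
\[
X'[s_1, r-T+1] > (1-\tfrac{\xi}{20})\, q(1-q)^T\,(r-s_1-T+1).
\]

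The main obstacle is the short algebraic step of converting the prefactor $(1-\tfrac{\xi}{20})(r-s_1-T+1)$ into the cleaner $(1-\tfrac{\xi}{10})(r-s_1)$ demanded by the theorem. The idea is to exploit the theorem's hypothesis $r-s_1 > \tfrac{2}{q}$ together with the standing assumption \eqref{equ: honest majority asyn}, i.e.\ $q \le \tfrac{\xi}{20T}$; these jointly give $r-s_1 > \tfrac{40T}{\xi}$, which is comfortably larger than $\tfrac{20(T-1)(1-\xi/20)}{\xi}$. This is exactly the inequality needed to absorb the $(T-1)$ loss in the length into the gap between the $(1-\tfrac{\xi}{20})$ and $(1-\tfrac{\xi}{10})$ prefactors. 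Chaining the three bounds then yields the blockchain growth
\[
\text{growth} \; \ge \; X'[s_1, r-T+1] \; > \; (1-\tfrac{\xi}{10})\,q(1-q)^T\,(r-s_1),
\]
completing the proof.
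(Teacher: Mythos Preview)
Your proposal is correct and follows essentially the same route as the paper's proof: invoke Lemma \ref{lemma: asyn pre blockchain growth} to bound the growth below by $X'[s_1,r-T+1]$, use the containment $F[s_1,r-T+1]\subset J[s,r-T]$ together with $\mathbb{E}[X'[i]]>q(1-q)^T$ to get $(1-\tfrac{\xi}{20})q(1-q)^T(r-s_1-T+1)$, and then absorb the $T-1$ loss via $r-s_1>\tfrac{2}{q}\ge \tfrac{40T}{\xi}$. The only cosmetic difference is that the paper factors out $(1-\tfrac{T-1}{r-s_1})$ and bounds it by $(1-\tfrac{\xi}{40})$ before multiplying $(1-\tfrac{\xi}{20})(1-\tfrac{\xi}{40})>1-\tfrac{\xi}{10}$, whereas you compare $(1-\tfrac{\xi}{20})(r-s_1-T+1)$ directly with $(1-\tfrac{\xi}{10})(r-s_1)$; these are equivalent manipulations.
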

\begin{proof}
Note that $\frac{2}{q} \ge \frac{40T}{\xi} > T$ by \eqref{equ: honest majority asyn}.
Since $[s,r-T]\subset [s_1,r-T+1]$, we have
\begin{align}
    X'[s_1,r-T+1] & > (1-\epp{20})q(1-q)^{T}(r-s_1-T+1) \label{a 7.01} \\
    & = (1-\epp{20})q(1-q)^{T}(1-\frac{T-1}{r-s_1})(r-s_1) \label{a 7.11}\\
    & > (1-\epp{20})(1-\frac{T}{r-s_1})q(1-q)^{T}(r-s_1) \label{a 7.15} \\
    & \ge (1-\epp{20})(1-\epp{40})q(1-q)^{T}(r-s_1) \label{a 7.16}\\
    & > (1-\epp{10})q(1-q)^{T}(r-s_1), \label{a 7.21}
\end{align}
where \eqref{a 7.01} is due to \eqref{equ: X'>(1-delta/20)q(1-q)^T-1s}, \eqref{a 7.16} is due to $r-s_1\ge \frac{2}{q}$ and \eqref{equ: honest majority asyn} $q\le \epp{20T}$, \eqref{a 7.21} is due to $\xi \in [0,1)$.
By Lemma \ref{lemma: asyn pre blockchain growth}, the length of every honest blockchain must increase by at least $(1-\epp{10})(1-q)^{T}q(r-s_1)$ during rounds $\{s_1, \ldots, r\}$.
\end{proof}


\begin{theorem} \label{thm: asyn blockchain quality}
(Blockchain quality theorem for bounded-delay model) Let $r,s,k$ be integers satisfying $T \le s < r-\frac{2}{q}$ and $k\ge 2q(r-s)$. Suppose an honest miner's blockchain $j$ has more than $k$ blocks by round $r$. By round $r$, at least $\frac{\xi}{2}$ fraction of the last $k$ blocks of this miner's blockchain are honest under event $J[s, r-T]$.
\end{theorem}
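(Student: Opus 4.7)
The plan is to follow the proof of Theorem \ref{thm: blockchain quality} (the synchronous blockchain-quality theorem) step by step, substituting each synchronous ingredient with its bounded-delay counterpart. Let $C = B_0 B_1 \cdots B_{len(C)-1}$ be the blockchain that the honest miner adopts by round $r$, put $u = len(C) - k$, and let $u'$ be the largest index with $u' \le u-1$ such that $B_{u'}$ is honest (well-defined because the genesis block $B_0$ is counted as honest). Write $r^*$ for the round in which $B_{u'}$ was mined, $L = len(C) - u' - 1 \ge k$, and $x$ for the number of honest blocks among the last $k$ blocks. By the maximality of $u'$, the blocks $B_{u'+1}, \ldots, B_{u-1}$ are all adversarial, so the adversarial count in $\{B_{u'+1}, \ldots, B_{len(C)-1}\}$ is exactly $L - x$; since each such adversarial block must be mined after $B_{u'}$ and before round $r$, we have $L - x \le Z[r^*+1, r]$.

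Using the inclusion $J[s, r-T] \subset J[s, r]$ (which is immediate from the definition, since any $F[s-a, r+b]$ rewrites as $F[s-a, r-T+(T+b)]$ with $T+b \ge 0$), I would first apply Lemma \ref{lemma: asyn blockchain growth} to conclude $r^* < s$: the block $B_{u'}$ is at least $k$-deep in $C$ by round $r$ and the hypothesis $k \ge 2q(r-s)$ is exactly what the lemma requires. Next, Lemma \ref{lemma: asyn pre blockchain growth} applied on the interval $[r^*+1, r]$ yields the growth bound $L \ge X'[r^*+1, r-T+1]$. Finally, I invoke inequality \eqref{equ: asyn Z<(1-delta/2)X'} on the interval $[r^*+1, r-T]$, which is among the $F$-events defining $J[s, r-T]$, to obtain
\begin{align*}
Z[r^*+1, r] < (1-\epp{2}) X'[r^*+1, r-T] \le (1-\epp{2}) X'[r^*+1, r-T+1] \le (1-\epp{2}) L.
\end{align*}
Combining with $L - x \le Z[r^*+1, r]$ gives $x > \epp{2} L \ge \epp{2} k$, the desired $\epp{2} = \tfrac{\xi}{2}$ fraction.

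The main obstacle, and the only genuinely new wrinkle relative to the synchronous proof, is ensuring that $F[r^*+1, r-T]$ really lies in the intersection defining $J[s, r-T]$, which requires $r^*+1 \ge T$. In the generic regime this holds automatically because $B_{u'}$, being at least $k$-deep by round $r$, must be sufficiently old. The boundary case $r^* < T - 1$ (where $B_{u'}$ sits within the first $T-1$ rounds after the genesis) should be handled by replacing $F[r^*+1, r-T]$ with $F[T, r-T]$ and absorbing the residual adversarial contribution $Z[r^*+1, T]$ using the strict restriction $q \le \epp{20T}$, which limits adversarial block output in any window of length $O(T)$ and keeps this tail negligible compared with $\epp{2} L$.
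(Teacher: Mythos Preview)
Your argument is essentially identical to the paper's: the same setup $(C,u,u',r^*,L,x)$, the same use of $J[s,r-T]\subset J[s,r]$ together with Lemma~\ref{lemma: asyn blockchain growth} to obtain $r^*<s$, and the same chain $L-x\le Z[r^*+1,r]<(1-\tfrac{\xi}{2})X'[r^*+1,r-T]\le(1-\tfrac{\xi}{2})L$ via \eqref{equ: asyn Z<(1-delta/2)X'} and Lemma~\ref{lemma: asyn pre blockchain growth}. The boundary case $r^*+1<T$ that you flag is not addressed in the paper's proof either---it simply asserts that $F[r^*+1,r-T]$ holds under $J[s,r-T]$ without comment---so your caution there is additional rather than a divergence, though your sketched absorption of $Z[r^*+1,T]$ is not made rigorous (the event $J$ does not control $Z$ on windows starting before round $T$).
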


\begin{proof}
Assume an honest miner adopts blockchain $C = B_0B_1\ldots B_{len(C)-1}$ by round $r$, where  $B_0$ is the genesis block and $len(C)>k$.
Let $u=len(C)-k$. Then the last $k$ blocks of $C$ are $B_u\ldots B_{len(C)-1}$. Let $B_{u'}$ be the last honest block before $B_u$. That is to say, $u' = \max \{u'|u'\le u-1, B_{u'} \text{ is honest} \}$ ($u'$ is always well defined as $B_0$ is regarded as honest).
Let $r^*$ be the round when $B_{u'}$ is mined.
By Lemma \ref{lemma: asyn blockchain growth} and the fact that $J[s, r-T] \subset J[s,r]$,  we have $r^* < s$. Let $L=len(C)-u'-1$. Note that $L\ge len(C) - u \ge k$.

Let $x$ be the number of honest blocks in $B_u\ldots B_{len(C)-1}$. To prove the theorem, it suffices to show $x > \frac{\xi}{2} k$. Since all blocks in $B_{u'+1}\ldots B_{u-1}$ are adversarial, the number of honest blocks in  $B_{u'+1}\ldots B_{len(C)}$ is also $x$. Thus, the number of adversarial blocks in $B_{u'+1}\ldots B_{len(C)-1}$ is $L - x$. Under $J[s, r-T]$, which implies $F[r^*+1, r-T]$ also occurs,
we have
\begin{align}
L-x & \le Z[r^*+1,r] \\
& < (1-\frac{\xi}{2})X'[r^*+1,r-T]  \label{a 8.00}\\
& \le (1-\frac{\xi}{2})L \label{a 8.3}\\
& \le L - \frac{\xi}{2} k, \label{a 8.4}
\end{align}
where \eqref{a 8.00} is due to \eqref{equ: asyn Z<(1-delta/2)X'}, \eqref{a 8.3} is due to Lemma \ref{lemma: asyn pre blockchain growth}, \eqref{a 8.4} is due to $L\ge k$. From  \eqref{a 8.4}, $x>\frac{\xi}{2} k$ is derived.
\end{proof}


\begin{theorem} \label{thm: asyn common prefix}
(Common prefix property for bounded-delay model) Let $r, s ,k$ be integers satisfying $T \le s < r-\frac{2}{q}$ and $k> 2q(r-s)$. If by round $r$ an honest blockchain has a $k$-deep prefix, then the prefix is permanent after round $r$ under
$J[s+T, r-T]$.
\end{theorem}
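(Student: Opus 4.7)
The plan follows the synchronous common prefix argument (Theorem \ref{thm: common prefix}) closely, with $T$-doubly-isolated successful rounds replacing uniquely successful rounds and Lemma \ref{lemma: asyn pre common prefix} replacing Lemma \ref{lemma: unique block}. I argue by contradiction. Suppose $C_1$ is honest with a $k$-deep prefix $C_1^{\lceil k}$ by round $r$, yet some honest miner $P_2$ first adopts, at the smallest round $r_2>r$, a chain $C_2$ with $C_1^{\lceil k}\not\preceq C_2$. Let $C'_2$ be the chain $P_2$ held at round $r_2-1$; by the minimality of $r_2$, $C_1^{\lceil k}\preceq C'_2$. Let $r^*$ be the round at which the last honest block on the common prefix of $C'_2$ and $C_2$ was mined, with $r^*=0$ if only the genesis is common. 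Because this common prefix lies inside $C_1^{\lceil k}$, the block mined in round $r^*$ is at least $k$-deep in $C_1$ by round $r$, and since $J[s+T,r-T]\subseteq J[s,r]$, Lemma \ref{lemma: asyn blockchain growth} forces $r^*<s$.

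Next I carry out the counting step. For each $T$-doubly-isolated successful round $u$ in a suitable subwindow of $[r^*+1,r_2-T-2]$, let $B_u$ be the unique honest block mined at $u$, sitting at height $l_u$ in the chain where it was produced. Applying Lemma \ref{lemma: asyn pre common prefix} at round $r_2>u+T$ to $C_2$ and at round $r_2-1>u+T$ to $C'_2$ forces the $l_u$-th block of each chain to be either $B_u$ or adversarial. If both equal $B_u$, then $B_u$ lies on the common prefix of $C_2$ and $C'_2$; but $B_u$ is honest and $u>r^*$, contradicting the maximality of $r^*$. Hence at least one position-$l_u$ block in $C_2\cup C'_2$ is adversarial. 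The $T$-doubly-isolated property together with the longest-chain rule and the bounded-delay guarantee force distinct $u$'s to produce strictly increasing heights $l_u$, so the counted adversarial blocks are pairwise distinct. Each is strictly above $B_{r^*}$, hence mined in some round in $[r^*+1,r_2]$, contributing to $Z[r^*+1,r_2]$.

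The contradiction then comes by combining this lower bound with \eqref{equ: Z<Y'} applied to $F[r^*+1+T,r_2-T]$, which is available under $J[s+T,r-T]$ because $r^*+1+T\le s+T$ (from $r^*<s$) and $r_2-T\ge r-T$ (from $r_2>r$). This application yields $Z[r^*+1,r_2]<Y'[r^*+1+T,r_2-T]$, which I aim to clash with the uniqueness count. The main obstacle is range bookkeeping: the $T$-round propagation buffer on each side of the counting window must be absorbed by choosing the sub-window for $u$ and the interval at which $F$ is invoked so that the count of $T$-doubly-isolated successful rounds in the window matches or exceeds $Y'[r^*+1+T,r_2-T]$. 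Once the ranges line up, the argument mirrors the synchronous case and the proof closes.
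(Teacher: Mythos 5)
Your proposal follows the paper's proof essentially step for step: the same contradiction setup with $C_1$, $C_2$, $C_2'$ and $r_2$, the same use of Lemma~\ref{lemma: asyn blockchain growth} to place $r^*<s$, the same substitution of $T$-doubly-isolated successful rounds and Lemma~\ref{lemma: asyn pre common prefix} for uniquely successful rounds and Lemma~\ref{lemma: unique block}, and the same final clash with \eqref{equ: Z<Y'}. The ``range bookkeeping obstacle'' you flag dissolves immediately: take the counting window to be exactly $\{r^*+T+1,\ldots,r_2-T-1\}$, which is by definition the index set of the sum $Y'[r^*+T+1,r_2-T]$; every such $u$ satisfies $u+T<r_2$, each contributes a distinct adversarial block at height $l_u$ mined within $[r^*+1,r_2]$, hence $Z[r^*+1,r_2]\ge Y'[r^*+T+1,r_2-T]$, contradicting \eqref{equ: Z<Y'} applied to $F[r^*+T+1,r_2-T]$, whose $Z$-interval is precisely $[r^*+1,r_2]$.
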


\begin{proof}
By assumption,
\begin{align}
r-s & > \frac{2}{q} \\
& > \frac{40T}{\xi} \\
& > 2T+1.
\end{align}
Hence $J[s+T, r-T]$ is well defined.

We prove the desired result by contradiction. Suppose blockchain $C_1$ is adopted by an honest miner $P_1$ by round $r$.
Suppose, contrary to claimed, an honest miner $P_2$ first deviates from the prefix
$C_1^{\left\lceil k \right.}$ from some round number $r_2\ge r$. Specifically, $P_2$ adopts $C'_2$ by round $r_2-1$ which satisfies $C_1^{\left\lceil k \right.} \preceq C'_2$ and then adopts $C_2$ by round $r_2$ which satisfies $C_1^{\left\lceil k \right.} \npreceq C_2$.

Assume the last honest block on the common prefix of $C_2$ and $C'_2$ is mined during round $r^*$. If $r^*>0$, this common block of $C_2$ and $C'_2$ must be more than $k$ deep in $C_1$ by round $r$. According to Lemma \ref{lemma: asyn blockchain growth}, we have $r^* < s$, so that
\begin{align}
    [s+T, r-T] \subset [r^*+T+1, r_2-T]. \label{a 8.5}
\end{align}
On the other hand, if $r^*=0$, the last common block is the genesis block. Since $s\ge T$, \eqref{a 8.5} also holds.

Under $J[s+T, r-T]$, $F[r^*+T+1, r_2-T]$ also occurs, so that $Y'[r^*+T+1, r_2-T]>0$ according to \eqref{a 3.0}. Hence, there must be at least one $T$-doubly-isolated successful round $u\in \{r^* + T +1, \ldots, r_2-T-1\}$.
Suppose honest miner $P$ mines $B_u$ during round $u$ with height $l_u$. According to Lemma \ref{lemma: asyn pre common prefix}, the $l_u$th block of every blockchain by round $r_2$ is either $B_u$ or adversarial block.

If the $l_u$th block of $C_2$ and $C'_2$ are different, then at least one of them must be adversarial according to Lemma \ref{lemma: asyn pre common prefix}. On the other hand, if the $l_u$th block of $C_2$ and $C'_2$ are identical, the block must be in their common prefix, which must be adversarial by the definition of $r^*$. Thus, for each $T$-doubly-isolated successful round $u$, there is at least one adversarial block at height $l_u$. Since these adversarial blocks are mined after the last common honest block of $C'_2$ and $C_2$, they are mined after round $r^*$. Since $C'_2$ and $C_2$ are adopted by round  $r_2$, their blocks must be mined before round $r_2$. Thus, the adversarial blocks that match the $T$-doubly-isolated successful rounds are mined within $[r^*+1, r_2]$. Thus, $Z[r^*+1, r_2]\ge Y'[r^*+T+1, r_2-T]$. However, since $[s+T, r-T]\in [r^*+T+1, r_2-T]$, $F[r^*+T+1, r_2-T]$ occurs under $J[s+T, r-T]$, so that $Z[r^*+1, r_2] <  Y'[r^*+T+1, r_2-T]$ according to \eqref{equ: Z<Y'}. Contradiction arises, hence the proof of the theorem.
\end{proof}


\subsection{Prism under bounded-delay model}

Recall $H_j[r]$ denotes the total number of honest blocks mined during round $r$ for blockchain $j$.  Following the definitions in Section \ref{sect: bounded-delay, bitcoin}, for $j = 0,1,\ldots, m$ and $r= T, T+1, \ldots$, we define
\begin{align}\label{def: Y'_j[r]}
    Y'_j[r]= \begin{cases}
    1, \quad \;\; \text{if}\; H_j[r]=1\;\text{and}\; H_j[s]=0 \; \text{for} \; s=r-T+1,\ldots, r-1, r+1, \ldots, r+T-1 \\
    0, \quad \;\; \text{otherwise}
    \end{cases}
\end{align}
and
\begin{align}\label{def: X'_j[r]}
    X'_j[r]= \begin{cases}
    1, \quad \;\; \text{if}\; H_j[r]=1\;\text{and}\; H_j[s]=0 \; \text{for} \; s=r-T+1, \ldots, r-1 \\
    0, \quad \;\; \text{otherwise}.
    \end{cases}
\end{align}
Basically $Y'_j[r]$ indicates whether $r$ is a $T$-doubly-isolated successful round for blockchain $j$, whereas $X'_j[r]$ indicates whether $r$ is a $T$-left-isolated uniquely successful round for blockchain $j$.

\begin{definition} \label{def: Fj bounded-delay}
For all integers $T\le s < r-\frac{2}{q}$ and $0\le j \le m$, define event
\begin{align}
    F_j[s,r] := F_{1,j}[s,r] \cap F_{2,j}[s,r] \cap F_{3,j}[s,r] \cap F_{4,j}[s,r]
\end{align}
where
\begin{align}
    F_{1,j}[s,r] & := \left\{(1-\epp{20})\mathbb{E}[X_j'[s,r]] < X_j'[s,r]\right\},\\
    F_{2,j}[s,r] & := \left\{X_j[s,r] < (1+\epp{20})\mathbb{E}[X_j[s,r]]\right\},\\
    F_{3,j}[s,r] & :=\left\{ (1-\epp{20})\mathbb{E}[Y_j'[s,r]] < Y_j'[s,r]\right\},\\
    F_{4,j}[s,r] & := \left\{ Z_j[s,r] < \mathbb{E}[Z_j[s,r]] + \epp{20}\mathbb{E}[X_j'[s,r]]\right\}.
\end{align}
\end{definition}
Note that for $0\le j \le m$ and $r\ge T$, $X_j[r], Y_j[r], X'_j[r], Y'_j[r]$  and $Z_j[r]$ are identically distributed as $X[r], Y[r], X'[r], Y'[r]$  and $Z[r]$ in bitcoin protocol. Also, $F_j[s,r]$ is defined in the same manner as $F[s,r]$. Thus the proposer blockchain and all voter blockchains satisfy similar properties as those of the bitcoin blockchains.

\begin{definition}
For all integers $T\le s < r-\frac{2}{q}$ and $0\le j \le m$, define blockchain $j$'s typical event with respect to $[s,r]$ as
\begin{align}
    J_j[s, r] := \cap_{0\le a \le s-T, b\ge 0}F_j[s-a,r+b].
\end{align}
\end{definition}

\begin{lemma} \label{lemma: prob of event Jj}
For all integers $T\le s < r -\frac{2}{q}$ and $0\le j \le m$,
\begin{align}
    P(J_j[s,r]) > 1 - 5\eta'^{-2}e^{-\eta'(r-s)}
\end{align}
where $\eta'$ is defined in \eqref{def: gamma'}.
\end{lemma}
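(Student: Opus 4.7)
The plan is to mirror the proof of Lemma~\ref{lemma: asyn prob. typical event} verbatim, leveraging the fact that for each fixed $j \in \{0,1,\ldots,m\}$, the per-blockchain processes $X_j, Y_j, X'_j, Y'_j, Z_j$ are identically distributed to their unsubscripted counterparts $X, Y, X', Y', Z$ in Section~\ref{sect: bounded-delay, bitcoin}. This is because Prism's sortition scheme assigns each mined block (honest or adversarial) independently to one of the $m+1$ blockchains, so the marginal block-arrival statistics for blockchain $j$ agree with those analyzed for the single bitcoin backbone, just with the common parameter $q$ replaced by the per-blockchain version (which the model has already normalized to satisfy \eqref{equ: honest majority asyn}).

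First I would invoke stationarity of the $j$-indexed processes to conclude $P(F_j[s,r]) = P(F_j[T, T + r - s])$, and observe that this equals $P(F[T, T + r - s])$ by the identical-distribution argument. Combined with the bound $P(F[s,r]) > 1 - 4 e^{-\eta'(r-s)}$ established in the lemma preceding Lemma~\ref{lemma: asyn properties} (which itself carries over to $F_j$ by the same reasoning, since its proof only uses the marginal distributions of $X_j[r]$, $Y'_j[r]$, $X'_j[r]$, $Z_j[r]$ and the Lipschitz / Chernoff / McDiarmid arguments), I have $P(F_j[s,r]) > 1 - 4 e^{-\eta'(r-s)}$.

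Next I would expand $J_j^c[s,r]$ via the union bound exactly as in Lemma~\ref{lemma: asyn prob. typical event}:
\begin{align}
P(J_j^c[s,r])
 &= P\!\left(\bigcup_{0\le a\le s-T,\, b\ge 0} F_j^c[s-a, r+b]\right) \\
 &\le \sum_{k=0}^{\infty} (k+1)\, P(F_j^c[T, r-s+k+T]) \\
 &< \sum_{k=0}^{\infty} (k+1)\, 4 e^{-\eta'(r-s+k)} \\
 &= \frac{4}{(1-e^{-\eta'})^2}\, e^{-\eta'(r-s)},
\end{align}
where the first inequality follows by grouping terms with $a+b = k$ (there are at most $k+1$ such pairs), and the second by the bound on $P(F_j^c)$ just established together with stationarity.

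Finally, since \eqref{equ: honest majority asyn} and \eqref{def: gamma'} give $\eta' < \tfrac{1}{4000}$, the elementary inequality $1 - e^{-x} > \tfrac{2}{\sqrt{5}}\, x$ on $[0, \tfrac{1}{4000}]$ yields $(1-e^{-\eta'})^{-2} < \tfrac{5}{4}\,\eta'^{-2}$, giving the desired bound $P(J_j[s,r]) > 1 - 5\eta'^{-2} e^{-\eta'(r-s)}$. There is no genuine obstacle here since every ingredient is either symmetry across $j$ or a verbatim transcription of the bitcoin-case proof; the only thing to be careful about is noting that the $(s-T)$-upper-bound on $a$ (reflecting that $F$ is only defined once $s-a \ge T$) does not change the counting of pairs $(a,b)$ with $a+b = k$, since we harmlessly overcount by summing $a$ over $\{0,\ldots,k\}$ rather than $\{0, \ldots, \min(k, s-T)\}$.
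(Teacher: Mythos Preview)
Your proposal is correct and matches the paper's approach exactly: the paper's proof simply states that for each $j$ the lemma admits essentially the same proof as Lemma~\ref{lemma: asyn prob. typical event}, and you have faithfully reproduced that argument with the appropriate $j$-subscripts. Your use of $1-e^{-x} > \tfrac{2}{\sqrt{5}}x$ is in fact the correct constant (the paper's $\tfrac{4}{\sqrt{5}}$ in Lemma~\ref{lemma: asyn prob. typical event} is a typo, as can be seen by comparison with the $\sqrt{4/5}$ in Lemma~\ref{lemma: prob. of event G}).
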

\begin{proof}
For $j= 0,1,\ldots, m$, the lemma admits essentially the same proof at that for Lemma \ref{lemma: asyn prob. typical event}.
\end{proof}

\begin{lemma} \label{lemma: asyn pre blockchain growth j}
Let $ T \le s \le r-T$  and $0\le j\le m$ be integers. Suppose an honest blockchain is of length $l$ by round $s$. Then by round $r$, the length of every honest voter blockchain is at least $l + X'_j[s,r-T+1]$.
\end{lemma}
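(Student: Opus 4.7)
The plan is to mirror the proof of Lemma \ref{lemma: asyn pre blockchain growth}, simply restricting attention to blockchain $j$. The reason this works is that under the Prism sortition scheme, blocks are assigned to blockchain $j$ independently of the mining process for other blockchains, so the subsequence of honest blocks relevant to blockchain $j$ follows the bitcoin longest-chain dynamics in isolation. Formally, an honest miner adopts the longest blockchain $j$ it has seen and extends it; the bounded-delay assumption guarantees that any block broadcast during round $r$ reaches every miner by round $r+T$, so the exact same propagation argument applies within blockchain $j$.

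First I would perform induction on $r$ starting from the base case $r = s+T$. If $X'_j[s] = 0$, then $X'_j[s, s+1] = 0$ and the claim reduces to the fact that all honest miners have seen an honest blockchain of length $l$ by round $s+T$ (since the blockchain of length $l$ assumed at round $s$ propagates within $T$ rounds). If $X'_j[s] = 1$, then during round $s$ a unique new honest block for blockchain $j$ is broadcast, extending the length-$l$ blockchain; by round $s+T$ every honest miner must have adopted a blockchain $j$ of length at least $l+1 = l + X'_j[s, s+1]$.

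Next I would carry out the inductive step. Assume the claim holds for $r = s+T, s+T+1, \ldots, s+T+u$, and prove it for $r = s+T+u+1$. If $X'_j[s+u+1] = 0$ the claim holds trivially because $X'_j[s, s+u+2] = X'_j[s, s+u+1]$ and the length of every honest blockchain $j$ can only grow with $r$. If $X'_j[s+u+1] = 1$, then by the definition of $X'_j$ we have $H_j[s+u] = H_j[s+u-1] = \cdots = H_j[s+u-T+2] = 0$, so the induction hypothesis applied with $r' = s+u+1$ gives that by round $s+u+1$ every honest blockchain $j$ already has length at least $l + X'_j[s, s+u-T+2] = l + X'_j[s, s+u+1]$. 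The single honest block mined for blockchain $j$ during round $s+u+1$ therefore extends some blockchain $j$ to length at least $l + X'_j[s, s+u+2]$, and by round $s+u+1+T$ this block reaches all honest miners, forcing every honest blockchain $j$ to be at least this long.

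I do not expect real obstacles here, because the lemma is essentially Lemma \ref{lemma: asyn pre blockchain growth} re-indexed by $j$; the only subtle point is making sure that the Prism sortition and voting rules do not invalidate the propagation/longest-chain argument for an individual voter blockchain. That is resolved by observing that each honest miner treats blockchain $j$ under the longest-chain rule independently of the other blockchains, and that the mining-power allocation across blockchains does not affect the identical distribution of $H_j[r]$ with $H[r]$ from the bitcoin case. Accordingly, the proof concludes exactly as in Lemma \ref{lemma: asyn pre blockchain growth} and one may simply state that for $j = 0, 1, \ldots, m$ the argument is the same.
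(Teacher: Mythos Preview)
Your proposal is correct and takes essentially the same approach as the paper: the paper's proof is the one-line remark that for $j=0,1,\ldots,m$ the argument is identical to that of Lemma~\ref{lemma: asyn pre blockchain growth}, and you have simply written out that argument in full with the index $j$ attached.
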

\begin{proof}
For $j= 0,1,\ldots, m$, the lemma admits essentially the same proof at that for Lemma \ref{lemma: asyn pre blockchain growth}.
\end{proof}

\begin{lemma} \label{lemma: asyn blockchain growth j}
For all integers $T \le s < r -\frac{2}{q}$, $k\ge 2q(r-s)$ and $0\le j \le m$, under typical event $J_j[s,r]$, every honest miner's $k$-deep block of blockchain $j$ by round $r$ must be mined before round $s$.
\end{lemma}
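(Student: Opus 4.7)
The plan is to mirror the proof of Lemma \ref{lemma: asyn blockchain growth} for each blockchain index $j$, exploiting that $X_j, Y_j, Z_j, X_j', Y_j'$ are identically distributed as $X, Y, Z, X', Y'$, and that $F_j[s,r]$ and $J_j[s,r]$ are defined via exactly the same set-theoretic recipe as $F[s,r]$ and $J[s,r]$. In particular, the ``asyn properties'' analogous to Lemma \ref{lemma: asyn properties} hold for the processes indexed by $j$ (this is the analogue the authors allude to in the commented-out Lemma preceding Definition of $J_j$), and Lemma \ref{lemma: asyn pre blockchain growth j} is the per-blockchain counterpart of Lemma \ref{lemma: asyn pre blockchain growth}.

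First I would observe that the blockchain growth of an honest miner's blockchain $j$ during rounds $\{s,\ldots, r-1\}$ is upper bounded by the total number of blocks (honest or adversarial) that can be appended, namely $X_j[s-T, r] + Z_j[s-T, r]$; the shift by $T$ accounts for blocks mined up to $T$ rounds before $s$ that may only reach honest miners during $[s,r)$ due to the bounded propagation delay. Since $J_j[s,r]$ includes the event $F_j[s-T, r]$, the per-blockchain analogues of \eqref{equ: X <(1+epp20)q(r-s)} and \eqref{equ: asyn Z<(1-2/3beta)qs} apply on the interval $[s-T, r]$.

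Next I would chain the bounds exactly as in \eqref{a 6.-1}--\eqref{a 6.-05}:
\begin{align}
X_j[s-T,r] + Z_j[s-T,r]
  &< (1+\epp{20}) q(r-s+T) + \left(1-\epq{2}{3}\right) q(r-s+T) \\
  &< (2-\epp{2}) q(r-s+T) \\
  &< 2q(r-s) \\
  &\le k,
\end{align}
where the penultimate step uses $T < \epp{40}(r-s)$, which follows from the standing assumption \eqref{equ: honest majority asyn} $q \le \epp{20T}$ together with $r-s > \frac{2}{q}$, and the last step is the hypothesis $k \ge 2q(r-s)$. Therefore any block that is $k$-deep in an honest blockchain $j$ by round $r$ cannot have been mined during $\{s,\ldots,r-1\}$, hence must have been mined before round $s$.

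The only step that requires any genuine care (and the place where a sloppy adaptation of the bitcoin proof could slip) is confirming that the per-blockchain delay-shifted inequalities do hold under $J_j[s,r]$ rather than just $F_j[s,r]$; this is handled by the definition $J_j[s,r] = \bigcap_{0\le a\le s-T,\, b\ge 0} F_j[s-a, r+b]$, which in particular contains $F_j[s-T, r]$ since $T\le s$. Everything else is a transcription of the synchronous-model argument with the index $j$ added, so the result follows.
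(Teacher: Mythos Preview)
Your proposal is correct and takes essentially the same approach as the paper: the paper's own proof is simply ``For $j=0,1,\ldots,m$, the lemma admits essentially the same proof as that for Lemma~\ref{lemma: asyn blockchain growth},'' and you have faithfully written out that argument with the index $j$ attached, including the same upper bound $X_j[s-T,r]+Z_j[s-T,r]$, the same appeal to $F_j[s-T,r]\supset J_j[s,r]$, and the same chain of inequalities \eqref{a 6.-1}--\eqref{a 6.-05}. One small remark: your justification ``since $T\le s$'' for $F_j[s-T,r]\subset J_j[s,r]$ is not quite the right condition (one needs $a=T\le s-T$, i.e.\ $s\ge 2T$), but this is the same implicit step the paper makes in the proof of Lemma~\ref{lemma: asyn blockchain growth}.
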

\begin{proof}
For $j= 0,1,\ldots, m$, the lemma admits essentially the same proof at that for Lemma \ref{lemma: asyn blockchain growth}.
\end{proof}

\begin{theorem} \label{thm: asyn blockchain growth j}
Let $r, s, s_1, j$ be integers satisfying $T \le s_1 \le s < r-\frac{2}{q}$ and $0\le j\le m$.  Then under typical event $J_j[s, r-T]$, the length of every honest miner's blockchain $j$ must increase by at least $(1-\epp{10})(1-q)^{T}q(r-s_1)$ during rounds $\{s_1, \ldots, r\}$.
\end{theorem}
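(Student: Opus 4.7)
My plan is to mirror the proof of Theorem \ref{thm: asyn blockchain growth} essentially verbatim for each fixed $j \in \{0, 1, \ldots, m\}$. The key observation is that $X_j$, $Y_j$, $Z_j$, $X'_j$, and $Y'_j$ are identically distributed to their unsubscripted counterparts in Section~\ref{sect: bounded-delay, bitcoin}, and the events $F_j[s,r]$ and $J_j[s,r]$ are defined in exactly parallel fashion to $F[s,r]$ and $J[s,r]$. Consequently, the analog of Lemma \ref{lemma: asyn properties} holds for blockchain $j$ under $F_j[s,r]$, and the blockchain-growth lemma for bounded delays (Lemma \ref{lemma: asyn pre blockchain growth j}) holds for voter/proposer blockchain $j$.

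The concrete steps I would execute are as follows. First, note that $\frac{2}{q} \ge \frac{40T}{\xi} > T$ by \eqref{equ: honest majority asyn}, so the intervals $[s, r-T]$ and $[s_1, r-T+1]$ are well defined and $[s, r-T] \subset [s_1, r-T+1]$. Hence $J_j[s, r-T]$ implies $F_j[s_1, r-T+1]$, which in turn gives the lower bound
\begin{align}
  X'_j[s_1, r-T+1] > (1-\epp{20})\,q(1-q)^{T}(r - s_1 - T + 1).
\end{align}

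Next, I would reproduce the chain of inequalities \eqref{a 7.01}--\eqref{a 7.21} from the proof of Theorem \ref{thm: asyn blockchain growth}: factoring out $(r-s_1)$, using $r - s_1 \ge \frac{2}{q}$ together with $q \le \epp{20T}$ from \eqref{equ: honest majority asyn} to bound $\frac{T-1}{r-s_1} \le \epp{40}$, and finally absorbing the $(1-\epp{20})(1-\epp{40})$ factor into $(1-\epp{10})$ via $\xi \in (0,1]$. This yields
\begin{align}
  X'_j[s_1, r-T+1] > (1-\epp{10})\,q(1-q)^{T}(r - s_1).
\end{align}

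Finally, I would invoke Lemma \ref{lemma: asyn pre blockchain growth j} applied with initial round $s_1$ and final round $r$ to conclude that the length of every honest blockchain $j$ grows by at least $X'_j[s_1, r-T+1] > (1-\epp{10})(1-q)^{T} q (r - s_1)$ during rounds $\{s_1, \ldots, r\}$. There is no real obstacle here: every ingredient (the typical-event lower bound on $X'_j$, the bounded-delay growth lemma for blockchain $j$, and the arithmetic manipulations) has an exact per-blockchain analog already established, so the proof reduces to writing ``For $j = 0, 1, \ldots, m$, the theorem admits essentially the same proof as that for Theorem \ref{thm: asyn blockchain growth}.''
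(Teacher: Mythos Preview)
Your proposal is correct and matches the paper's approach exactly: the paper's proof is the single sentence ``For $j= 0,1,\ldots, m$, the theorem admits essentially the same proof as that for Theorem \ref{thm: asyn blockchain growth},'' which is precisely what you arrive at after (correctly) unpacking the details.
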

\begin{proof}
For $j= 0,1,\ldots, m$, the theorem admits essentially the same proof at that for Theorem \ref{thm: asyn blockchain growth}.
\end{proof}

\begin{theorem} \label{thm: asyn blockchain quality j}
Let $r,s,k, j$ be integers satisfying $T \le s < r-\frac{2}{q}$, $k\ge 2q(r-s)$ and $1\le j \le m$. Suppose an honest miner's blockchain $j$ has more than $k$ blocks by round $r$. Under event $J_j[s, r-T]$, by round $r$, at least $\frac{\xi}{2}$ fraction of the last $k$ blocks of this miner's blockchain $j$ are honest.
\end{theorem}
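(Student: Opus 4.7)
The plan is to mirror the proof of Theorem \ref{thm: asyn blockchain quality} essentially verbatim, replacing the bitcoin processes $X, Y, Z, X', Y'$ with their blockchain-$j$ counterparts $X_j, Y_j, Z_j, X'_j, Y'_j$, and appealing to the $j$-indexed growth lemmas that have already been established. Since $H_j[r]$ is identically distributed to $H[r]$ and the $F_j[s,r]$ and $J_j[s,r]$ events are defined in exactly the same manner as $F[s,r]$ and $J[s,r]$, the typical-event inequalities in Lemma \ref{lemma: asyn properties} carry over unchanged to blockchain $j$; in particular the analog of \eqref{equ: asyn Z<(1-delta/2)X'} holds under $F_j[s, r-T]$, namely $Z_j[s, r] < (1-\xi/2)\, X'_j[s, r-T]$.

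The concrete steps I would follow are: (i) fix the honest blockchain $C = B_0 B_1 \ldots B_{\mathrm{len}(C)-1}$ of voter blockchain $j$ adopted by some honest miner by round $r$, with $\mathrm{len}(C) > k$; set $u = \mathrm{len}(C) - k$ and let $B_{u'}$ with $u' \le u-1$ be the most recent honest block before $B_u$ (the genesis block if none exists); (ii) let $r^*$ be the round in which $B_{u'}$ was mined, and invoke Lemma \ref{lemma: asyn blockchain growth j} together with the inclusion $J_j[s, r-T] \subset J_j[s, r]$ to conclude $r^* < s$; (iii) let $L = \mathrm{len}(C) - u' - 1 \ge k$ and let $x$ be the number of honest blocks in $B_u \ldots B_{\mathrm{len}(C)-1}$, which by the maximality of $u'$ equals the number of honest blocks in $B_{u'+1} \ldots B_{\mathrm{len}(C)-1}$, so the adversarial count in that range is $L - x$; (iv) bound
\begin{align}
L - x \;\le\; Z_j[r^*+1, r] \;<\; (1-\tfrac{\xi}{2})\, X'_j[r^*+1, r-T] \;\le\; (1-\tfrac{\xi}{2}) L,
\end{align}
where the middle inequality uses the analog of \eqref{equ: asyn Z<(1-delta/2)X'} under $F_j[r^*+1, r-T]$ (which occurs because $[s, r-T] \subset [r^*+1, r-T]$ and $J_j[s, r-T]$ holds), and the final inequality uses Lemma \ref{lemma: asyn pre blockchain growth j} applied from round $r^*+1$ (when an honest block of height $\mathrm{len}(C) - L$ exists) to round $r$; (v) conclude $x > \tfrac{\xi}{2} L \ge \tfrac{\xi}{2} k$.

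The only mildly subtle point, and the one I would be careful about, is the containment argument in step (iv): I need $[s, r-T] \subset [r^*+1, r-T]$ and also that $r^* + 1 \le r - T$ so that $F_j[r^*+1, r-T]$ is well defined. The first holds because $r^* < s$. The second follows from the assumption $s < r - 2/q$ together with $q \le \xi/(20T)$ from \eqref{equ: honest majority asyn}, which gives $r - T > s \ge r^* + 1$. Given these containments, all the bounds in Lemma \ref{lemma: asyn properties} applicable to blockchain $j$ are in force on the interval $[r^*+1, r-T]$, and the proof of Theorem \ref{thm: asyn blockchain quality} transfers without further modification.
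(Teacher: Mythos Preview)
Your proposal is correct and follows exactly the paper's approach: the paper's proof of Theorem~\ref{thm: asyn blockchain quality j} simply states that for $j=1,\ldots,m$ the argument of Theorem~\ref{thm: asyn blockchain quality} carries over verbatim, and your steps (i)--(v) are precisely the $j$-indexed transcription of that argument, invoking Lemma~\ref{lemma: asyn blockchain growth j}, Lemma~\ref{lemma: asyn pre blockchain growth j}, and the analog of \eqref{equ: asyn Z<(1-delta/2)X'} under $F_j[r^*+1, r-T]$.
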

\begin{proof}
For $j= 1,\ldots, m$, the theorem admits essentially the same proof at that for Theorem \ref{thm: asyn blockchain quality}.
\end{proof}

\begin{theorem} \label{thm: asyn common prefix j}
Let $r, s ,k, j$ be integers satisfying $T \le s < r-\frac{2}{q}$, $k> 2q(r-s)$ and $1\le j \le m$. If by round $r$ an honest miner's blockchain $j$ has a $k$-deep prefix, then the prefix is permanent after round $r$ under
$J_j[s+T, r-T]$.
\end{theorem}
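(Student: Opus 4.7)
My plan is to mirror the proof of Theorem \ref{thm: asyn common prefix} (the common prefix property for bitcoin in the bounded-delay model), since the Prism protocol requires each voter blockchain to follow the bitcoin longest-chain rule. All the ingredients we need for voter blockchain $j$ are already available: $H_j, X_j, Y_j, Z_j, X'_j, Y'_j$ are distributed identically to their bitcoin counterparts; the typical event $J_j[s,r]$ is defined analogously to $J[s,r]$; Lemma \ref{lemma: asyn blockchain growth j} is the per-blockchain version of the blockchain growth bound; and Lemma \ref{lemma: asyn pre common prefix}, which is stated for an arbitrary blockchain, applies verbatim to voter blockchain $j$.

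The proof will proceed by contradiction. I will first verify that $J_j[s+T,r-T]$ is well-defined by using the hypothesis $r-s > 2/q > 2T+1$, exactly as in Theorem \ref{thm: asyn common prefix}. Then, assuming some honest miner $P_2$ first deviates from the $k$-deep prefix of an honest blockchain $C_1$ at some round $r_2 \ge r$, I will take $r^*$ to be the round in which the last honest block on the common prefix of $P_2$'s chains $C_2'$ (at round $r_2-1$) and $C_2$ (at round $r_2$) is mined. Applying Lemma \ref{lemma: asyn blockchain growth j} to voter blockchain $j$ (noting $J_j[s+T,r-T]\subset J_j[s,r]$) gives $r^* < s$, which in turn yields the key interval inclusion
\begin{align}
[s+T,\, r-T] \subset [r^*+T+1,\, r_2-T].
\end{align}

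Next, under $J_j[s+T, r-T]$ the event $F_j[r^*+T+1, r_2-T]$ occurs, so the analogue of \eqref{a 3.0} forces $Y'_j[r^*+T+1, r_2-T] > 0$; hence there exists a $T$-doubly-isolated successful round $u$ for blockchain $j$ inside $\{r^*+T+1,\ldots,r_2-T-1\}$. Applying Lemma \ref{lemma: asyn pre common prefix} to the honest voter block mined in round $u$, the block at its height $l_u$ in any voter $j$ blockchain by round $r_2$ is either this honest block or an adversarial block. Case analysis on whether the $l_u$-th blocks of $C_2$ and $C_2'$ agree (in which case the common block lies before the deviation and must be adversarial by the choice of $r^*$) or disagree (in which case one of them must be adversarial by Lemma \ref{lemma: asyn pre common prefix}) yields at least one adversarial block for each such $T$-doubly-isolated successful round, all mined in $[r^*+1, r_2]$. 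Therefore $Z_j[r^*+1, r_2] \ge Y'_j[r^*+T+1, r_2-T]$.

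The contradiction then comes from the voter-blockchain analogue of \eqref{equ: Z<Y'}, which holds because $F_j[r^*+T+1,r_2-T]$ occurs under $J_j[s+T,r-T]$ by the interval inclusion displayed above, giving $Z_j[r^*+1, r_2] < Y'_j[r^*+T+1, r_2-T]$. The one routine step to double-check is that the per-$j$ counterparts of the inequalities in Lemma \ref{lemma: asyn properties}, in particular the analogue of \eqref{equ: Z<Y'}, are available; this is immediate because those inequalities depend only on the distributions of $X_j, Y'_j, X'_j, Z_j$ and on $\xi$ and $T$, all of which are shared with the bitcoin setting. I do not anticipate a genuine obstacle; the main content is simply to observe that the bitcoin proof transports coordinate-by-coordinate to voter blockchain $j$, so the proof can be stated as: for $j=1,\ldots,m$, the theorem admits essentially the same proof as that for Theorem \ref{thm: asyn common prefix}, with $X, Y', Z, F, J$ replaced by $X_j, Y'_j, Z_j, F_j, J_j$ throughout.
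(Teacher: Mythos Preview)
Your proposal is correct and takes essentially the same approach as the paper: the paper's proof of this theorem simply states that for $j=1,\ldots,m$ it admits essentially the same proof as Theorem~\ref{thm: asyn common prefix}, which is precisely the coordinate-by-coordinate transport you describe.
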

\begin{proof}
For $j= 1,\ldots, m$, the theorem admits essentially the same proof at that for Theorem \ref{thm: asyn common prefix}.
\end{proof}

Define
\begin{align} \label{def: asyn delta k}
    \delta_k = 5 m (\eta')^{-2} e^{-\eta' \frac{k}{2q}+(2T+1)\eta'} .
\end{align}

Recall that $R_l$ denotes the round in which the first proposer block on level $l$ is mined (Definition~\ref{d:Rlek}).

\begin{lemma}\label{lemma: asyn one honest block fix all}
Consider a given level $l$. Let $k$ be a positive integer satisfying $k\ge 5$. If by some round $r > \max \left\{ \frac{k}{2q}, R_l + T\right\}$,  every voter blockchain contains at least one honest block mined after round $R_l$ which is at least $k$-deep, then $\textbf{LedSeq}_l(r)$ is $\delta_k$-permanent after round $r$. 
\end{lemma}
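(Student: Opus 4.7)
The plan is to transplant the proof of Lemma~\ref{lemma: one honest block fix all} into the bounded-delay setting, replacing the synchronous common prefix theorem by Theorem~\ref{thm: asyn common prefix j} and the $G_j$ typical events by their bounded-delay counterparts $J_j$. Concretely, I would set $s = r - \lceil k/(2q) \rceil + 1$ (or a nearby integer), which keeps $r-s$ just below $k/(2q)$ so that $k > 2q(r-s)$ strictly, and simultaneously makes $r-s > 2/q$, $s+T < r-T$, and $s \ge T$ given the hypotheses $r > \max\{k/(2q),R_l+T\}$, $q\le \xi/(20T)$, and $k\ge 5$.

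For each voter blockchain $j=1,\ldots,m$, the honest block $B_j$ mined after round $R_l$ that is at least $k$-deep by round $r$ lies within the $k$-deep prefix of blockchain $j$. Theorem~\ref{thm: asyn common prefix j} applied with this $s$ and $k$ then guarantees that this prefix, and hence $B_j$ together with all of its ancestors, remains on blockchain $j$ after round $r$ on the event $J_j[s+T,r-T]$. I would then define
\begin{align}
    G := \bigcap_{j=1}^{m} J_j[s+T, r-T].
\end{align}
On $G$, every $B_j$ and its ancestors survive permanently in voter blockchain $j$. Because $B_j$ is honest and is mined after level $l$ exists (round $R_l$), the Prism voting rule ensures that the chain of $B_j$ and its ancestors has cast a vote for every level $1,\ldots,l$. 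Thus under $G$ each voter blockchain contributes a permanent, final vote for every level up to $l$, the leader block at each such level is pinned down, and therefore $\textbf{LedSeq}_l(r)$ is permanent after round $r$.

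Finally I would bound $P(G^c)$ by the union bound and Lemma~\ref{lemma: prob of event Jj}, obtaining
\begin{align}
    P(G^c) \le 5m(\eta')^{-2} e^{-\eta'(r-s-2T)} \le 5m(\eta')^{-2} e^{-\eta' \frac{k}{2q}+(2T+1)\eta'} = \delta_k,
\end{align}
where the second inequality uses $r-s \ge \frac{k}{2q}-1$ guaranteed by the choice of $s$. The main obstacle is the integer bookkeeping: $s$ must be picked precisely enough that the strict inequality $k > 2q(r-s)$ demanded by Theorem~\ref{thm: asyn common prefix j} holds simultaneously with $r-s > 2T$, $r-s > 2/q$, and $s\ge T$, while $r-s$ stays close enough to $\frac{k}{2q}$ to recover the claimed $\delta_k$. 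The condition $k\ge 5$ together with the low-mining assumption $q\le \frac{\xi}{20T}$ makes $\frac{k}{2q}$ comfortably exceed $2T + \frac{2}{q} + 1$, creating the slack required for all of these constraints to hold.
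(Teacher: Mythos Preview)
Your approach is essentially identical to the paper's: set $s$ so that $r-s$ is approximately $k/(2q)$, apply Theorem~\ref{thm: asyn common prefix j} on each voter chain under $J_j[s+T,r-T]$, intersect over $j$, and union-bound with Lemma~\ref{lemma: prob of event Jj} to land on $\delta_k$. The paper takes $s=r-\lfloor k/(2q)\rfloor$ rather than your $s=r-\lceil k/(2q)\rceil+1$, but the two coincide except on a measure-zero boundary, and your choice has the minor virtue of making the strict inequality $k>2q(r-s)$ required by Theorem~\ref{thm: asyn common prefix j} automatic.
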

\begin{proof}
Let
\begin{align} \label{a 9.-25}
  s = r-\left\lfloor \frac{k}{2q}\right\rfloor,
\end{align}
which must be a positive integer because $2qr>k$. Define
\begin{align}
    J = \cap_{j=1,2,\ldots,m}J_j[s+T, r-T].
\end{align}

For $j=1,\dots,m$, let $B_j$ denote an honest block on an honest voter blockchain $j$ which is mined after round $R_l$ and is at least $k$-deep by round $r$.  Since $r-s = \left\lfloor \frac{k}{2q}\right\rfloor > \frac{2}{q}$, according to Theorem \ref{thm: asyn common prefix j}, $B_j$ and its ancestors are permanent after round $r$ under $J_j[s+T, r-T]$  . Hence, $B_1,\ldots,B_m$ and all their ancestors must be permanent after round $r$ under $J$. Thus, all voter blockchains' voting are permanent. Since $B_1,\ldots,B_m$ are honest, they would have voted for all levels up to level $l$ of the proposer blockchain by the voting rule. Hence, the leader block sequence up to level $l$ is permanent after round $r$ under $J$.
Note that
\begin{align}
P(J) & = 1-P(\cup_{j=1,2,\ldots,m}J^c_j[s+T, r-T]) \\
& \ge 1 - \sum_{j=1}^m P(J^c_j[s+T, r-T])\label{a 9.-1}\\
& = 1-mP(J^c_1[s+T,r-T]) \label{a 9.-05}\\
& > 1-  {\color{black}5m\eta'^{-2}}e^{-\eta'(r-s-2T)},  \label{a 9.0}
\end{align}
where \eqref{a 9.-1} is due to the union bound, \eqref{a 9.-05} is due to symmetry of all {\color{black}voter} blockchains, \eqref{a 9.0} is due to Lemma~\ref{lemma: prob of event Jj}. 
By \eqref{a 9.-25}, {\color{black}we have} $2q(r-s+1)>k$, {\color{black}so that \eqref{a 9.0} becomes}
\begin{align}
P(G) & > 1-  {\color{black}5m\eta'^{-2}}e^{-\eta' \frac{k}{2q}+\eta' (2T+1)} \\
& = 1-\delta_k.
\end{align}
Thus, the leader block sequence up to level $l$ is $\delta_k$-permanent after round $r$.
\end{proof}


\begin{lemma}\label{lemma: asyn get one honest block}
If positive integers $R$, $r$, and $k$ satisfy $k\ge 5$ and
\begin{align} \label{def: a r 8.1}
r \ge  \frac{2(k+1)}{(1-\frac{\xi}{10})\xi q(1-q)^{T}}+1,
\end{align}
then by round $R+r$, with probability of at least $1-\delta_k$, all honest voter blockchains have an honest block mined after $R$ which is at least $k$ deep.
\end{lemma}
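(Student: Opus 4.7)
The plan is to mirror the proof of Lemma \ref{lemma: get honest block}, substituting the bounded-delay growth and quality theorems (Theorems \ref{thm: asyn blockchain growth j} and \ref{thm: asyn blockchain quality j}) for their synchronous counterparts, and replacing the typical event $G_j$ by $J_j[\,\cdot\,,\,\cdot - T]$ throughout.

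First I would introduce $\ell = \lceil 2k/\xi\rceil$ and $w = \lfloor k/(q\xi)\rfloor$, so that $2qw \le 2k/\xi \le \ell$ and $\xi\ell/2 \ge k$, exactly as in Lemma \ref{lemma: get honest block}. The hypothesis \eqref{def: a r 8.1} on $r$ is calibrated so that
\[
 (1-\tfrac{\xi}{10})(1-q)^T q r \;\ge\; \tfrac{2(k+1)}{\xi} \;\ge\; \ell,
\]
which is precisely the amount by which Theorem \ref{thm: asyn blockchain growth j}, applied with theorem-$s_1 = R$, theorem-$s = R+r-w$, and end round $R+r$, forces each honest voter blockchain to grow during $\{R,\dots,R+r\}$ under the typical event $J_j[R+r-w,\,R+r-T]$.

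On the same typical event I would then invoke Theorem \ref{thm: asyn blockchain quality j} with $k$-parameter $\ell$: the side condition $\ell \ge 2qw$ is exactly what was arranged above, and the conclusion is that the last $\ell$ blocks of voter blockchain $j$ by round $R+r$ contain at least $\xi\ell/2 \ge k$ honest ones. Because the blockchain grew by at least $\ell$ after round $R$, those last $\ell$ blocks were all mined after round $R$, so the deepest honest block among them is mined after $R$ and is at least $k$-deep, as required. Before invoking either bounded-delay theorem I would verify the side conditions $T \le R+r-w < R+r-T-2/q$; these follow routinely from $k\ge 5$, the mining-rate bound \eqref{equ: honest majority asyn}, and \eqref{def: a r 8.1}.

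Finally I would take $J = \bigcap_{j=1}^m J_j[R+r-w,\,R+r-T]$, so that under $J$ every honest voter blockchain satisfies the claim simultaneously. The union bound and Lemma \ref{lemma: prob of event Jj} give
\[
 P(J) \;>\; 1 - 5m(\eta')^{-2} e^{-\eta'(w-T)},
\]
and combining $w \ge k/(q\xi) - 1 \ge k/q - 1$ with $k\ge 5$ yields $w - T \ge k/(2q) - (2T+1)$, so the right-hand side is at least $1 - \delta_k$ by \eqref{def: asyn delta k}. The main obstacle is purely bookkeeping: the bounded-delay growth penalty $(1-\xi/10)(1-q)^T$ and the $-T$ shift inside $J_j[\,\cdot\,,\,\cdot - T]$ must be threaded consistently through the choice of $w$ so that both bounded-delay theorems apply on the same typical event and the resulting tail bound matches $\delta_k$.
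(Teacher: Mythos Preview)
Your proposal is correct and follows essentially the same route as the paper's proof: the paper uses the identical parameters $\ell=\lceil 2k/\xi\rceil$ and $s_1=\lfloor k/(q\xi)\rfloor$ (your $w$), invokes Theorems~\ref{thm: asyn blockchain growth j} and~\ref{thm: asyn blockchain quality j} in the same way, intersects the events $J_j[R+r-s_1,\,R+r-T]$ over $j$, and finishes with the same arithmetic $s_1-T > k/(2q)-1-2T$ to match $\delta_k$. The only cosmetic difference is that the paper first applies the growth theorem under $J_j[R,\,R+r-T]$ and then uses the inclusion $J_j[R+r-s_1,\,R+r-T]\subset J_j[R,\,R+r-T]$, whereas you apply it directly under the smaller event; this amounts to the same thing.
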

\begin{proof}
Let
\begin{align}
    \ell = \left\lceil  \frac{2k}{\xi} \right\rceil. \label{a 8.111}
\end{align}
Let
\begin{align}
    s_1 = \left\lfloor \frac{k}{q\xi}\right\rfloor.  \label{a 7.95}
\end{align}
Then
\begin{align}
    \ell&\ge \frac{2k}{\xi} \\
    & \ge 2q \left\lfloor \frac{k}{q\xi}\right\rfloor \\
    & = 2qs_1. \label{a 8.105}
\end{align}
Obviously $r > \frac{2}{q}$. According to the Theorem \ref{thm: asyn blockchain growth j}, under event $J_j[R, R+r-T]$, an honest voter blockchain $j$'s growth during $\{R, R+1,\ldots, R+r-1\}$ is at least
\begin{align}
    (1-\frac{\xi}{10}) q(1-q)^{T}  r  & \ge \frac{2(k+1)}{\xi} \label{a 8.112} \\
    & > \ell, \label{a 8.113}
\end{align}
where \eqref{a 8.112} is due to \eqref{def: a r 8.1} and \eqref{a 8.113} is due to \eqref{a 8.111}.

Note that $s_1 = \left\lfloor \frac{k}{q\xi}\right\rfloor > \frac{2}{q}$.
According to Theorem \ref{thm: asyn blockchain quality j} and \eqref{a 8.105},  under event $J_j[R + r - s_1, R + r-T]$, at least $\epp{2}$ fraction of the last $\ell$ blocks of this voter blockchain $j$ are honest. Because $\frac{\xi}{2}\ell \ge k$, the earliest of these  honest blocks must be at least $k$ deep.

By \eqref{def: a r 8.1} and \eqref{a 7.95}, it is easy to see that $s_1\le r$.
Hence $ J_j[R+r-s_1, R+r-T]\subset J_j[R, R+r-T]$.
We define
\begin{align}
    J = \cap_{j=1,2,\ldots,m}J_j[R+r-s_1, R+r-T] .
\end{align}
Under event $J$, by round $R+r$, every honest voter blockchain has an honest block mined after round $R$ which is at least $k$ deep. The probability of the typical event can be lower bounded:
\begin{align}
P(J) & = P(\cap_{j=1,2,\ldots,m}J_j[R+r-s_1, R+r-T]) \\
& = 1-P(\cup_{j=1,2,\ldots,m}J^c_j[R+r-s_1, R+r-T])\label{a 8.0}\\
& \ge 1-mP(J^c_1[R+r-s_1, R+r-T]) \label{a 8.1}\\
&> 1- {\color{black}5}m\eta'^{-2}e^{-\eta' (s_1-T)} \label{a 8.2}
\end{align}
where \eqref{a 8.1} is due to the union bound and symmetry of all voter blockchains and \eqref{a 8.2} is due to  Lemma \ref{lemma: prob of event Jj}. Moreover,
\begin{align}
    s_1 - T
    &= \lfloor \frac{k}{q\xi} \rfloor - T \\
    &> \frac{k}{2q} - 1 - 2T .
\end{align}
Therefore,
\begin{align}
    P(J)
    & > 1- \delta_k .
\end{align}

In summary, by round $R+r$, with probability at least $1-\delta_k$, all honest voter blockchains have an honest block mined after round $R$ which is at least $k$ deep.
\end{proof}
\begin{theorem} \label{thm: aysn permanent leader block}
Fix $\epsilon \in (0,1)$. Let $R_l$ be {\color{black}the round during which the first proposer block on level $l$ is mined}. {\color{black}For every integer}
\begin{align} \label{def: asyn r thm permanent}
    r \ge  \frac{5}{(1-\frac{\xi}{10})\xi \eta'(1-q)^T}\left(\log\frac{10m\eta'^{-2}}{\epsilon} + \eta'(2T+1) \right),
\end{align}
{\color{black}the leader sequence up to level $l$} is $\epsilon-$permanent after round $R_l + r$.
\end{theorem}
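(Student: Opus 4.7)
The plan is to mirror the proof of Theorem \ref{thm: permanent leader block}, with the bounded-delay analogues Lemma \ref{lemma: asyn get one honest block} and Lemma \ref{lemma: asyn one honest block fix all} playing the roles of Lemma \ref{lemma: get honest block} and Lemma \ref{lemma: one honest block fix all}. First I would select $k$ just large enough so that $\delta_k \le \epsilon/2$. Inverting the definition \eqref{def: asyn delta k}, this amounts to requiring
\begin{align}
k \;\ge\; \frac{2q}{\eta'}\bigl(\log(10m\eta'^{-2}/\epsilon) + (2T+1)\eta'\bigr),
\end{align}
so I would take $k$ to be the ceiling of the right-hand side. I would then let $s$ be the ceiling of $\frac{2(k+1)}{(1-\xi/10)\xi q(1-q)^T}+1$, which is precisely the threshold in the hypothesis \eqref{def: a r 8.1} of Lemma \ref{lemma: asyn get one honest block}.

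With these choices in hand, Lemma \ref{lemma: asyn get one honest block} guarantees that by round $R_l + s$, with probability at least $1 - \delta_k$, every honest voter blockchain contains an honest $k$-deep block mined after $R_l$. On that event, the hypotheses of Lemma \ref{lemma: asyn one honest block fix all} are satisfied (one quickly checks $R_l + s > k/(2q)$ from the size of $s$, and $R_l + s > R_l + T$ because $s \ge T$), so the leader sequence up to level $l$ is $\delta_k$-permanent after round $R_l + s$. A union bound then yields $2\delta_k \le \epsilon$-permanence after round $R_l + s$.

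It remains to verify that the $r$ specified in \eqref{def: asyn r thm permanent} satisfies $r \ge s$; then Lemma \ref{lemma: permanent ever after} upgrades $\epsilon$-permanence after $R_l + s$ to $\epsilon$-permanence after $R_l + r$. Substituting the chosen $k$ into the expression for $s$ gives, up to additive constants coming from the two ceilings,
\begin{align}
s \;\lesssim\; \frac{4}{(1-\xi/10)\xi\,\eta'(1-q)^T}\bigl(\log(10m\eta'^{-2}/\epsilon) + (2T+1)\eta'\bigr),
\end{align}
and the slack between $4$ and the $5$ that appears in \eqref{def: asyn r thm permanent} is designed to absorb the ceiling contributions and the additive $+1$ in the definition of $s$, provided $k$ is at least a modest absolute constant (analogous to the $k>10$ step in the synchronous proof).

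The main obstacle is therefore bookkeeping: the synchronous version \eqref{11.05}--\eqref{11.066} had to juggle one small additive constant and the factor $\xi$, while here the shift $(2T+1)\eta'$ inside $\delta_k$ and the extra $(1-q)^T$ factor in the growth rate both thread through the arithmetic. No new probabilistic or combinatorial input is needed beyond what is already in Lemmas \ref{lemma: asyn get one honest block} and \ref{lemma: asyn one honest block fix all}; the delicate point is simply choosing $k$ large enough that the $(2T+1)\eta'$ additive term in the exponent is dominated by the logarithmic term, and then checking that the constant $5$ in \eqref{def: asyn r thm permanent} suffices.
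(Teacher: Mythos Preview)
Your proposal is correct and follows essentially the same route as the paper: choose $k$ so that $\delta_k\le\epsilon/2$, take $s$ to match the threshold \eqref{def: a r 8.1}, apply Lemmas \ref{lemma: asyn get one honest block} and \ref{lemma: asyn one honest block fix all} with a union bound, and then verify $s\le r$ to invoke Lemma \ref{lemma: permanent ever after}. Your choice $k=\lceil \tfrac{2q}{\eta'}(\log(10m\eta'^{-2}/\epsilon)+(2T+1)\eta')\rceil$ is in fact the one that makes the final chain of inequalities go through cleanly; the bookkeeping to absorb the ceilings and the additive $+1$ into the slack between $4$ and $5$ works exactly as in the synchronous case.
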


\begin{proof}
Let
\begin{align}
k = \left\lceil \frac{2}{\eta'}\log \frac{10m\eta'^{-2}}{\epsilon} + 2(2T+1)q\right\rceil, \label{a 11.01}
\end{align}
and
\begin{align}
    s = \left\lceil \frac{2(k+1)}{(1-\frac{\xi}{10})\xi q(1-q)^T}+1  \right\rceil.\label{a 11.02}
\end{align}
Let $\delta_k$ be as defined as in \eqref{def: asyn delta k}.

According to Lemma \ref{lemma: asyn get one honest block} and \eqref{a 11.02}, by round $R_l + s$, all honest voter blockchains have an honest block which is mined after $R_l$ and is at least $k$ deep with probability at least {\color{black}$1-\delta_k$}. Under this event, according  to Lemma \ref{lemma: asyn one honest block fix all} ({\color{black}evidently, $R_l+s > \frac{k}{2q}$} and $s>\frac{2}{q}$),  the leader sequence up to level $l$ is $\delta_k$-permanent after round $R_l+s$. Therefore, {\color{black}the leader sequence up to level $l$} is $2\delta_k$-permanent after round $R_l+s$. Note that
\begin{align}
\delta_k & = {\color{black}5m\eta'^{-2}}e^{-\eta' \frac{k}{2}+\eta'(2T+1)} \\
& \le {\color{black}5m\eta'^{-2}}e^{-\log {\color{black}\frac{12m\eta'^{-2}}{\epsilon}}} \label{a 11.11}\\
& =  \frac{\epsilon}{2}.
\end{align}
{\color{black}the leader sequence up to level $l$} is $\epsilon$-permanent after round $R_l+s$.

From \eqref{a 11.01}, it is easy to verify that $k>10$. As a consequence, we have
\begin{align}
 s  & <   \frac{2(k+1)}{(1-\frac{\xi}{10})\xi q(1-q)^T}+2  \label{a 11.05}\\
 & = \frac{2k+2+2(1-\frac{\xi}{10})\xi q(1-q)^T}{(1-\frac{\xi}{10})\xi q(1-q)^T} \\
 & < \frac{\frac{5}{2}(k-1)}{(1-\frac{\xi}{10})\xi q(1-q)^T} \label{a 11.051}\\
  & < \frac{5}{(1-\frac{\xi}{10})\xi \eta'(1-q)^T}\left(\log\frac{10m\eta'^{-2}}{\epsilon} + \eta'(2T+1) \right)
  \label{a 11.065}\\
 & \le r, \label{a 11.066}
\end{align}
where \eqref{a 11.05} is due to \eqref{a 11.01},
{\color{black}\eqref{a 11.051} is due to $k>10$, \eqref{a 11.065} is due to \eqref{a 11.01},
and \eqref{a 11.066} is by \eqref{def: asyn r thm permanent}}.

Since $r>s$,  is $\epsilon$-permanent after round $R_l+r$ by Lemma \ref{lemma: permanent ever after}.
\end{proof}

\begin{theorem} \label{thm: asyn blockchain quality for proposer block}
(Blockchain quality theorem for proposer block for bounded-delay model)
Let $r,s,k$ be integers satisfying $T \le s < r-\frac{2}{q}$ and $k\ge 2q(r-s)$. Suppose an honest proposer blockchain has more than $k$ leader blocks {\color{black}by round $r$}. Under event $J_0[s, r-T]$, by round $r$, at least $\epp{2}$ fraction of the last $k$ leader blocks {\color{black}of the {\color{black}proposer} blockchain} are honest.
\end{theorem}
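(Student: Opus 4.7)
The plan is to mirror the proof of Theorem \ref{thm: blockchain quality for proposer block} line by line, substituting the bounded-delay analogs of the three key building blocks: Lemma \ref{lemma: asyn pre blockchain growth j} replaces Lemma \ref{lemma: blockchain j growth} for proposer-blockchain growth, Lemma \ref{lemma: asyn blockchain growth j} replaces Lemma \ref{lemma: blockchain j growth2 at least rounds} to locate $k$-deep blocks in the past, and inequality \eqref{equ: asyn Z<(1-delta/2)X'} (which bounds $Z$ in terms of $X'$) replaces inequality \eqref{equ: Z_j < X_j2}. The typical event $J_0[s, r-T]$ plays the role that $G_0[s,r]$ played in the synchronous argument.

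Concretely, I would let $l$ be the highest level of the proposer blockchain by round $r$ (so $l > k$), let $l^*$ be the largest level in $\{0, 1, \ldots, l-k\}$ on which the first proposer block is honest (defaulting to the genesis level $0$ if no larger level qualifies), and let $r^*$ be the round in which that first block at level $l^*$ is mined (with $r^* = 0$ in the genesis case). Because this block is at least $k$-deep by round $r$, Lemma \ref{lemma: asyn blockchain growth j} together with $J_0[s, r-T] \subseteq J_0[s, r]$ forces $r^* < s$, so $F_0[r^*+1, r-T]$ is available under $J_0[s, r-T]$. On every level in $\{l^*+1, \ldots, l-k\}$ the first proposer block is adversarial by the choice of $l^*$, and on each of the last $k$ levels whose leader is not honest there is at least one additional adversarial proposer block; hence, letting $x$ denote the number of honest leader blocks among the last $k$ levels, we have $Z_0[r^*+1, r] \ge l - l^* - x$. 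Combining \eqref{equ: asyn Z<(1-delta/2)X'} with the growth bound $l - l^* \ge X'_0[r^*+1, r-T+1] \ge X'_0[r^*+1, r-T]$ from Lemma \ref{lemma: asyn pre blockchain growth j} gives
\begin{align*}
x \;\ge\; l - l^* - Z_0[r^*+1, r] \;>\; l - l^* - \left(1 - \tfrac{\xi}{2}\right) X'_0[r^*+1, r-T] \;\ge\; \tfrac{\xi}{2}(l - l^*) \;\ge\; \tfrac{\xi}{2} k,
\end{align*}
which is the desired conclusion.

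The main obstacle is bookkeeping around the $T$-round shift: $Z_0$ is bounded on $[r^*+1, r]$ while $X'_0$ naturally lives on $[r^*+1, r-T]$, and the growth lemma delivers $X'_0[r^*+1, r-T+1]$, so I need the trivial monotonicity $X'_0[r^*+1, r-T] \le X'_0[r^*+1, r-T+1]$ to line things up. I also need to handle the edge case $r^*=0$ with $T > 1$, in which $F_0$ is only defined for intervals starting at round $\ge T$; the natural fix is to use $F_0[\max(r^*+1, T), r-T]$ and to observe that the at most $T-1$ adversarial proposer blocks possibly mined during rounds $\{1, \ldots, T-1\}$ can be absorbed into the slack of \eqref{equ: asyn Z<(1-delta/2)X'} without changing the final constant $\xi/2$. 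Once these index-matching details are settled, no genuinely new idea beyond the synchronous argument is required.
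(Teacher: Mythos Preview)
Your proposal is correct and follows essentially the same argument as the paper's own proof: the paper also defines $l$, $l^*$, $r^*$ identically, invokes Lemma~\ref{lemma: asyn blockchain growth j} to get $r^*<s$, uses the counting bound $Z_0[r^*+1,r]\ge l-l^*-x$, and then chains \eqref{equ: asyn Z<(1-delta/2)X'} with Lemma~\ref{lemma: asyn pre blockchain growth j} to obtain $x>\tfrac{\xi}{2}k$. Your bookkeeping about the $T$-shift (the monotonicity $X'_0[r^*+1,r-T]\le X'_0[r^*+1,r-T+1]$) and the edge case $r^*=0$ with $T>1$ are details that the paper itself glosses over, so you have in fact been slightly more careful than the original.
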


\begin{proof}
{\color{black}Let $l$ denote the highest level} of the proposer blockchain by round $r$. {\color{black}Evidently $l>k$.}
Let $l^*$ be the {\color{black}highest level} before $l-k+1$ {\color{black}on which} the first proposer block is honest. {\color{black}$l^*$ may be as high as $l-k$ and as low as $0$, which corresponds to the genesis block}.
Let $r^*$ be the round when the first block on level $l^*$ is mined. If this block is the genesis block, {\color{black}then} $r^* = 0$. If $r^*>0$, since blocks on level $l^*$ are more than $k$ blocks away from the last level by round $r$, {\color{black}we have} $r^* < s$ according to Lemma \ref{lemma: asyn blockchain  growth j}. In any cases, we have $[s,r]\subset [r^*+1,r]$.

{\color{black}Since the first proposer block on every level within $\{l^*+1, \dots, l-k\}$ is adversarial,}
from level $l^*+1$ to level $l$, there must be at least one adversarial block on every level except (possibly) on the levels between $l-k+1$ and $l$ where the leading block is honest. Let $x$ be the number of honest leader blocks on levels $\{l-k+1,\ldots, l\}$. Then during rounds $\{r^*+1,\ldots, r-1\}$, the total number of adversarial {\color{black}proposer} blocks is no fewer than $l-l^*-x$, i.e.,
\begin{align}
Z_0[r^*+1, r] & \ge l-l^* - x. \label{a 10.-2}
\end{align}
Under $J_0[s,r-T]$, $E_0[r^*+1, r-T]$ also occurs. Thus,
\begin{align}
x & \ge l - l^* - Z_0[r^*+1, r]\label{a 10.-1}\\
& > l-l^* - (1-\epp{2})X'_0[r^*+1, r-T] \label{a 10.0}\\
& \ge \epp{2} (l-l^*) \label{a 10.3}\\
& \ge \epp{2} k, \label{a 10.4}
\end{align}
where \eqref{a 10.0} is due to \eqref{equ: asyn Z<(1-delta/2)X'},  \eqref{a 10.3} is due to Lemma {\color{black}\ref{lemma: asyn pre blockchain growth j}}, and {\color{black}\eqref{a 10.4}} is due to $l-l^* \ge k$.
To sum up,  {\color{black}we have $x>\epp{2} k$ and the proof is complete}.
\end{proof}

\begin{theorem} \label{thm: asyn transaction becomes permanent}
For every $\epsilon >0$ and every integer
\begin{align} \label{def: asyn r permanent tx}
r \ge  \frac{25}{(1-\epp{10})^2\xi^2\eta'(1-q)^{2T}} \left(\log \frac{20m\eta'^{-2}}{\epsilon}+\eta'(2T+1)\right),
\end{align}
an honest transaction that enters into a block is $\epsilon$-permanent {\color{black}$r$ rounds after the block is broadcast}.
\end{theorem}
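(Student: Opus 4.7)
The plan is to mirror the structure of the proof of Theorem~\ref{thm: transaction becomes permanent} for the synchronous model, replacing each ingredient by its bounded-delay counterpart established earlier in this subsection. Let $R$ denote the round during which the block containing the honest transaction $\text{tx}$ is broadcast. I introduce four parameters tuned to the bounded-delay growth rate $(1-\epp{10})(1-q)^{T}q$ instead of $(1-\ep)q$, namely
\begin{align}
\ell &= \left\lceil (1-\epp{10})(1-q)^{T}qr \right\rceil, \\
k &= \left\lfloor \tfrac{\xi}{2}\ell \right\rfloor, \\
w &= \left\lfloor \tfrac{\ell}{2q}\right\rfloor, \\
u &= \left\lfloor \tfrac{k}{2q}\right\rfloor.
\end{align}
The first is the minimum proposer-blockchain growth guaranteed by Theorem~\ref{thm: asyn blockchain growth j}, the second is the number of honest leader blocks among the last $\ell$ guaranteed by Theorem~\ref{thm: asyn blockchain quality for proposer block}, and the last two are the two look-back windows needed to invoke blockchain-quality and blockchain-growth statements respectively.

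The core event will be
\begin{align}
J := J_0[R+r-u, R+r-T] \cap J_0[R+r-w, R+r-T] \cap J_0[R, R+r-T].
\end{align}
Under $J_0[R,R+r-T]$, Theorem~\ref{thm: asyn blockchain growth j} gives that by round $R+r$ the proposer blockchain has gained at least $\ell$ leader blocks since round $R$. Under $J_0[R+r-w, R+r-T]$, Theorem~\ref{thm: asyn blockchain quality for proposer block} gives that at least $\epp{2}$ of the last $\ell$ leader blocks are honest, hence there are at least $k$ honest leader blocks among them, all mined after round $R$. Under $J_0[R+r-u, R+r-T]$, Lemma~\ref{lemma: asyn blockchain growth j} ensures that the deepest of these $k$ honest leader blocks was mined at round at least $\frac{k}{2q}$ before $R+r$. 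Let $B^*$ denote that deepest honest leader block and $l^*$ its level; by construction $R_{l^*}\ge R$ and the number of rounds between $R_{l^*}$ and $R+r$ is at least the quantity appearing in the hypothesis of Theorem~\ref{thm: aysn permanent leader block} with error level $\epsilon/2$, which I will verify by chaining the elementary inequalities $k\ge \tfrac{\xi}{2}(1-\epp{10})(1-q)^{T}qr -1$ and the definition of $r$ in \eqref{def: asyn r permanent tx}. Applying Theorem~\ref{thm: aysn permanent leader block} then yields that $\textbf{LedSeq}_{l^*}(R+r)$ is $\epsilon/2$-permanent after round $R+r$. Since $B^*$ was mined no later than round $R+r$, and the honest block containing $\text{tx}$ was broadcast by round $R \le R_{l^*}$, Lemma~\ref{lemma: honest leader block include blocks} applied at level $l^*$ guarantees that the ledger induced by $\textbf{LedSeq}_{l^*}(R+r)$ already contains $\text{tx}$; permanence of this leader prefix thus implies permanence of $\text{tx}$.

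It remains to bound $P(J^c)$. Using the ordering $u<w<r$, the smallest interval dominates, and a union bound together with Lemma~\ref{lemma: prob of event Jj} gives $P(J)>1-5\eta'^{-2}e^{-\eta'(u-2T)}$. Substituting the lower bound $u\ge \frac{k}{2q}-1\ge \frac{\xi}{4q}\ell -1$ and the definition of $\ell$, the choice of $r$ in \eqref{def: asyn r permanent tx} will be seen to make this probability at least $1-\epsilon/2$. A final union bound combines the $\epsilon/2$-permanence of $\textbf{LedSeq}_{l^*}(R+r)$ with the $\epsilon/2$ failure probability of $J$ to give $\epsilon$-permanence of $\text{tx}$ after round $R+r$.

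The main obstacle is the bookkeeping for the constants: the bounded-delay growth rate carries an extra factor $(1-q)^{T}$ and the various intervals must be padded by $T$ at the endpoints to remain valid under the bounded-delay typical events, which is what produces the $(1-q)^{2T}$ and the $\eta'(2T+1)$ additive term in \eqref{def: asyn r permanent tx}. Verifying that the definition of $r$ simultaneously dominates (i) the hypothesis $r>\frac{2}{q}$ implicit in all of $J_0[\cdot,\cdot]$ being well-defined, (ii) the hypothesis of Theorem~\ref{thm: aysn permanent leader block} applied with parameter $\epsilon/2$, and (iii) the probability bound $P(J)>1-\epsilon/2$, is a routine but bookkeeping-heavy calculation strictly parallel to \eqref{14.001}--\eqref{14.006} and \eqref{13.-1}--\eqref{13.4}.
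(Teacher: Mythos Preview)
Your proposal is correct and follows essentially the same approach as the paper: the same parameters $\ell,k,w,u$, the same event $J=J_0[R+r-u,R+r-T]\cap J_0[R+r-w,R+r-T]\cap J_0[R,R+r-T]$, and the same sequence of invocations of Theorem~\ref{thm: asyn blockchain growth j}, Theorem~\ref{thm: asyn blockchain quality for proposer block}, Lemma~\ref{lemma: asyn blockchain growth j}, Theorem~\ref{thm: aysn permanent leader block}, and Lemma~\ref{lemma: honest leader block include blocks}. The only cosmetic difference is your slightly more conservative exponent $e^{-\eta'(u-2T)}$ in bounding $P(J)$, which is harmless.
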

\begin{proof}
Let
\begin{align}
\ell & = {\color{black} \left\lceil (1-\epp{10})q(1-q)^Tr \right\rceil}\label{a 12.91} \\
k & = {\color{black}\left\lfloor \frac{\xi}{2}\ell \right\rfloor} \label{a 12.901}\\
w & = \left\lfloor \frac{\ell}{2q} \right\rfloor \label{a 12.92}\\
u & =  \left\lfloor \frac{k}{2q} \right\rfloor. \label{a 12.94}
\end{align}
{\color{black}
Let $R$ be the round during which the block including the honest transaction is broadcast.}
Define
\begin{align}
  J = J_0[R+r-u, R+r-T]  \cap J_0[R+r-w, R+r-T] \cap J_0[R, R+r-T].
\end{align}

Note that 1) According to Theorem \ref{thm: asyn blockchain growth j} and \eqref{a 12.91}, under $G_0[R,R+r-T]$, the proposer blockchain grows by at least $\ell$ leader blocks {\color{black}during rounds $\{R,\ldots, R+r\}$}. 2) According to Theorem \ref{thm: asyn blockchain quality for proposer block}, under event $G_0[R+r-w, R+r-T]$, by round $R+r$ the last $\ell$ leader blocks includes at least $\epp{2}$ fraction of honest ones. Since $k \le \epp{2}\ell$, at least $k$ out of the last $\ell$ leader blocks are honest. 3) According to Lemma \ref{lemma: asyn blockchain growth j}, under event
$G_0[R+r-u, R+r-T]$, the deepest one of these $k$ honest leader blocks is mined at least $\frac{k}{2q}$ rounds {\color{black}before} round $R+r$. 4) We have
\begin{align}
    \frac{k}{2q} & \ge \frac{1}{2q}\left\lfloor \frac{\xi}{2}\ell \right\rfloor \label{a 14.001}\\
    & \ge  \frac{1}{2q} \left\lfloor \frac{\xi}{2}(1-\epp{10})q(1-q)^{T}r \right\rfloor \label{a 14.002}\\
    & \ge  \frac{1}{2q} \left\lfloor \frac{25}{2(1-\epp{10})\xi\eta'(1-q)^{T}} \left(\log \frac{20m\eta'^{-2}}{\epsilon} + \eta'(2T+1) \right)
    \right\rfloor \label{a 14.003}\\
    & > \frac{1}{2q} \left(
 \frac{25}{2(1-\epp{10})\xi\eta'(1-q)^{T}} \left(\log \frac{20m\eta'^{-2}}{\epsilon} + \eta'(2T+1) \right)
    -1 \right) \label{a 14.004}\\
    & > \frac{1}{2q} \left(
    \frac{10}{(1-\epp{10})\xi\eta'(1-q)^{T}} \left(\log \frac{20m\eta'^{-2}}{\epsilon} + \eta'(2T+1) \right)
    \right) \label{a 14.005}\\
    & = \frac{5}{(1-\epp{10})\xi\eta'(1-q)^{T}} \left(\log \frac{10m\eta'^{-2}}{\frac{\epsilon}{2}} + \eta'(2T+1) \right)  \label{a 14.006},
\end{align}
where \eqref{a 14.001} is due to {\color{black}\eqref{a 12.901}}, \eqref{a 14.002} is due to {\color{black}\eqref{a 12.91}}, and \eqref{a 14.005} is obvious due to $\xi \in (0,1]$.
According to Theorem \ref{thm: aysn permanent leader block} and \eqref{a 14.006},  the deepest honest leader block is $\frac{\epsilon}{2}$-permanent after round $R+r$  under event $J$.
Next, we will lower bound probability of $J$.
Note that
\begin{align}
u   & \le \frac{k}{2q} \\
    & \le \frac{\xi \ell}{4q} \label{a 15.00}\\
    & < \frac{\ell}{2q} - 1 \label{a 15.01} \\
    & < w, \label{a 15.016}
\end{align}
where \eqref{a 15.00} is due to \eqref{a 12.901}, \eqref{a 15.01} is due to $q\le \ep$, and \eqref{a 15.016} is due to \eqref{a 12.92}. Also,
\begin{align}
w & \le \frac{\ell}{2q} \label{a 15.0171} \\
& \le  {\color{black}r}, \label{a 15.025}
\end{align}
where \eqref{a 15.0171} is due to \eqref{a 12.92} and
\eqref{a 15.025} is due to {\color{black}\eqref{a 12.91}}.
We have $u < w < s$. According to definition, $G_0[R+r-u, R+r-T] \subset  G_0[R+r-w, R+r-T] \subset G_0[R,R+r-T]$. Then,
\begin{align}
P(J) = & P \left( J_0[R+r-u, R+r-T]\right)  \\
> & 1- {\color{black}5\eta'^{-2}}e^{- \eta' u} \label{a 13.-1} \\
> & 1- {\color{black}5\eta^{-2}}e^{- \eta' (\frac{k}{2q}-1)} \label{a 13.0}\\
> & 1- {\color{black}5\eta'^{-2}}e^{-\log \frac{10\eta'^{-2}}{\epsilon}}\label{a 13.1} \\
= & 1- \frac{\epsilon}{2},
\end{align}
where \eqref{a 13.-1} is due to Lemma \ref{lemma: prob of event Jj},
\eqref{a 13.0} is due to \eqref{a 12.94},
\eqref{a 13.1} is due to \eqref{a 14.006}.
According to the union rule, the deepest honest leader block is $\epsilon$-permanent after round $R+r$.
{\color{black}According to Lemma \ref{lemma: honest leader block include blocks}, the honest transaction will become a $\epsilon$-permanent transaction after round $R+r$.}
\end{proof}

\section{Conclusion}
In this paper, we have analyzed
the bitcoin backbone protocol and the Prism backbone protocol using more general models than previously seen in the literature.
In particular, we allow the blockchains to have unlimited lifespan and allow the block propagation delays to be arbitrary but bounded.
Under 
the new setting,
we rigorously establish a blockchain growth property, 
a blockchain quality property, 
and a common prefix property 
for the bitcoin backbone protocol.
Under this framework, we have also proved a blockchain growth property and a blockchain quality property of the leader sequence in the Prism protocol.
We have also shown that the leader sequence is 
permanent with high probability after sufficient amount of wait time.
As a consequnce, every honest transaction will eventually enter the final ledger and become permanent with probability higher than $1-\epsilon$
after a confirmation time proportional to security parameter $\log\frac{1}{\epsilon}$.
This paper provide explicit bounds for the bitcoin and the Prism backbone protocols, which furthers understanding of both protocols and provides practical guidance to public transaction ledger protocol design.

\bibliographystyle{ieeetr}
\bibliography{ref}

\end{document}